%% LyX 2.3.6.1 created this file.  For more info, see http://www.lyx.org/.
%% Do not edit unless you really know what you are doing.
\documentclass[12pt,reqno]{amsart}
\usepackage[latin9]{inputenc}
\usepackage{geometry}
\geometry{verbose,tmargin=2cm,bmargin=2cm,lmargin=2.5cm,rmargin=2.5cm}
\usepackage{units}
\usepackage{enumitem}
\usepackage{amscd}
\usepackage{amsbsy}
\usepackage{amstext}
\usepackage{amsthm}
\usepackage{amssymb}
\usepackage{graphicx}
\usepackage[unicode=true,
 bookmarks=false,
 breaklinks=false,pdfborder={0 0 1},backref=section,colorlinks=false]
 {hyperref}

\makeatletter
%%%%%%%%%%%%%%%%%%%%%%%%%%%%%% Textclass specific LaTeX commands.
\numberwithin{equation}{section}
\numberwithin{figure}{section}
      % auxiliary length 
\theoremstyle{plain}
\newtheorem{thm}{\protect\theoremname}[section]
\theoremstyle{definition}
\newtheorem{example}[thm]{\protect\examplename}
\theoremstyle{definition}
\newtheorem{defn}[thm]{\protect\definitionname}
\theoremstyle{remark}
\newtheorem{rem}[thm]{\protect\remarkname}
\theoremstyle{plain}
\newtheorem{cor}[thm]{\protect\corollaryname}
\theoremstyle{plain}
\newtheorem{prop}[thm]{\protect\propositionname}
\theoremstyle{plain}
\newtheorem{lem}[thm]{\protect\lemmaname}
\theoremstyle{remark}
\newtheorem*{rem*}{\protect\remarkname}

%%%%%%%%%%%%%%%%%%%%%%%%%%%%%% User specified LaTeX commands.

%\usepackage[pdftex]{graphicx}
%\usepackage{xcolor}

%\usepackage{enumerate}
%\usepackage{subfigure}

%\usepackage[pagebackref=true]{hyperref}
% This shows the labels of theorems - cancel at the end
%\usepackage{showlabels}

\usepackage{amscd}
\usepackage{tikz}

\usetikzlibrary{arrows, automata, positioning, quotes}
\usetikzlibrary{external}
\usetikzlibrary{calc}

\DeclareMathOperator{\spn}{span}

\DeclareMathOperator{\spec}{spec}
\DeclareMathOperator{\diag}{diag}
\DeclareMathOperator{\Image}{Image}

\newcommand{\Z}{\mathbb{Z}}
\newcommand{\R}{\mathbb{R}}
\newcommand{\N}{\mathbb{N}}

\renewcommand{\kappa}{\varkappa}

\newcommand{\V}{\mathcal{V}}
\newcommand{\e}{\mathbf{e}}

\renewcommand{\vec}{\mathrm{vec}}
\newcommand{\Op}{\mathrm{Op}}
\newcommand{\reg}{\mathrm{reg}}

\theoremstyle{plain}

\numberwithin{equation}{section}

\author{Ram Band}
\address{Department of Mathematics, Technion -- Israel Institute of Technology, Haifa 32000, Israel and Institute of Mathematics, University of Potsdam, Potsdam, Germany}
\author{Gregory Berkolaiko}
\address{Department of Mathematics, Texas A\&M University, College
  Station, TX 77843-3368, USA}
\author{Christopher H. Joyner}
\address{School of Mathematical Sciences, Queen Mary University of London, London, E1 4NS, UK}
\author{Wen Liu}
\address{Department of mathematics, Lamar University, Beaumont, TX 77710, USA}

\makeatother

\providecommand{\corollaryname}{Corollary}
\providecommand{\definitionname}{Definition}
\providecommand{\examplename}{Example}
\providecommand{\lemmaname}{Lemma}
\providecommand{\propositionname}{Proposition}
\providecommand{\remarkname}{Remark}
\providecommand{\theoremname}{Theorem}

\begin{document}
\global\long\def\C{\mathbb{C}}%

\global\long\def\N{\mathbb{N}}%

\global\long\def\Id{\mathbb{\boldsymbol{I}}}%

\global\long\def\triv{\mathrm{V_{triv}}}%

\global\long\def\Hom{\mathrm{Hom}_{G}}%

\global\long\def\term#1{\emph{#1}}%

\global\long\def\o#1{\emph{\ensuremath{\overline{#1}}}}%

\global\long\def\Op{\textrm{T}}%

\global\long\def\GL{\textrm{GL}}%

\global\long\def\Tr{\textrm{Tr}}%

\global\long\def\Dom{\textrm{Dom}}%

\global\long\def\Rank{\textrm{Rank}}%

\global\long\def\diag{\textrm{diag}}%

\global\long\def\K{\mathcal{K}_{G}}%

\global\long\def\Ki{\mathcal{K}_{i}}%

\global\long\def\P{\mathbf{P}_{\rho,\pi}}%

\global\long\def\Pb{\mathbf{P}}%

\global\long\def\PPi{\mathbf{P}_{i}}%

\global\long\def\Pt{\mathcal{P}}%

\global\long\def\ei{\mathbf{e}_{i}}%

\global\long\def\ej{\mathbf{e}_{j}}%

\global\long\def\ones{\mathbf{1}}%

\global\long\def\Vr{\textrm{V}_{\rho}}%

\global\long\def\Vp{\textrm{V}_{\pi}}%

\global\long\def\Vs{\textrm{V}_{\sigma}}%

\global\long\def\Cp{\C^{p}}%

\global\long\def\sgn{\textrm{sgn}}%

\global\long\def\reg{\textrm{reg}}%

\global\long\def\rev{\mathfrak{i}}%

\global\long\def\tr{\textrm{Trace}}%

\global\long\def\vec{\textrm{vec}}%

\global\long\def\vect{\textrm{vec}_{T}}%

\global\long\def\unvec{\textrm{unvec}}%

\global\long\def\spn{\textrm{span}}%

\global\long\def\Image{\textrm{Image}}%

\global\long\def\e{\mathbf{e}}%

\global\long\def\Pt{\mathcal{P}}%

\global\long\def\vect{\mathrm{vec}}%

\global\long\def\cc#1{\emph{\ensuremath{\overline{#1}}}}%

\global\long\def\set#1#2{\left\{  #1~:~#2\right\}  }%

\global\long\def\Espace#1#2{\mathbf{E}_{#2}^{#1}}%

\global\long\def\Erep#1#2{\epsilon_{#2}^{#1}}%

\global\long\def\Ind#1#2#3{\textrm{\ensuremath{\mathrm{Ind}}}_{#1}^{#2}\left(#3\right)}%

\global\long\def\Res#1#2#3{\mathrm{Res}_{#1}^{#2}\left(#3\right)}%

\global\long\def\tA{\widetilde{A}}%

\global\long\def\tB{\widetilde{B}}%

\global\long\def\rmi{\mathrm{i}}%

\global\long\def\K{\mathcal{K}_{G}}%

\global\long\def\D{\mathcal{D}}%

\global\long\def\E{\mathcal{E}}%

\global\long\def\EE{\overleftrightarrow{\E}}%

\global\long\def\V{\mathcal{V}}%

\global\long\def\ue{\mathrm{e}}%

\global\long\def\ui{\mathrm{i}}%

\global\long\def\uj{\mathrm{j}}%

\global\long\def\uk{\mathrm{k}}%

\global\long\def\ii{\mathfrak{i}}%

\global\long\def\jj{\mathfrak{j}}%

\global\long\def\kk{\mathfrak{k}}%

\global\long\def\ud{\mathrm{d}}%

\global\long\def\spec{\mathrm{Spec}}%

\title{Quotients of graph operators by symmetry representations}
\begin{abstract}
A finite dimensional operator that commutes with some symmetry group
admits quotient operators, which are determined by the choice of associated
representation. Taking the quotient isolates the part of the spectrum
supporting the chosen representation and reduces the complexity of
the spectral problem. Yet, such a quotient operator is not uniquely
defined. Here we present a computationally simple way of choosing
a special basis for the space of intertwiners, allowing us to construct
a quotient that reflects the structure of the original operator. This
quotient construction generalizes previous definitions for discrete
graphs, which either dealt with restricted group actions or only with
the trivial representation.

We also extend the method to quantum graphs, which simplifies previous
constructions within this context, answers an open question regarding
self-adjointness and offers alternative viewpoints in terms of a scattering
approach. Applications to isospectrality are discussed, together with
numerous examples and comparisons with previous results.
\end{abstract}

\maketitle

\section{Introduction}

The analysis of linear operators possessing a certain symmetry dates
backs to the seminal works of Schur and Weyl \cite{Schur-1927,Weyl-1946}.
More specifically, given a linear representation $\pi$ of a group
$G$, they studied the properties of the algebra of matrices $M$
satisfying $\pi(g)M=M\pi(g)$ for all $g\in G$. Their answer ---
that by a suitable change of basis $M$ can be decomposed into blocks
associated with the irreducible representations of $\pi$ --- is
a cornerstone of modern representation theory.

%The applications are of course
%too numerous to list, but one relevant example is the classification
%of classical random matrix ensembles.  This classification originated
%from the ideas of Wigner and Dyson \cite{Wigner-1959,Dyson-1962}, and
%more recently Altland and Zirnbauer \cite{Altland-1997,Heinzner-2005},
%and sought to generalize the ideas of Schur and Weyl to antiunitary
%symmetries.  Specifically, one assumes the symmetries to be fixed and
%then imposes a uniform measure (in the appropriate sense) on the
%possible matrices $M$; one subsequently finds the spectral properties
%very accurately reflect those of a typical complex systems with the
%same symmetries
%\cite{Haake-2010,akemann2011oxford}. \GB{Disproportionately much
%  detail about RMT}

For a given particular system, a quantum Hamiltonian for example,
representation theory is often used to classify the states and thus
deduce the properties of the full system. In order to isolate the
states of a particular representation one must `desymmetrize' in some
appropriate manner, or, equivalently, find the corresponding block
of the reduced linear operator. This philosophy underlies, for instance,
the Floquet--Bloch--Gelfand transform of periodic operators \cite[Sec.~XIII.16]{ReedSimon_v4},
\cite{Kuc_bams16}, where the symmetry group $\Z^{d}$ is abelian
and thus has particularly simple irreducible representations. With
regards to discrete symmetries, this philosophy has been utilized,
for example, to construct trace formulae \cite{Robbins-1989,Lauritzen-1991}
and analyze the spectral statistics \cite{Keating-1997,Joyner-2012}
of quantum chaotic systems with symmetries. A more systematic construction
procedure applicable to unbounded self-adjoint operators on metric
graphs and manifolds was demonstrated by Band, Parzanchevski and Ben-Shach
\cite{Ban09,Par10}, who showed how one may perform this desymmetrization
for arbitrary representations of a discrete group of symmetries. The
resulting desymmetrized system was termed a `quotient graph' or `quotient
manifold'. In particular, this notion generalized Sunada's method
\cite{Sun85} for constructing pairs of isospectral manifolds, famously
used by Gordon, Webb and Wolpert \cite{Gor92b,Gor92a} to answer negatively
Marc Kac's question `Can one hear the shape of a drum?' \cite{Kac66}.
Joyner, Müller and Sieber \cite{Joy14} utilized the quotient graph
technique to propose a spin-less quantum system whose spectrum exhibits
the Gaussian symplectic ensemble statistics from random matrix theory\footnote{See also Example \ref{Ex: The quaternion     group}.};
this effect was later demonstrated experimentally \cite{Rehemanjiang-2016}.
The quotient construction has also been used to prove the existence
of conical points in the dispersion relation of graphene by Berkolaiko
and Comech \cite{BerCom_jst18}.

The main result of this work is a compact and transparent formula
(Definition~\ref{def: matrix_form_of_quotient}) that extracts the
`quotient of an operator by a representation of its symmetry group',
or \emph{quotient operator} for short, in the context of finite dimensional
operators. While the topic is a classical one, the formula appears
to be new; it generalizes previous works by Brooks \cite{Brooks_aif99,Brooks_cm99},
Chung and Sternberg \cite{Chu92}, and Halbeisen and Hungerbühler
\cite{Hal99} that either dealt with restricted group actions or only
with the trivial representation.

We chose to state our main result, equation~\eqref{eq:Quotient_formula_explicit},
as a definition and then prove, in Sections~\ref{sec: fundamental property of   quotient}-\ref{SubSec: Proof of fundamental property},
that it has the properties one expects from a desymmetrized part of
an operator that corresponds to a particular representation. While
formulated for general matrices, our results are geared towards spectral
analysis of graphs in that the connectivity structure of the original
operator is reflected in the quotient operator. Naturally, this is
the effect of a particular choice of basis we make.

To illustrate the types of results one obtains by using the quotient
construction, we present some examples.
\begin{example}
\label{ex:tetra_intro} 
\begin{figure}
\centering \includegraphics{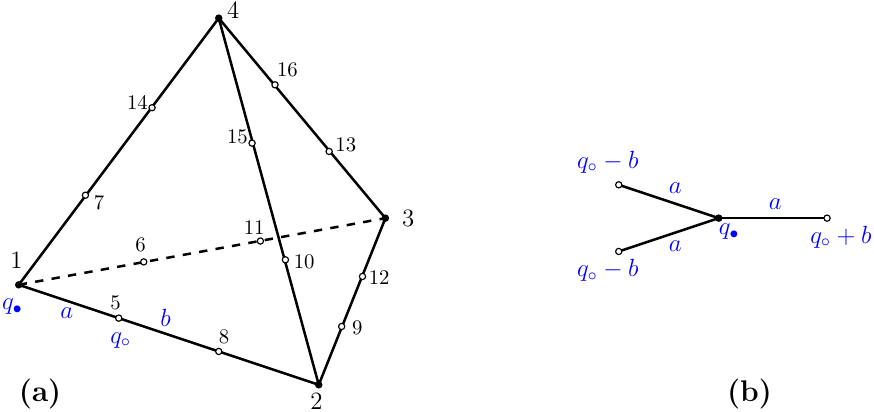} \caption{(a) A graph with symmetry group $S_{4}$ acting as permutations of
the vertices $1,2,3,4$ and the corresponding permutation of other
vertices. The graph is associated with the operator $\protect\Op$
in (\ref{eq:tetraH}).\protect \\
(b) A graph which is associated to the quotient operator $\protect\Op_{\sigma}$
given by (\ref{eq:Hsigma_intro}). This quotient operator captures
all the $\protect\Op$ eigenvalues of multiplicity three.}
\label{fig:discrete_tetra}
\end{figure}

Consider the graph on 16 vertices shown in Figure \ref{fig:discrete_tetra}(a)
and a corresponding family of operators $\Op:\C^{16}\to\C^{16}$ that
are invariant under the action of the symmetry group $G=S_{4}$ induced
by permuting the vertices $1$, $2$, $3$ and $4$. Explicitly, grouping
the vertices into two sets, $O_{\bullet}=\{1,\ldots,4$\} and $O_{\circ}=\{5,\ldots,16\}$,
we let 
\begin{equation}
\Op_{ij}=\begin{cases}
a & \mbox{if }i\sim j,\ \ i\in O_{\bullet},j\in O_{\circ}\mbox{ or }i\in O_{\circ},j\in O_{\bullet}\\
b & \mbox{if }i\sim j,\ \ i\in O_{\circ},j\in O_{\circ}\\
q_{\bullet} & \mbox{if }i=j\in O_{\bullet}\\
q_{\circ} & \mbox{if }i=j\in O_{\circ},
\end{cases}\label{eq:tetraH}
\end{equation}
where $\sim$ indicates adjacency of the vertices.

For \emph{every} choice\footnote{Some parameter choices result in a non-Hermitian $\Op$; this does
not affect the validity of our statements.} of the parameters $a,b,q_{\bullet},q_{\circ}\in\C$, there are at
least four triply degenerate eigenvalues in the spectrum of $\Op$
(typically, all other eigenvalues have lower multiplicity). To give
a numerical example, the choice $a=b=-1$, $q_{\bullet}=3$, $q_{\circ}=2$
in \eqref{eq:tetraH} yields the spectrum 
\begin{equation}
\spec(\Op)=\{0,\ \mathbf{0.44},\ \mathbf{0.44},\ \mathbf{0.44},\ 1,\ 1,\ \mathbf{2},\ \mathbf{2},\ \mathbf{2},\ \mathbf{3},\ \mathbf{3},\ \mathbf{3},\ 4,\ \mathbf{4.56},\ \mathbf{4.56},\ \mathbf{4.56}\}.\label{eq:tetra_spec_intro}
\end{equation}
It is well known that the eigenvalue multiplicities arise due to the
symmetry of the operator. More precisely, they can be traced to the
3-dimensional irreducible representations of the group $S_{4}$ by
studying the action of $\Op$ on the space of intertwiners. The current
work offers a simple prescription for extracting these eigenvalues
\emph{in the form of the spectrum of a smaller graph}, shown in this
particular case in Figure \ref{fig:discrete_tetra}(b) and given explicitly
by 
\begin{equation}
\Op_{\sigma}=\begin{pmatrix}q_{\bullet} & a & a & a\\
a & q_{\circ}-b & 0 & 0\\
a & 0 & q_{\circ}-b & 0\\
a & 0 & 0 & q_{\circ}+b
\end{pmatrix}.\label{eq:Hsigma_intro}
\end{equation}

To compare to \eqref{eq:tetra_spec_intro}, substituting $a=b=-1$,
$q_{\bullet}=3$, $q_{\circ}=2$, we get 
\begin{equation}
\spec(\Op_{\sigma})=\{0.44,\ 2,\ 3,\ 4.56\}.\label{eq:tetra_spec_quot_intro}
\end{equation}
The details of the derivation of \eqref{eq:Hsigma_intro} are contained
in Example~\ref{Ex: Reducing points} and \ref{ex:tetrahedrom_all3}.
\end{example}

While the main body of the paper deals with finite-dimensional operators,
in Section~\ref{sec:QuantumGraphs} we show how to use it to compute
quotients of unbounded self-adjoint operators on metric (``quantum'')
graphs, where one has to be careful to take an appropriate quotient
of the operator's domain. Thus we are simplifying and making more
explicit the constuction of \cite{Ban09,Par10}, incidentally proving
that it preserves self-adjointness. Delaying all details to Section~\ref{sec:QuantumGraphs},
we present here an example of a non-trivial quotient.

%%%%%%%%%%%%%%%%%%%%%%%%%%%%%%%%%%%%%%%%%%%%%%%%%%%%%

\begin{example}
\label{ex:tetra_qg_intro}

\begin{figure}
\centering \includegraphics{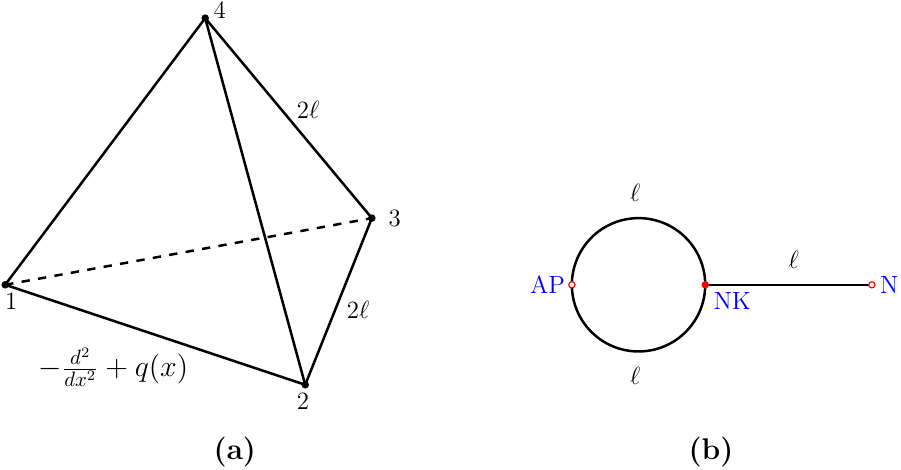} \caption{A quantum graph with symmetry group $S_{4}$ and its quotient that
captures the eigenvalues of multiplicity three.}
\label{fig:discrete_tetra_qg}
\end{figure}

Consider a quantum graph analogue of Example~\ref{ex:tetra_intro},
namely the graph in Fig.~\ref{fig:discrete_tetra_qg}(a). We defer
the rigorous description of the corresponding differential operator
$\Gamma$ to Section~\ref{sec:QuantumGraphs}; here we just mention
that all edges have the same length $2\ell$, all vertices have Neumann-Kirchhoff
(NK or ``standard'') conditions and the potential is assumed to
be symmetric with respect to reflection around the midpoint, $q(2\ell-x)=q(x)$.

The symmetry group here is $S_{4}$, so the presence of eigenvalues
of multiplicity three is not surprising. What is remarkable is that
those eigenvalues are exactly the eigenvalues of the graph shown in
Fig.~\ref{fig:discrete_tetra_qg}(b), where the central vertex has
NK conditions, right vertex has Neumann and the left vertex has anti-periodic
conditions\footnote{Alternatively, we can remove the left vertex and add a magnetic potential
with flux $\pi$ through the loop of the graph.}. As above, the main result of this work is \emph{representing the
action} of the original graph operator on an appropriate invariant
subspace \emph{as a concrete operator on a smaller graph}.
\end{example}

The paper is structured as follows. Our main results are presented
in Section~\ref{sec:quotient_operators}, where we define the quotient
by a compact formula in Definition~\ref{def: matrix_form_of_quotient}
and then, in Theorem~\ref{thm: Quotient fundamental property} and
Propositions~\ref{prop: Algebraic properties} and \ref{Prop:Spectral_property},
establish its various algebraic and spectral properties. Section~\ref{sec:Examples}
provides several examples that illustrate the group-theoretic constructions
used in the proofs of Section~\ref{sec:quotient_operators}. In Section~\ref{sec:QuantumGraphs}
we adapt the quotient operator construction (back) to quantum graphs
and show how this broadens the previous results in \cite{Ban09,Par10}.
Section \ref{Sec: Applications of quot ops} outlines some further
applications of the current results and explains how they extend previous
works on isospectrality \cite{Brooks_aif99,Brooks_cm99,Hal99} and
eigenvalue computation \cite{CveDooSac_spectra_of_graphs_book,Chu92,Chung_spectralgraph}.

At the end, Appendix~\ref{sec: appendix-representation_theory} contains
the necessary background material on representation theory, in particular
on Frobenius isomorphism. Due to our motivation coming from self-adjoint
operators, our emphasis is on the preservation of the Hilbert structure,
which is not usually covered in algebra-themed textbooks. Appendix~\ref{app:qg_quotient}
contains the proofs of the results from Section~\ref{sec:QuantumGraphs}.

%%%%%%%%%%%%%%%%%%%%%%%%%%%%%%%%%%%%%%%%%%%%%%%%%%
%%%%%%%%%%%%%%%%%%%%%%%%%%%%%%%%%%%%%%%%%%%%%%%%%%
%%%%%%%%%%%%%%%%%%%%%%%%%%%%%%%%%%%%%%%%%%%%%%%%%%

\section{Quotient Operators - definitions and main properties}

\label{sec:quotient_operators}

This paper (and in particular the current section) is substantially
based on elements from representation theory. For completness, we
describe the relevant definitions and statements in Appendix \ref{sec: appendix-representation_theory}.

%%%%%%%%%%%%%%%%%%%%%%%%%%%%%%%%%%%%%%%%%%%%%%%%%%

\subsection{Defining the quotient operator\label{sec:def_quotient_operators}}

This subsection defines the main object of this paper --- the quotient
operator. Examples illustrating the construction of the quotient operator
and its properties are collected in Section~\ref{sec:Examples}.
A reader less familiar with the topic is encouraged to consult these
examples while reading this section.
\begin{defn}[symmetric operator]
\label{def:sym_operator} Let $G$ be a finite group and $\Vp$ be
a vector space over $\C$ which is a $G$-module. We say that a linear
operator $\Op:\Vp\to\Vp$ is $G$-symmetric (or that $\Op$ commutes
with $G$) if 
\begin{equation}
\forall\ g\in G,\quad g\,\Op=\Op\,g.\label{eq:symmetric_operator}
\end{equation}
\end{defn}

That $\Vp$ is a $G$-module means that the group $G$ acts on $\Vp$
by linear transformations, i.e. it is a representation of $G$. In
the following, it will be convenient to use the terminology of $G$-modules
and group representations, interchangeably. The reader is referred
to Appendix \ref{sec: appendix-representation_theory} for the corresponding
background and for further details.

We will use the symmetry of the operator $\Op$ to construct a new
operator, which we call the quotient operator. We will restrict ourselves
to the case when the $G$-module $\Vp$ is finite dimensional\footnote{An extension to infinite dimensional spaces $\Vp$ and bounded operators
$T$ should not be difficult; more difficulties arise if $T$ is unbounded.
These difficulties are handled in Section~\ref{sec:QuantumGraphs},
where the quotient construction is described for differential operators
on metric graphs.} and corresponds to some permutation representation $\pi:G\rightarrow GL(\Vp)$.
Namely, we assume that all $g\in G$ act as permutation matrices,
for a suitably chosen basis of $\Vp$. We will refer to this basis
as the standard basis, $\{\ei\}_{i\in\Pt}$, where the corresponding
index set is $\Pt:=\{1,\ldots,p\}$, with $p:=\dim\Vp$. Hence, $\Vp\cong\C[\Pt]$.
In what follows, we use both $\Vp$ and $\C[\Pt]$ interchangeably,
depending on whether we wish to emphasize that it is a $G$-module,
or rather that these are functions on the point set $\Pt$.

The permutation action of $G$ on the standard basis $\{\ei\}$ induces
an action of $G$ on the index set $\Pt$, namely $g\thinspace\ei=\mathbf{e}_{g\thinspace i}$.
We use this action to decompose $\Pt$ into orbits, 
\begin{equation}
O_{i}:=\{j\in\Pt~:~j=g\thinspace i~\textrm{for some }g\in G\},\label{eq:orbit_definition}
\end{equation}
and choose a representative of each orbit. Without loss of generality,
we denote those representatives by $\D=\{1,\ldots,\left|\D\right|\}$
and call $\D$ a \emph{fundamental domain}.

Let $\Vr$ be another finite-dimensional $G$-module and denote $r:=\dim\Vr$.
We assume that $\Vr$ is equipped with an inner product (hence, it
is a Hilbert space) and that the $G$ action preserves this inner
product (equivalently, that the representation $\rho$ is unitary).

We start connecting $\Vr$ with $\Vp$, towards defining the quotient
operator (of $\Op$ with respect to $\Vr$). For each index $i\in\D$
consider the stabilizer group 
\begin{equation}
G_{i}:=\{g\in G:g\thinspace i=i\},\label{eq:def_stabilizer}
\end{equation}
and define the $G_{i}$-invariant subspace of $\Vr$ by 
\begin{equation}
(\Vr)^{G_{i}}:=\set{v\in\Vr}{\forall g\in G_{i},~gv=v},\label{eq:def_Gi_invariant_subspace}
\end{equation}
For each $i\in\D$, denote $d_{i}:=\dim(\Vr)^{G_{i}}$ and pick $\{\varphi_{i}^{(n)}\}_{n=1}^{d_{i}}$
to be an orthonormal basis for $(\Vr)^{G_{i}}$, with respect to the
inner product inherited from $\Vr$. For use in the next definition,
we collect all vectors $\{\varphi_{i}^{(n)}\}_{n=1}^{d_{i}}$ as the
columns of a matrix, which we denote by $\Phi_{i}$. Namely, $\Phi_{i}$
is an $r\times d_{i}$ matrix providing an isometry between $\C^{d_{i}}$
and $(\Vr)^{G_{i}}$. We further denote $d:=\sum_{i\in\D}d_{i}$.
\begin{defn}[Quotient operator]
\label{def: matrix_form_of_quotient} Using the notation introduced
above, we define $\Op_{\rho}$ to be a $d\times d$ matrix, which
is comprised of $\left|\D\right|\times\left|\D\right|$ blocks. The
$(i,j)$-th block is is the following $d_{i}\times d_{j}$ matrix
\begin{align}
\left[\Op_{\rho}\right]_{i,j} & :=\frac{1}{\sqrt{|G_{i}||G_{j}|}}\Phi_{i}^{*}\,\left(\sum_{g\in G}\Op_{i,gj}\,g\right)\,\Phi_{j}.\label{eq:Quotient_formula_explicit}
\end{align}
$\Op_{\rho}$ is called the quotient operator of $\Op$ with respect
to $\Vr$.
\end{defn}

\begin{rem}
The notation of quotient operator includes only the auxiliary $G$-module,
$\Vr$, with respect to which the quotient is constructed. The $G$-module
$\Vp$ is understood from the context, as it is the domain of the
original operator, $T$. The term ``quotient operator'' arises from
the specific case when $\Vr$ is the trivial representation of $G$
(see Corollary \ref{cor:simplifications_of_quotient_formula}). In
this case the domain of $\Op_{\rho}$ is the quotient $\nicefrac{\Vp}{G}$
(in the topological sense). Hence, we adopt the name quotient also
for the more general case (as was already done in the previous works
\cite{Ban09,Par10}).
\end{rem}

Before presenting the properties of the quotient operator, we mention
a few general cases in which formula \eqref{eq:Quotient_formula_explicit}
reduces to a simpler form. These are useful in various applications,
some of which are surveyed in Section \ref{Sec: Applications of quot ops}.
In particular, when $\Op$ is a discrete graph operator, the special
case of the quotient with respect to the trivial representation, (\ref{eq:quotient_formula_trivial_rep}),
is a well-known concept. It is known by various names such as a graph
divisor, an equitable partition, a coloration, a quotient graph or
an orbigraph (more details on these special cases are given in Section
\ref{Subsec: Applications - Computations}).
\begin{cor}[Simplifications and variations of the formula]
\label{cor:simplifications_of_quotient_formula} ~
\begin{enumerate}
\item \textbf{Trivial representation:} If $\Vr$ is the trivial $G$-module
then $\Op_{\rho}$ is a $\left|\D\right|\times\left|\D\right|$ matrix
whose entries are 
\begin{equation}
[\Op_{\rho}]_{i,j}=\frac{1}{\sqrt{|G_{i}||G_{j}|}}\sum_{g\in G}\Op_{i,gj}.\label{eq:quotient_formula_trivial_rep}
\end{equation}
\item \textbf{Free action:} If the action of $G$ on $\Pt$ is free then
for all $i,j\in\D$, the $(i,j)$-th block of $\Op_{\rho}$ is given
by 
\begin{equation}
[\Op_{\rho}]_{i,j}=\sum_{g\in G}\Op_{i,gj}\,g.\label{eq:quotient_formula_free}
\end{equation}
\item \textbf{Double coset expansion:} $\left[\Op_{\rho}\right]_{i,j}$
can also be computed as 
\begin{equation}
\left[\Op_{\rho}\right]_{i,j}=\sqrt{\left|G_{i}\right|\left|G_{j}\right|}\Phi_{i}^{*}\left(\sum_{G_{i}gG_{j}\in G_{i}\backslash G/G_{j}}\frac{1}{\left|G_{i}g\cap gG_{j}\right|}\Op_{i,gj}\,g\right)\Phi_{j},\label{eq:Quotient_formula_double_cosets}
\end{equation}
where the sum is over representatives of the double cosets $G_{i}\backslash G/G_{j}$.
\end{enumerate}
\end{cor}

\begin{proof}[Proof of Corollary \ref{cor:simplifications_of_quotient_formula}]

To get (\ref{eq:quotient_formula_trivial_rep}), note that $\Phi_{i}=(1)$
for all $i\in\D$. Substituting this into (\ref{eq:Quotient_formula_explicit}),
we obatin that the $(i,j)$ block is just a single matrix entry given
by (\ref{eq:quotient_formula_trivial_rep}).

To get (\ref{eq:quotient_formula_free}) note that the free action,
$G_{i}=\{1\}$, implies $(\Vr)^{G_{i}}\cong\Vr$. Taking dimensions
gives\footnote{One may also have $d_{i}=r$ when the action is not free, although
free action is required to achieve the form (\ref{eq:quotient_formula_free}).} $d_{i}=r=\deg(\rho)$. Therefore, we can choose $\Phi_{i}$ to be
any $r\times r$ unitary matrix, including $\Phi_{i}=\Id_{r}$. This
yields (\ref{eq:quotient_formula_free}).

To get (\ref{eq:Quotient_formula_double_cosets}) we only need to
employ the decomposition of $G$ to doubles cosets $G_{i}\backslash G/G_{j}$
and note that the expression $\left(\Phi_{i}^{*}\,\left[\Op\right]_{i,gj}\rho(g)\,\Phi_{j}\right)$
is invariant with respect to the choice of particular representative
of the double coset and that $\left|G_{i}gG_{j}\right|=\frac{\left|G_{i}\right|\left|G_{j}\right|}{\left|G_{i}g\cap gG_{j}\right|}$.
Hence, in (\ref{eq:Quotient_formula_explicit}) we may only sum over
representatives of the double cosets $G_{i}\backslash G/G_{j}$ to
get (\ref{eq:Quotient_formula_double_cosets}).
\end{proof}

\subsection{Fundamental property of the quotient operator\label{sec: fundamental property of quotient}}

We present here a fundamental property which the quotient operator
defined by equation~\eqref{eq:Quotient_formula_explicit} satisfies
and from which follow other algebraic and spectral properties of the
quotient. To do so, we need to introduce a few more algebraic notions.
Let $G$ be a finite group and $\Vr$ and $\Vp$ be two finite-dimensional
$G$-modules. As before we assume that $\Vr$ and $\Vp$ are, equipped
with inner products, which are preserved by the $G$ action. We consider
all linear maps from $\Vr$ to $\Vp$ which commute with the group
action, and denote this set by 
\begin{equation}
\Hom(\Vr,\Vp):=\set{\phi\in\mathrm{Hom}(\Vr,\Vp)}{g\phi=\phi g}.\label{eq:hom_space_definition}
\end{equation}
This is a vector space whose elements are commonly called \emph{intertwiners}
(see more in Appendix \ref{sec: appendix-representation_theory}).
The natural inner product on $\Hom$ is the Frobenius product 
\begin{equation}
\left<\phi_{1},\phi_{2}\right>_{\Hom(\Vr,\Vp)}:=\Tr\left(\phi_{2}^{*}\phi_{1}\right)=\sum_{v\in B(\Vr)}\left<\phi_{1}(v),\phi_{2}(v)\right>_{\Vp},\label{eq:Frobenius_inner_product}
\end{equation}
where $\left<\cdot,\cdot\right>_{\Vp}$ denotes the inner product
in $\Vp$ and $B(\Vr)$ is a basis for $\Vr$. Let $\Op:\Vp\rightarrow\Vp$
be a linear operator which is $G$-symmetric. Such an operator acts
on each intertwiner $\phi\in\Hom(\Vr,\Vp)$ by composition, 
\begin{equation}
\forall u\in\Vr,\quad\quad\left(\Op\phi\right)(u):=\Op\thinspace\phi(u).\label{eq:T_acts_on_Hom_part1}
\end{equation}
Clearly, $\Op\phi\in\mathrm{Hom}(\Vr,\Vp)$. But we even have $\Op\phi\in\Hom(\Vr,\Vp)$
since 
\begin{equation}
g(\Op\phi)=\Op g\phi=(\Op\phi)g,\label{eq:T_acts_on_Hom_part2}
\end{equation}
because $\Op$ is $G$-symmetric\footnote{We may also consider $\Op$ itself as an intertwiner, $\Op\in\Hom(\Vp,\Vp)$.
In view of this, the action of $\Op$ on $\Hom(\Vr,\Vp)$ is a composition
of intertwiners.} and $\phi\in\Hom(\Vr,\Vp)$ commutes with $g$.

This linear action of $\Op$ on $\Hom(\Vr,\Vp)$ leads to our main
theorem.
\begin{thm}[Fundamental property of the quotient operator]
\label{thm: Quotient fundamental property} Let $G$ be a finite
group and $\Vr,\Vp$ two finite-dimensional $G$-modules, equipped
with inner products, which are preserved by the group action. Let
$\Op:\Vp\rightarrow\Vp$ be a linear operator which is $G$-symmetric.

Then the quotient operator of $\Op$ with respect to $\Vr$ is unitarily
equivalent to $\left.\Op\right|_{\Hom(\Vr,\Vp)},$ 
\begin{equation}
\Op_{\rho}\cong\left.\Op\right|_{\Hom(\Vr,\Vp)}.\label{eq: Quotient fundamental property}
\end{equation}
Namely, there exists an ortonormal basis of $\Hom(\Vr,\Vp)$ with
respect to which the matrix form of $\left.\Op\right|_{\Hom(\Vr,\Vp)}$
is $\Op_{\rho}$.
\end{thm}

We postpone the proof of the theorem to the end of Section \ref{sec:quotient_operators}
and first present the various properties which follow from the theorem.

\subsection{Algebraic and spectral properties of the quotient}

We start by summarizing some of the algebraic properties of the quotient
operator. These are general properties arising from basic representation
theory, and have useful applications. In particular, Proposition ~\ref{prop: Algebraic properties}\,(\ref{enu: prop_algebraic_subgroup_quotient})
is a vital component in the construction of isospectral objects \cite{Ban09,Par10}
(see Section \ref{Subsec: Applications - Isospectrality}) and Proposition~\ref{prop: Algebraic properties}\,(\ref{enu: prop_algebraic_Operator decomposition})
is the well-known decomposition of a symmetric operator, following
from Schur's Lemma (see Section \ref{Subsec: Applications - Computations}
and \cite[sec. 5]{CveDooSac_spectra_of_graphs_book}).

Recall that we assume $G$ is a finite group and that the $G$-modules
$\Vp$ and $\Vr$ are finite-dimensional and with inner products preserved
by the group action. We denote $\mathrm{V}_{\rho_{1}}\cong\mathrm{V}_{\rho_{2}}$
when the $G$-modules $\mathrm{V}_{\rho_{1}}$ and $\mathrm{V}_{\rho_{2}}$
are isomorphic as Hilbert spaces (i.e., unitarily isomorphic) and
this isomorphism commutes with the $G$-action. We use a similar notation
$\Op_{1}\cong\Op_{2}$ to denote that the operators $\Op_{1}$ and
$\Op_{2}$ are unitarily equivalent.
\begin{prop}
\emph{{[}Algebraic properties{]}} \label{prop: Algebraic properties}
~

Let $G$ be a finite group. Let $\Op$ be a $G$-symmetric linear
operator.
\begin{enumerate}
\item \label{enu: prop_algebraic_decomp_2parts} Let $\mathrm{V}_{R}\cong\bigoplus_{\rho}\Vr$
be a $G$-module which is isomorphic to the (orthogonal) direct sum
of some (possibly repeating and not necessarily irreducible) $G$-modules
$\Vr$, then 
\begin{equation}
\Op_{R}\cong\bigoplus_{\rho}\Op_{\rho}.\label{eq:quotient_linear_decomposition}
\end{equation}
\item \label{enu: prop_algebraic_Operator decomposition} Let $\reg_{G}$
denote the regular representation of $G$. Then 
\begin{equation}
\Op\cong\Op_{\reg_{G}}\cong\bigoplus_{\rho}\left[\Id_{\deg\rho}\otimes\Op_{\rho}\right],\label{eq:quotient_of_regular_rep}
\end{equation}
where the direct sum above is over all irreducible representations
$\rho$ of $G$, and $\Id_{\deg\rho}$ is the identity operator of
rank $\deg\rho$.\\
\item \label{enu: prop_algebraic_subgroup_quotient} Let $H$ be a subgroup
of $G$. If $\Vs$ is an $H$-module and $\Vr=\Ind HG{\Vs}$ is the
corresponding induced $G$-module from $H$ to $G$, then 
\begin{equation}
\Op_{\sigma}\cong\Op_{\rho}.\label{eq:quotient_of_induced_rep}
\end{equation}
\end{enumerate}
\end{prop}

\begin{rem}
Appendix \ref{sec: appendix-representation_theory} contains a short
introduction on induced modules and representations.
\end{rem}

\begin{proof}
To prove~\eqref{eq:quotient_linear_decomposition}, we note that
$\Hom(\mathrm{V}_{R},\Vp)$ is isomorphic to $\bigoplus_{\rho}\Hom(\Vr,\Vp)$
as vector spaces (see, e.g.\ \cite[ch.~10]{DummitFoote_abstract_algebra}).
In addition, it is not hard to verify that this isomorphism preserves
the inner product. Hence, (\ref{eq:quotient_linear_decomposition})
follows straightforwardly from the orthogonal decomposition $\Hom(\mathrm{V}_{R},\Vp)\cong\bigoplus_{\rho}\Hom(\Vr,\Vp)$
together with Definition \ref{thm: Quotient fundamental property}.

To prove~\eqref{eq:quotient_of_regular_rep}, first note that the
second equivalence follows from \eqref{eq:quotient_linear_decomposition}
together with the decomposition of the regular representation in terms
of irreducible representations (see e.g., \cite[ch.~2.4]{Serre_linear_representations}).
To see that $\Op\cong\Op_{\reg_{G}}$, we use that $\reg_{G}=\Ind HG{\Vs}$,
where $H=\{\mathrm{id}\}$ is the trivial subgroup of $G$ and $\Vs=\C$
is the trivial representation of $H$ (see Appendix~\ref{sec: appendix-representation_theory}).
We may now apply (\ref{eq:quotient_of_induced_rep}) and get that
its right hand side is $\Op_{\reg_{G}}$. It is left only to explain
why the left hand side in (\ref{eq:quotient_of_induced_rep}) is just
$\Op_{\sigma}=\Op$. To see this, it is enough to note that $\mathrm{Hom}_{H}(\Vs,\Vp)\cong\mathrm{Hom}(\Vs,\Vp)\cong\Vp$,
keeping in mind that $H=\{\mathrm{id}\}$ and $\Vs=\C$.

Finally, we prove (\ref{eq:quotient_of_induced_rep}). Frobenius reciprocity
theorem (namely, Theorem \ref{thm:Frobenius_reciprocity}\,(\ref{enu:     thm-Frobenius-reciprocity-2})
together with Proposition \ref{prop:     Frobenius_preserves_inner_product})
gives the Hilbert space isomorphism 
\begin{equation}
\mathrm{Hom}_{H}(\Vs,\Vp)\cong\mathrm{Hom}_{G}(\Ind HG{\Vs},\Vp)=\mathrm{Hom}_{H}(\Vr,\Vp).\label{eq:FRT_Hilbert}
\end{equation}
By Proposition~\ref{prop:Frobenius_commutes}, this isomorphism commutes
with $T$ and therefore establishes the unitary equivalence 
\begin{equation}
T\big|_{\mathrm{Hom}_{H}(\Vs,\Vp)}\simeq T\big|_{\mathrm{Hom}_{G}(\Vr,\Vp)}.\label{eq:quotients_ue}
\end{equation}
The conclusion follows from \eqref{eq: Quotient fundamental property}.
\end{proof}
\vspace{10pt}

In many instances, the main purpose for constructing the quotient
operator $\Op_{\rho}$, is to isolate the spectral properties of $\Op$
associated to a particular representation $\rho$. To state this formally,
let us introduce the generalized $\lambda$-eigenspace of order $k$,
\begin{equation}
\Espace{\Op}k(\lambda):=\ker\left((\Op-\lambda)^{k}\right).\label{eq:eigenspace-2}
\end{equation}
In particular $\Espace{\Op}1(\lambda)$ is the usual eigenspace corresponding
to eigenvalue $\lambda$.
\begin{prop}[Spectral properties]
\label{Prop:Spectral_property} ~
\begin{enumerate}
\item \label{enu:Prop_Spectral_property_1} Let $\lambda\in\C$ and $k\in\N$.
Then $\Espace{\Op}k(\lambda)$ is a $G$-module and 
\begin{equation}
\Espace{\Op_{\rho}}k(\lambda)\cong\Hom(\Vr,\Espace{\Op}k(\lambda)).\label{eq:spectral_property_of_quotient}
\end{equation}
\item \label{enu:Prop_Spectral_property_2} The eigenspace of $T$ is decomposed
over the \emph{irreducible} representations $\rho$ of $G$ as follows,
\begin{equation}
\Espace{\Op}k(\lambda)\cong\bigoplus_{\rho}\Vr\otimes\Espace{\Op_{\rho}}k(\lambda).\label{eq:eigenspace_irrep}
\end{equation}
In particular, denoting by $m^{\Op}(\lambda)$ either geometric or
the algebraic multiplicity of $\lambda\in\C$ as an eigenvalue of
$\Op$, we get 
\begin{equation}
m^{\Op}(\lambda)=\sum_{\rho}\deg\rho\cdot m^{\Op_{\rho}}(\lambda),\label{eq:eigenvalues_multiplicities_by_irreps}
\end{equation}
where the sum above is over all irreducible representations $\rho$
of $G$.
\item \label{enu:Prop_Spectral_property_3} If $\Op$ is self-adjoint then
$\Op_{\rho}$ is self-adjoint.
\end{enumerate}
\end{prop}

\begin{proof}
For the first part of the proposition observe first that $\Espace{\Op}k(\lambda)\subset\Vp$
already comes equipped with an action of $G$ and, furthermore, it
is invariant under this action: for all $f\in\Espace{\Op}k(\lambda)$
and $g\in G$ we have 
\[
\left(\Op-\lambda\Id\right)^{k}\left(gf\right)=g\left(\Op-\lambda\Id\right)^{k}f=0.
\]
Therefore, $\Espace{\Op}k(\lambda)$ is a $G$-module.

Further note that $\Hom(\Vr,\Espace{\Op}k(\lambda))$ is a subspace
of $\Hom(\Vr,\Vp)$. Therefore, by \eqref{eq: Quotient   fundamental property},
\begin{multline}
\Espace{\Op_{\rho}}k(\lambda)\cong\Espace{\Op|_{\Hom(\Vr,\Vp)}}k(\lambda)=\left\{ f\in\Hom(\Vr,\Vp)\colon(\Op-\lambda)^{k}f=0\right\} \\
=\left\{ f\in\Hom(\Vr,\Vp)\colon f(v)\in\Espace{\Op}k(\lambda)\ \forall v\in\Vr\right\} =\Hom(\Vr,\Espace{\Op}k(\lambda)).\label{eq:proof_eigenspaces}
\end{multline}

The second part of the proposition is an immediate corollary of the
unitary equivalence in (\ref{eq:quotient_of_regular_rep}).

For the third part of the proposition, it is enough to show that $\Op_{\rho}$
is symmetric (since all our spaces are finite-dimensional). Let $\phi_{1},\phi_{2}\in\Hom(\Vr,\Vp)$
and $B(\Vr)$ an orthonormal basis of $\Vr$. 
\begin{align*}
\left<\Op_{\rho}\phi_{1}\,,\,\phi_{2}\right>_{\Hom(\Vr,\Vp)} & =\sum_{v\in B(\Vr)}\left\langle T\phi_{1}(v)\,,\,\phi_{2}(v)\right\rangle _{\Vp}\\
 & =\sum_{v\in B(\Vr)}\left\langle \phi_{1}(v)\thinspace,\thinspace T\phi_{2}(v)\right\rangle _{\Vp}=\left<\phi_{1}\,,\,\Op_{\rho}\phi_{2}\right>_{\Hom(\Vr,\Vp)},
\end{align*}
where in the middle, we used that $\Op$ is self-adjoint. Hence, $\Op_{\rho}$
is symmetric and self-adjoint. Proposition \ref{Prop:Spectral_property}
provides an important application of the quotient construction\footnote{For more applications see Section \ref{Sec: Applications of quot ops}.}.
It is well known that each eigenspace corresponds to a representation
of the symmetry group. Namely, that $\Espace{\Op}k(\lambda)$ is a
$G$-module. A fundamental question is which representations (or $G$-modules)
actually appear in $\Op$'s spectrum and in which frequency. Equivalently,
we ask for which $G$-modules $\Vr$, we have a nontrivial $\Hom(\Vr,\Espace{\Op}k(\lambda))$,
and how frequently (in $\Op$'s spectrum) this happens. We are now
able to answer this via formula (\ref{eq:Quotient_formula_explicit})
and equation (\ref{eq:spectral_property_of_quotient}). To find out
whether a $G$-module $\Vr$ appears in $\Op$'s spectrum, one first
computes $\Op_{\rho}$ using (\ref{eq:Quotient_formula_explicit}).
If the dimension of $\Op_{\rho}$ is non-zero then $\Vr$ appears
in $\Op$'s spectrum. The frequency of its appearance is given as
the dimension ratio of $\Op_{\rho}$ and $\Op$. A similar analysis
may be done for operators $\Op$ which are not finite dimensional,
such as the metric graph operators discussed in Section \ref{sec:QuantumGraphs}.
The existence (or nonexistence) and frequency of various representations
in the spectrum is nicely demonstrated in Example \ref{Ex: Reducing points}
for discrete graphs and in Examples \ref{ex:Tetra_QG}, \ref{ex:tetrahedrom_all3}
for metric graphs.
\end{proof}

\subsection{Proof of Theorem \ref{thm: Quotient fundamental property}\label{SubSec: Proof of fundamental property}}

Theorem \ref{thm: Quotient fundamental property} is proven with the
aid of the following lemma, which describes the ortonormal basis with
respect to which we may write $\left.\Op\right|_{\Hom(\Vr,\Vp)}$
to get $\Op_{\rho}$, as in the statement of Theorem \ref{thm: Quotient fundamental property}.
\begin{lem}
\label{lem: orthonormal basis for Hom_G}For each $i\in\D$ and $1\le n\leq d_{i}$,
let $\theta_{i}^{(n)}:\Vr\rightarrow\Vp$ be given by 
\begin{equation}
\forall v\in\Vr,\quad\quad\theta_{i}^{(n)}(v):=\frac{1}{\sqrt{\left|G\right|\left|G_{i}\right|}}\sum_{g\in G}\left\langle g\varphi_{i}^{(n)}\thinspace,\thinspace v\right\rangle _{\Vr}\thinspace\mathbf{e}_{g\thinspace i},\label{eq:definition_of_abstract_thetas}
\end{equation}
where $\{\varphi_{i}^{(n)}\}_{n=1}^{d_{i}}$ is an orthonormal basis
for $(\Vr)^{G_{i}}$, as described in Section \ref{sec:def_quotient_operators}.

The set $\bigcup_{i\in\D}\left\{ \theta_{i}^{(n)}\right\} _{n=1}^{d_{i}}$
is an orthonormal basis for $\Hom(\Vr,\Vp)$.
\end{lem}

\begin{proof}[Proof of Lemma \ref{lem: orthonormal basis for Hom_G}]

The decomposition of $\Pt$ into orbits, $\Pt=\cup_{i\in\D}O_{i}$,
induces a $G$-module orthogonal decomposition, $\Vp\cong\C[\Pt]\cong\oplus_{i\in\D}\C[O_{i}]$,
which in turn leads to the following orthogonal decomposition of the
intertwiner space, 
\begin{equation}
\Hom(\Vr,\Vp)\cong\oplus_{i\in\D}\Hom(\Vr,\C[O_{i}]).\label{eq:Hom_space_decomposed_to_orbits}
\end{equation}
The orthonormal basis we choose for $\Hom(\Vr,\Vp)$ is aligned with
the orthogonal decomposition (\ref{eq:Hom_space_decomposed_to_orbits}):
we take as a basis for $\Hom(\Vr,\Vp)$ a union of orthonormal bases
for each $\Hom(\Vr,\C[O_{i}])$. We will show that 
\begin{equation}
\Hom(\Vr,\C[O_{i}])\cong\mathrm{Hom}_{G_{i}}(\triv,\Vr),\label{eq:Hom_space_equivalence_by_Frobenius}
\end{equation}
where the above is a unitary isomorphism of vector spaces, which we
denote by $U$. Recall that we have chosen an orthonormal basis, $\{\varphi_{i}^{(n)}\}_{n=1}^{d_{i}}$,
for $(\Vr)^{G_{i}}\cong\mathrm{Hom}_{G_{i}}(\triv,\Vr)$ (see description
before Theorem \ref{def: matrix_form_of_quotient}). Therefore, $\{U\varphi_{i}^{(n)}\}_{n=1}^{d_{i}}$
is an orthonormal basis for $\Hom(\Vr,\C[O_{i}])$. Hence, by (\ref{eq:Hom_space_decomposed_to_orbits}),
taking the union $\bigcup_{i\in\D}\{U\varphi_{i}^{(n)}\}_{n=1}^{d_{i}}$
yields an orthonormal basis for $\Hom(\Vr,\Vp)$. The Lemma will then
follow once we show that the isomorphism $U$ in (\ref{eq:Hom_space_equivalence_by_Frobenius})
is unitary and $\theta_{i}^{(n)}=U\varphi_{i}^{(n)}$, where $\theta_{i}^{(n)}$
is defined in (\ref{eq:definition_of_abstract_thetas}).

The unitary isomorphism (\ref{eq:Hom_space_equivalence_by_Frobenius})
follows from a version of Frobenius reciprocity\footnote{Note that we take here a different version of Frobenius reciprocity
than the one used in the proof of Proposition \ref{prop: Algebraic properties}.} (Theorem \ref{thm:Frobenius_reciprocity},(\ref{enu: thm-Frobenius-reciprocity-1})).
We argue this below and provide an explicit expression for the unitary
map $U$. First, note that 
\begin{align*}
\Ind{G_{i}}G{\triv} & =\set{f:G\rightarrow\triv}{\forall h\in G_{i},~~f(hg)=hf(g)}\\
 & =\set{f:G\rightarrow\triv}{\forall h\in G_{i},~~f(hg)=f(g)}\\
 & \cong\C[\nicefrac{G}{G_{i}}]\cong\C[O_{i}],
\end{align*}
where the first equality is by definition of induced representation
(Definition \ref{def: induced_representation}), the second is because
$f$ maps to the trivial representation and in the third line $\nicefrac{G}{G_{i}}$
denotes the set of right cosets and we employ the orbit-stabilizer
theorem, $\left|\nicefrac{G}{G_{i}}\right|=\left|O_{i}\right|$. Explicitly,
the isomorphism $\Ind{G_{i}}G{\triv}\cong\C[O_{i}]$ is given by 
\begin{equation}
f\mapsto\frac{1}{\sqrt{\left|G\right|\left|G_{i}\right|}}\sum_{g\in G}f(g)\thinspace g^{-1}\ei.\label{eq:isomorphism_to_C_Oi}
\end{equation}
One may verify that this is ineed an isomorphism and that it is unitary
(according to the inner product on $\C[\Pt]$ and (\ref{eq:inner_product_induced_rep})).
Next, we describe a map 
\[
\widetilde{U}:\mathrm{Hom}_{G_{i}}(\triv,\Vr)\rightarrow\Hom(\Vr,\Ind{G_{i}}G{\triv})
\]
as a composition of the adjoint and the Frobenius reciprocity isomorphism,
(\ref{eq: Frobenius_map_relation-1}). Specifically, $\widetilde{U}$
maps any $\varphi\in\mathrm{Hom}_{G_{i}}(\triv,\Vr)$ such that 
\begin{align*}
\forall v\in\Vr,\forall g\in G,\quad & \widetilde{U}(\varphi)(v)(g)=\left\langle \varphi\thinspace,\thinspace gv\right\rangle _{\Vr}.
\end{align*}
The map $\widetilde{U}$ is unitary, as both the adjoint and the Frobenius
reciprocity isomorphism are unitary (for the latter see Proposition
\ref{prop: Frobenius_preserves_inner_product}). We may now combine
$\widetilde{U}$ with (\ref{eq:isomorphism_to_C_Oi}) to get 
\begin{align}
U & :\mathrm{Hom}_{G_{i}}(\triv,\Vr)\rightarrow\Hom(\Vr,\C[O_{i}])\nonumber \\
\forall v\in\Vr,\quad U(\varphi)(v) & =\frac{1}{\sqrt{\left|G\right|\left|G_{i}\right|}}\sum_{g\in G}\left\langle \varphi\thinspace,\thinspace gv\right\rangle _{\Vr}\thinspace g^{-1}\ei\label{eq:Unitary_from_phi_to_theta}\\
 & =\frac{1}{\sqrt{\left|G\right|\left|G_{i}\right|}}\sum_{g\in G}\left\langle g\varphi_{i}^{(n)}\thinspace,\thinspace v\right\rangle _{\Vr}\thinspace\mathbf{e}_{g\thinspace i},
\end{align}
where in the last line, we used the unitarity of $\rho(g)$, changed
summation $g\leftrightarrow g^{-1}$ and used that $g\thinspace\ei=\mathbf{e}_{g\thinspace i}$.
Furthermore, as we have the natural embedding $\Hom(\Vr,\C[O_{i}])\hookrightarrow\Hom(\Vr,\Vp)$,
we can actually consider $U$ as a map to $\Hom(\Vr,\Vp)$. The map
$U$ is unitary since we showed above that both (\ref{eq:isomorphism_to_C_Oi})
and $\widetilde{U}$ are unitary. Hence, defining $\theta_{i}^{(n)}=U\varphi_{i}^{(n)}$
for all $i\in\D$ and $1\leq n\leq d_{i}$, we get that $\theta_{i}^{(n)}$
is given by (\ref{eq:definition_of_abstract_thetas}) and that the
set $\bigcup_{i\in\D}\left\{ \theta_{i}^{(n)}\right\} _{n=1}^{d_{i}}$
is indeed an orthonormal basis for $\Hom(\Vr,\Vp)$.
\end{proof}
\medskip{}

\begin{proof}[Proof of Theorem \ref{thm: Quotient fundamental property}]

We explicitly write the matrix representing $\left.\Op\right|_{\Hom(\Vr,\Vp)}$
in the orthonormal basis $\bigcup_{i\in\D}\{\theta_{i}^{(n)}\}_{n=1}^{d_{i}}$
given in Lemma \ref{lem: orthonormal basis for Hom_G}. By doing so,
we will recover the expression (\ref{eq:Quotient_formula_explicit})
which is used to define $\Op_{\rho}$ and this will prove the Theorem.\allowdisplaybreaks
\begin{align*}
\left\langle \Op\thinspace\theta_{j}^{(m)},\theta_{i}^{(n)}\right\rangle _{\Hom(\Vr,\Vp)}= & \sum_{v\in B(\Vr)}\left\langle \Op\theta_{j}^{(m)}(v)~,~\theta_{i}^{(n)}(v)\right\rangle _{\Vp}\\
=\frac{1}{\sqrt{|G_{i}||G_{j}|}} & \frac{1}{\left|G\right|}\sum_{v\in B(\Vr)}\sum_{g_{j}\in G}\sum_{g_{i}\in G}\left<g_{j}\varphi_{j}^{(m)},v\right>_{\Vr}\overline{\left<g_{i}\varphi_{i}^{(n)},v\right>_{\Vr}}\thinspace\left\langle \Op g_{j}\ej\,,\,g_{i}\ei\right\rangle _{\Vp}\\
=\frac{1}{\sqrt{|G_{i}||G_{j}|}} & \frac{1}{\left|G\right|}\sum_{g_{j}\in G}\sum_{g_{i}\in G}\left<g_{j}\varphi_{j}^{(m)},g_{i}\varphi_{i}^{(n)}\right>_{\Vr}\thinspace\left<g_{i}^{-1}\Op g_{j}\ej\,,\,\ei\right>_{\Vp}\\
=\frac{1}{\sqrt{|G_{i}||G_{j}|}} & \frac{1}{\left|G\right|}\sum_{g_{j}\in G}\sum_{g_{i}\in G}\left<g_{i}^{-1}g_{j}\varphi_{j}^{(m)},\,\varphi_{i}^{(n)}\right>_{\Vr}\thinspace\left<\Op g_{i}^{-1}g_{j}\ej\,,\,\ei\right>_{\Vp}\\
=\frac{1}{\sqrt{|G_{i}||G_{j}|}} & \sum_{g\in G}\left<g\varphi_{j}^{(m)},\varphi_{i}^{(n)}\right>_{\Vr}\left<\Op g\ej,\ei\right>_{\Vp}\\
=\frac{1}{\sqrt{|G_{i}||G_{j}|}} & \sum_{g\in G}\left[\Phi_{i}^{*}\,g\Phi_{j}\right]_{m,n}\left[\Op\right]_{i,gj}\\
=\frac{1}{\sqrt{|G_{i}||G_{j}|}} & \left[\Phi_{i}^{*}\,\left(\sum_{g\in G}\left[\Op\right]_{i,gj}g\right)\Phi_{j}\right]_{m,n}
\end{align*}
 where to obtain the third line, we used that $B(\Vr)$ is an orthonormal
basis. We also used the unitarity of $g_{i}$, the $G$-symmetry of
$\Op$, and changed the summation order over the group elements ($g=g_{j}^{-1}g_{i}$).
\end{proof}
%%%%%%%%%%%%%%%%%%%%%%%%%%%%%%%%%%%%%%%%%%%%%%%%%%%%%%
%%%%%%%%%%%%%%%%%%%%%%%%%%%%%%%%%%%%%%%%%%%%%%%%%%%%%%
%

\section{Examples of quotients \label{sec:Examples}}

The examples in this section demonstrate different procedures for
computing quotient operators and illustrate their properties as established
in Section~\ref{sec:quotient_operators}.
\begin{example}[A basic example with one-dimensional representations]
\label{Ex: Basic example}

\begin{figure}
\centering (a) \includegraphics[width=0.38\textwidth]{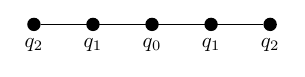}
\hspace{20pt} (b)\includegraphics[width=0.22\textwidth]{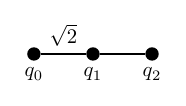}
\hspace{20pt} (c) \includegraphics[width=0.15\textwidth]{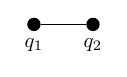}
\caption{The graph from Example~\ref{Ex: Basic example} and its two quotients.}
\label{fig:basic_example}
\end{figure}

Consider the discrete graph of Fig.~\ref{fig:basic_example}. The
discrete Laplacian we consider on this graph is, in matrix form, 
\begin{equation}
L=\begin{pmatrix}q_{2} & -1 & 0 & 0 & 0\\
-1 & q_{1} & -1 & 0 & 0\\
0 & -1 & q_{0} & -1 & 0\\
0 & 0 & -1 & q_{1} & -1\\
0 & 0 & 0 & -1 & q_{2}
\end{pmatrix},%\qquadq_{2}=1,\ q_{1}=q_{0}=2.
\end{equation}
Note that we have incorporated a possibility to vary the potential
$q_{v}$ (turning the Laplacian into \term{Schrödinger operator},
although we will still refer to it as a ``Laplacian'') and we have
done it so as to preserve the reflection symmetry of the graph.

More precisely, the operator $L$ is symmetric under the action of
the cyclic group of two elements $G=C_{2}=\{e,~r\},$ where $e$ is
the identity element and $r$ encodes the horizontal reflection: it
acts on $V_{\pi}:=\C^{5}$ as the matrix 
\[
r=\begin{pmatrix}0 & 0 & 0 & 0 & 1\\
0 & 0 & 0 & 1 & 0\\
0 & 0 & 1 & 0 & 0\\
0 & 1 & 0 & 0 & 0\\
1 & 0 & 0 & 0 & 0
\end{pmatrix}.
\]
The symmetry of the Laplacian then means that it commutes with the
action of the group, i.e. $gL=Lg$ for all $g\in G$. We will compute
the quotients as the matrix representations of the operator $L\big|_{\Hom(V_{\rho},V_{\pi})}$
for all irreducible representation of the symmetry group $G$.

The group $C_{2}$ has two irreducible representations. These are
the trivial representation $V_{+}=\C$ with $r$ acting as idenitity,
and the sign representation $V_{-}=\C$ with $r$ acting as multiplication
by $-1$. From the definition of $\Hom$, equation~\eqref{eq:hom_space_definition},
we get 
\begin{align}
\Hom(V_{+},V_{\pi}) & =\left\{ \begin{pmatrix}f_{1}\\
f_{2}\\
f_{3}\\
f_{2}\\
f_{1}
\end{pmatrix}\colon f_{1},f_{2},f_{3}\in\C\right\} ,\label{eq:HomG_plus}\\
%\qquad\text{and}\qquad\Hom(V_{-},V_{\pi})
 & =\left\{ \begin{pmatrix}\phi_{1}\\
\phi_{2}\\
0\\
-\phi_{2}\\
-\phi_{1}
\end{pmatrix}\colon\phi_{1},\phi_{2}\in\C\right\} .\label{eq:HomG_minus}
\end{align}
We notice that (due to the representations being one-dimensional)
there is a natural identification of the $\Hom$ spaces with the subspaces
of $V_{\pi}$ that are symmetric or anti-symmetric with respect to
the reflection $r$.

Choosing the orthonormal basis 
\begin{equation}
\theta_{1}=e_{3},\quad\theta_{2}=\frac{e_{2}+e_{4}}{\sqrt{2}},\quad\theta_{3}=\frac{e_{1}+e_{5}}{\sqrt{2}},\label{eq:Hom+basis}
\end{equation}
for $\Hom(V_{+},V_{\pi})$, we obtain the quotient operator 
\begin{equation}
L_{\mathrm{(+)}}=\begin{pmatrix}q_{0} & -\sqrt{2} & 0\\
-\sqrt{2} & q_{1} & -1\\
0 & -1 & q_{2}
\end{pmatrix}.\label{eq:Lplus}
\end{equation}
Similarly, the quotient with respect to $V_{-}$ is 
\begin{equation}
L_{\mathrm{(-)}}=\begin{pmatrix}q_{1} & -1\\
-1 & q_{2}
\end{pmatrix}.\label{eq:Lminus}
\end{equation}
\end{example}

%%%%%%%%%%%%%%%%%%%%%%%%%%%%%%%%%%%%%%%%

\begin{example}[Free action]
\label{ex:hexagon} Let us take the graph from Figure \ref{fig:benzene_setup}(a)
described by a weighted adjacency matrix $H$ with diagonal entries
$V$ and weights $a$ and $b$ assigned to the solid and dashed edges
respectively. More explicitly, 
\begin{equation}
H=\begin{pmatrix}q & a & 0 & 0 & 0 & b\\
a & q & b & 0 & 0 & 0\\
0 & b & q & a & 0 & 0\\
0 & 0 & a & q & b & 0\\
0 & 0 & 0 & b & q & a\\
b & 0 & 0 & 0 & a & q
\end{pmatrix}\label{eq:benzeneH_definition}
\end{equation}
The operator $H$ is invariant under the group $G=D_{3}=\{e,r_{1},r_{2},r_{3},\sigma,\sigma^{2}\}$,
where $r_{\alpha}$ denotes reflection over the corresponding axis
shown in the figure and $\sigma$ is rotation by $\omega=2\pi/3$.
The acton of $G$ on $\Vp:=\C^{6}$ is by permutation matrices; for
example $r_{1}$ acts as the permutation $(1\,2)(3\,6)(4\,5)$ and
$\sigma$ acts as $(1\,3\,5)(2\,4\,6)$. Recall that the action of
a permutation $\sigma$ on a vector $f\in\C^{6}$ is 
\begin{equation}
(\sigma f)_{j}=f_{\sigma^{-1}j}.\label{eq:perm_action}
\end{equation}

\begin{figure}
\centering \includegraphics{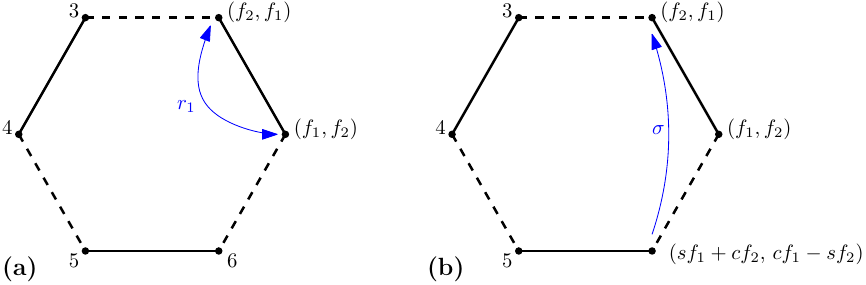} \caption{The graph from Example~\ref{ex:hexagon}, its symmetries and the
quotient with respect to representation \eqref{eq:D3stdrep}. We use
abbreviations $c=\cos\omega$ and $s=\sin\omega$.}
\label{fig:benzene_setup}
\end{figure}

The group $G=D_{3}$ has a unique two-dimensional irreducible representation
$V_{\rho}:=\C^{2}$ given by 
\begin{equation}
\sigma\mapsto\begin{pmatrix}\cos\omega & -\sin\omega\\
\sin\omega & \cos\omega
\end{pmatrix},\qquad r_{1}\mapsto\begin{pmatrix}0 & 1\\
1 & 0
\end{pmatrix},\qquad r_{3}\mapsto\begin{pmatrix}\sin\omega & \cos\omega\\
\cos\omega & -\sin\omega
\end{pmatrix},\label{eq:D3stdrep}
\end{equation}
where we included $r_{3}=r_{1}\sigma$ for future reference.

We will both describe the intertwiner space $\Hom(V_{\rho},V_{\pi})$
and compute the quotient operator $H_{\rho}$ using Definition~\ref{def: matrix_form_of_quotient}.
We will construct the intertwiner pictorially, see Fig.~\ref{fig:benzene_hom}.
An intertwiner $\phi\in\Hom(V_{\rho},V_{\pi})$ will be represented
as a $6\times2$ matrix and we will fill it up row by row. The first
row are the free parameters, 
\begin{equation}
\begin{pmatrix}\phi_{1,1} & \phi_{1,2}\end{pmatrix}=\begin{pmatrix}f_{1} & f_{2}\end{pmatrix}.\label{eq:benzene1}
\end{equation}
All other rows will be determined from the ``representative'' row
\eqref{eq:benzene1} since all other vertices lie in the orbit of
vertex 1 (the group action is transitive). We remark here that in
more sophisticated examples, there may be dependencies among the entries
of the representative row(s) if the corresponding vertex has a nontrivial
stabilizer subgroup. We will encounter this in Example~\ref{Ex: Reducing points}.

From the definition of the intertwiner, equation~\eqref{eq:hom_space_definition},
and the action of $r_{1}$ on $V_{\pi}=\C^{6}$, we get 
\begin{equation}
\phi_{2,\cdot}=(r_{1}\phi)_{1,\cdot}=\phi_{1,\cdot}r_{1}=\begin{pmatrix}f_{1} & f_{2}\end{pmatrix}\begin{pmatrix}0 & 1\\
1 & 0
\end{pmatrix}=\begin{pmatrix}f_{2} & f_{1}\end{pmatrix},\label{eq:benzene2}
\end{equation}
where $\phi_{j,\cdot}$ denotes the $j$-th row of $\phi$. This reasoning
is schematically depicted in Fig.~\ref{fig:benzene_hom}(a).

\begin{figure}
\centering \includegraphics{Benzene_hom} \caption{Construction of the intertwiner space in Example~\ref{ex:hexagon}.}
\label{fig:benzene_hom}
\end{figure}

Similarly, using $\sigma$, we get 
\begin{equation}
\phi_{6,\cdot}=(\sigma\phi)_{2,\cdot}=\phi_{2,\cdot}\sigma=\begin{pmatrix}f_{2} & f_{1}\end{pmatrix}\begin{pmatrix}\cos\omega & -\sin\omega\\
\sin\omega & \cos\omega
\end{pmatrix}=\begin{pmatrix}sf_{1}+cf_{2} & cf_{1}-sf_{2}\end{pmatrix},\label{eq:benzene3}
\end{equation}
where we used abbreviations $c=\cos\omega$ and $s=\sin\omega$. This
step is schematically depicted in Fig.~\ref{fig:benzene_hom}(b).

Proceeding recursively, we find 
\begin{equation}
\phi=\begin{pmatrix}f_{1} & f_{2}\\
f_{2} & f_{1}\\
cf_{1}-sf_{2} & sf_{1}+cf_{2}\\
cf_{2}-sf_{1} & sf_{2}+cf_{1}\\
sf_{2}+cf_{1} & cf_{2}-sf_{1}\\
sf_{1}+cf_{2} & cf_{1}-sf_{2}
\end{pmatrix},\label{eq:benzene_hom}
\end{equation}
where $f_{1},f_{2}\in C$ are arbitrary constants.

To summarize, equation~\eqref{eq:benzene_hom} parametrizes the space
$\Hom(V_{\rho},V_{\pi})$. Choosing a basis, we can represent the
action of $H$ on this space as a matrix (to obtain a Hermitian matrix,
the basis should be orthonormal). But to understand the action of
$H$, it is enough to focus our attention on the representative vertex
and its neighbors, already determined in \eqref{eq:benzene2} and
\eqref{eq:benzene3}: 
\begin{equation}
H\colon\begin{pmatrix}f_{1} & f_{2}\end{pmatrix}\quad\longmapsto\quad q\begin{pmatrix}f_{1} & f_{2}\end{pmatrix}+a\begin{pmatrix}f_{2} & f_{1}\end{pmatrix}+b\begin{pmatrix}sf_{1}+cf_{2} & cf_{1}-sf_{2}\end{pmatrix},\label{eq:benzene_actionH}
\end{equation}
which corresponds to the graph in Fig.~\ref{fig:benzene_setup}(c).

We now construct $H_{\rho}$ using Definition~\ref{def: matrix_form_of_quotient}.
There is only one orbit and we choose the vertex $1$ as the representative;
namely, the fundamental domain is $\D=\{1\}$. The quotient operator
$H_{\rho}$ consists of a single block $(1,1)$ of size $2\times2$:
since the group is fixed point free, $d_{1}=r=2$. Moreover, we can
use equation~\eqref{eq:quotient_formula_free} to determine its form.
The only non-zero entries in the summation are $H_{1,1}$ (which corresponds
to $g=e$), $H_{1,2}$ (with $g=r_{1}$) and $H_{1,6}$ (with $g=r_{3}$).
Therefore the resulting quotient operator is 
\begin{equation}
H_{\rho}=H_{1,1}e+H_{1,2}r_{1}+H_{1,6}r_{3}=\begin{pmatrix}q+b\sin\omega & a+b\cos\omega\\
a+b\cos\omega & q-b\sin\omega
\end{pmatrix}.\label{Eqn: D3 quot op}
\end{equation}

For example, if we take $q=0$, $a=-1$ and $b=-2$, then the spectrum
for the original matrix $H$ is 
\[
\spec(H)=\{-3,\,-\sqrt{3},\,-\sqrt{3},\,\sqrt{3},\,\sqrt{3},\,3\},
\]
while the matrix $H_{\rho}$ happens to be diagonal with entries $\pm\sqrt{3}$.
That these eigenvalues are doubly degenerate in $\spec(H)$ follow
from Proposition \ref{Prop:Spectral_property}. Explicitly, because
$\deg\rho=2$, each eigenfunction of the quotient $H_{\rho}$ corresponds
to a two-dimensional subspace of the eigenspace of $H$ with the same
eigenvalue.
\end{example}

%%%%%%%%%%%%%%%%%%%%%%%%%%%%%%%%%%%%%%%%%%%

\begin{example}[Disappearing vertices: vertices in the fundamental domain may not
be equally represented in the quotient]
\label{Ex: Reducing points}

Here we study the quotients of the graph shown in Figure \ref{fig:discrete_tetra}(a).
It is invariant under the action of the symmetry group $G=S_{4}$,
corresponding to the permutations of the vertices $1$, $2$, $3$
and $4$. The order of the group is $|S_{4}|=4!=24$ and it is generated
by the three elements $(12)$, $(23)$ and $(34)$. The fundamental
domain consists of the two vertices `$\bullet$' and `$\circ$' and
their orbits with respect to the group action are $O_{\bullet}=\{1,\ldots,4$\}
and $O_{\circ}=\{5,\ldots,16\}$. We consider the operator $H$ on
this graph given by 
\begin{equation}
H_{ij}=\begin{cases}
a & \mbox{if }i\sim j,\ \ i\in O_{\bullet},j\in O_{\circ}\mbox{ or }i\in O_{\circ},j\in O_{\bullet}\\
b & \mbox{if }i\sim j,\ \ i\in O_{\circ},j\in O_{\circ}\\
q_{\bullet} & \mbox{if }i=j\in O_{\bullet}\\
q_{\circ} & \mbox{if }i=j\in O_{\circ},
\end{cases}\label{eq:tetraH_again}
\end{equation}
where $\sim$ indicates adjacency of the vertices. In this example,
both the `$\bullet$' and `$\circ$' vertices have non-empty stabilizer
groups. Choosing $1$ and $5$ as the orbit representatives (so that
$\D=\{1,5\}$), we have $G_{\bullet}\cong S_{3}$ (permutations of
vertices $2$, $3$ and $4$) and $G_{\circ}\cong S_{2}$ (permutations
of vertices $3$ and $4$).

We now wish to find the quotient operator $H_{\rho}$ with respect
to the so-called \emph{standard} representation\footnote{This is an irreducible representation of $S_{4}$.}
of $S_{4}$ given by 
\begin{equation}
\rho_{(1\,2)}=\begin{pmatrix}0 & -1 & 0\\
-1 & 0 & 0\\
0 & 0 & 1
\end{pmatrix},\quad\rho_{(2\,3)}=\begin{pmatrix}0 & 0 & 1\\
0 & 1 & 0\\
1 & 0 & 0
\end{pmatrix},\quad\rho_{(3\,4)}=\begin{pmatrix}0 & 1 & 0\\
1 & 0 & 0\\
0 & 0 & 1
\end{pmatrix}.\label{eq:standard_rep}
\end{equation}

We start by determining the matrices $\Phi_{\circ}$ and $\Phi_{\bullet}$
consisting of orthonormal basis vectors for $(\Vr)^{G_{\circ}}$ and
$(\Vr)^{G_{\bullet}}$, defined in equation~\eqref{eq:def_Gi_invariant_subspace}.
Note that it suffices to check the invariance condition $gv=v$ on
the generators of these groups, 
\begin{align*}
(\Vr)^{G_{\circ}} & =\ker\left(\Id_{3}-\rho_{(3\,4)}\right)=\spn\left\{ \begin{pmatrix}0\\
0\\
1
\end{pmatrix},\begin{pmatrix}1\\
1\\
0
\end{pmatrix}\right\} ,\\
(\Vr)^{G_{\bullet}} & =\ker\left(\Id_{3}-\rho_{(3\,4)}\right)\bigcap\ker\left(\Id_{3}-\rho_{(2\,3)}\right)=\spn\left\{ \begin{pmatrix}1\\
1\\
1
\end{pmatrix}\right\} ,
\end{align*}
and therefore 
\begin{equation}
\Phi_{\circ}=\begin{pmatrix}0 & 1/\sqrt{2}\\
0 & 1/\sqrt{2}\\
1 & 0
\end{pmatrix},\qquad\Phi_{\bullet}=\begin{pmatrix}1/\sqrt{3}\\
1/\sqrt{3}\\
1/\sqrt{3}
\end{pmatrix}.\label{eq:tetraPhi}
\end{equation}

We proceed to finding the blocks of our quotient operator. Firstly
we notice that $H_{1,g1}$ is only non-zero if $g\in G_{\bullet}$.
Therefore we get 
\[
[H_{\rho}]_{\bullet,\bullet}=\frac{1}{\left|G_{\bullet}\right|}\sum_{g\in G_{\bullet}}\Phi_{\bullet}^{*}g\Phi_{\bullet}H_{\bullet,\bullet}=\Phi_{\bullet}^{*}\Phi_{\bullet}H_{\bullet,\bullet}=q_{\bullet},
\]
where we used that $\Phi_{\bullet}$ is invariant under $g\in G_{\bullet}$.
Similarly, we also have $H_{1,g5}\neq0$ only if $g\in G_{\bullet}$.
Therefore for the $1\times2$ block $[H_{\rho}]_{\bullet,\circ}$
we obtain 
\begin{align*}
[H_{\rho}]_{\bullet,\circ} & =\frac{1}{\sqrt{\left|G_{\bullet}\right||G_{\circ}|}}\sum_{g\in G_{\bullet}}\Phi_{\bullet}^{*}g\Phi_{\circ}H_{\bullet,\circ}\\
 & =\frac{1}{\sqrt{\left|G_{\bullet}\right||G_{\circ}|}}\sum_{g\in G_{\bullet}}(g^{-1}\Phi_{\bullet})^{*}\Phi_{\circ}H_{\bullet,\circ}\\
 & =\frac{\sqrt{|G_{\bullet}|}}{\sqrt{|G_{\circ}|}}\left(\Phi_{\bullet}^{*}\Phi_{\circ}\right)H_{\bullet,\circ}=\frac{\sqrt{6}}{\sqrt{2}}\begin{pmatrix}1/\sqrt{3} & \sqrt{2/3}\end{pmatrix}a=\begin{pmatrix}a & a\sqrt{2}\end{pmatrix}.
\end{align*}
Finally, the group elements contributing to the block $[H_{\rho}]_{\circ,\circ}$
according to (\ref{eq:Quotient_formula_explicit}) are $g=e$, $(3\,4)$,
$(1\,2)$ and $(3\,4)(1\,2)$ (for all other $g$, $H_{5,g5}=0$).
We split the summation into two parts and use the invariance properties
\[
\rho_{(3\,4)}\Phi_{\circ}=\Phi_{\circ},\qquad\rho_{(3\,4)(1\,2)}\Phi_{\circ}=\rho_{(1\,2)}\Phi_{\circ}
\]
to get 
\[
[H_{\rho}]_{\circ,\circ}=\frac{1}{|G_{\circ}|}\left(2\left(\Phi_{\circ}^{*}\Phi_{\circ}\right)q_{\circ}+2\left(\Phi_{\circ}^{*}\rho_{(1\,2)}\Phi_{\circ}\right)b\right)=\begin{pmatrix}q_{\circ}+b & 0\\
0 & q_{\circ}-b
\end{pmatrix}.
\]
Collating all these blocks together we get 
\begin{equation}
H_{\rho}=\begin{pmatrix}q_{\bullet} & a & a\sqrt{2}\\
a & q_{\circ}+b & 0\\
a\sqrt{2} & 0 & q_{\circ}-b
\end{pmatrix}.\label{eq:tetra_Hrho_answer}
\end{equation}
It is interesting to observe that the vertices $\bullet$ and $\circ$
from the fundamental domain are not equally represented in the quotient
operator (the vertex $\bullet$ partially disappeared, as $d_{\bullet}<r$).
We will see below more quotients of this kind (and even a case when
$\bullet$ disappears completely from the quotient when $d_{\bullet}=0$).

\begin{figure}
\centering \includegraphics{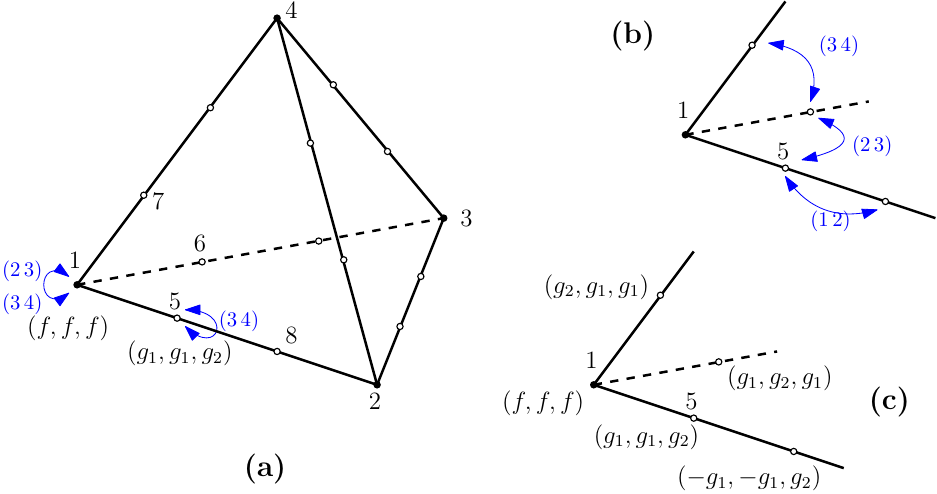} \caption{Computing the intertwiner rows in Example~\ref{Ex: Reducing points}.}
\label{fig:tetra_hom}
\end{figure}

We can also work out the intertwiners pictorially, as in the previous
examples; Figure~\ref{fig:tetra_hom} depicts the initial steps.
We note that in the first step, we observe that the number of the
free parameters describing the rows $\phi_{1,\cdot}$ and $\phi_{5,\cdot}$
are reduced because these rows must be invariant with respect to multiplication
by $\rho_{(2\,3)},\rho_{(3\,4)}$ and $\rho_{(3\,4)}$ correspondingly.
This step is analogous to finding the matrics $\Phi_{\bullet}$ and
$\Phi_{\circ}$ in equation~\eqref{eq:tetraPhi}.

To give a numerical example, we take the standard Laplacian (corresponding
to the choice $a=b=-1$, $q_{\bullet}=3$, $q_{\circ}=2$ in \eqref{eq:tetraH_again}),
and get that the spectrum of the original operator $H$ is 
\begin{equation}
\spec(H)=\{0,\ 0.44,\ 0.44,\ 0.44,\ 1,\ 1,\ 2,\ 2,\ 2,\ 3,\ 3,\ 3,\ 4,\ 4.56,\ 4.56,\ 4.56\},\label{Eqn: Complete spectrum}
\end{equation}
while the spectrum of $H_{\rho}$ picks out some triply degenerate
eigenvalues 
\[
\spec(H_{\rho})=\{0.44,\ 2,\ 4.56\}.
\]

At this point we remark that the quotient $H_{\rho}$ is smaller than
the graph announced in Example~\ref{ex:tetra_intro}. Correspondingly,
it does not capture all of the triply degenerate eigenvalues from
\eqref{Eqn: Complete spectrum}. According to Proposition \ref{Prop:Spectral_property}(\ref{enu:Prop_Spectral_property_2}),
$\spec(H)$ is the union of spectra of all quotients with respect
to the irreducible representations (and each eigenvalue is obtained
with the multiplicity of the dimension of representation). The group
$S_{4}$ has five irreducible representations. Two of them (including
$\rho$ above) are three-dimensional, one is two-dimensional and two
are one-dimensional. The latter two are the sign representation and
the trivial representation.

The quotient with respect to the trivial representation is 
\begin{equation}
H_{\mathrm{triv}}=\begin{pmatrix}q_{\bullet} & \sqrt{3}a\\
\sqrt{3}a & q_{\circ}+b
\end{pmatrix}.
\end{equation}
The same values for $q_{\circ},q_{\bullet},a$ and $b$ as before
give the spectrum $\spec(H_{\mathrm{triv}})=\{0,\ 4\}$.

The remaining eigenvalues in (\ref{Eqn: Complete spectrum}) are obtained
from the 2d irrep $H_{\text{2d}}=q_{\circ}+b=(1)$, which captures
the doubly degenerate eigenvalue $1$ in $\spec(H)$, and the other
3d irrep of $S_{4}$, whose quotient 
\begin{equation}
H_{\hat{\rho}}=q_{\circ}-b=(3)\label{eq:Hhatrho}
\end{equation}
captures the remaining triply degenerate eigenvalue in $\spec(H)$.
Note that in these instances we have $d_{\circ}=1$ and $d_{\bullet}=0$,
which leads to the quotient operators being one-dimensional. Comparing
again with Example~\ref{ex:tetra_intro}, we note that the reflection
symmetry in Fig.~\ref{fig:discrete_tetra}(b) allows to represent
$H_{\sigma}$ in \eqref{eq:Hsigma_intro} as a direct sum $H_{\rho}\oplus H_{\hat{\rho}}$
of operators \eqref{eq:tetra_Hrho_answer} and \eqref{eq:Hhatrho}.
A more direct way to obtain $H_{\sigma}$ is outlined in Example~\ref{ex:tetrahedrom_all3}
--- and it involves using a subgroup of $S_{4}$ rather than the
full group of symmetries.

We have accounted for all the eigenvalues in (\ref{Eqn: Complete   spectrum})
by examining the spectra of four quotients. However, $S_{4}$ has
another irrep, the sign representation. A quick calculation shows
that 
\[
d=\dim(\Hom(V_{\sgn},\Vp))=\frac{1}{|G|}\sum_{g\in G}\chi_{\sgn}(g)\bar{\chi}_{\pi}(g)=0,
\]
where the last equality results from Schur's orthogonality relations
and $\chi$ denote the characters of the corresponding representations.
We could have also derived this by showing that $d_{\bullet}=d_{\circ}=0$.
This means that there is no quotient operator $H_{\sgn}$ and, indeed,
there are no associated eigenvalues in $\spec(H)$.
\end{example}

%%%%%%%%%%%%%%%%%%%%%%%%%%%%%%%%%%%%%%%%%%%

\begin{example}[Directed graphs]
In all the previous examples in this section we have taken the $\pi$-symmetric
operator $\Op$ to be unitarily diagonalizable. However, Definition~\ref{def:     matrix_form_of_quotient}
of the quotient operator does not require this to be the case. Let
us therefore take the following example, which is invariant under
the symmetry group $C_{2}=\{e,\thinspace r\}$. The matrix $\Op$
in (\ref{Eqn: Directed     operator}) is invariant under the exchange
of indices $3$ and $4$. 
\begin{equation}
\Op=\begin{pmatrix}2 & 1 & 0 & 0\\
1 & 2 & 0 & 0\\
0 & 1 & 1 & 0\\
0 & 1 & 0 & 1
\end{pmatrix},\hspace{30pt}\pi_{(3\,4)}=\begin{pmatrix}1 & 0 & 0 & 0\\
0 & 1 & 0 & 0\\
0 & 0 & 0 & 1\\
0 & 0 & 1 & 0
\end{pmatrix},\hspace{30pt}\Op_{\mathrm{triv}}=\begin{pmatrix}2 & 1 & 0\\
1 & 2 & 0\\
0 & \sqrt{2} & 1
\end{pmatrix}\label{Eqn: Directed operator}
\end{equation}

The connections of $\Op$ (the non-zero off-diagonal entries) can
be interpreted in terms of the directed graph given in Figure \ref{fig: Directed graphs}
(a).

\begin{figure}[ht]
\centerline{(a) \includegraphics[width=0.35\textwidth]{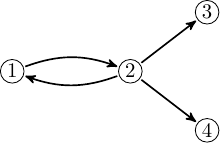}
\hspace{70pt} (b) \includegraphics[width=0.35\textwidth]{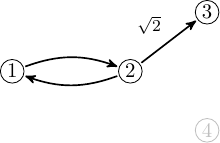}}
\caption{(a) Directed graph associated to the adjacency matrix $\protect\Op$
and (b) the directed graph associated to the quotient operator $\protect\Op_{\rho}$.}
\label{fig: Directed graphs}
\end{figure}

If we select the fundamental domain to consist of the vertices $1,2$
and $3$ and choose the trivial representation then the formula (\ref{eq:Quotient_formula_explicit})
gives, for instance, 
\[
[\Op_{\mathrm{triv}}]_{32}=\frac{1}{\sqrt{|G_{2}||G_{3}|}}(\Op_{32}+\Op_{42})=\frac{2}{\sqrt{2}}=\sqrt{2}.
\]
The complete matrix is displayed above in (\ref{Eqn: Directed operator})
and the corresponding graph is illustrated in Figure \ref{fig: Directed graphs},(b).
One sees immediately the resemblance between the original and quotient
operators. Furthermore we have the decomposition into Jordan normal
form of $\Op=PJP^{-1}$, where

\[
P=\frac{1}{4}\begin{pmatrix}2 & 0 & 2 & 0\\
2 & 0 & -2 & 0\\
1 & -2 & -1 & 0\\
1 & -2 & -1 & 1
\end{pmatrix}\hspace{50pt}J=\begin{pmatrix}3 & 0 & 0 & 0\\
0 & 1 & 1 & 0\\
0 & 0 & 1 & 0\\
0 & 0 & 0 & 1
\end{pmatrix}
\]
and similarly for $\Op_{\mathrm{triv}}=QJ_{\mathrm{triv}}Q^{-1}$,
where 
\[
Q=\frac{1}{4}\begin{pmatrix}2 & 0 & 2\\
2 & 0 & -2\\
\sqrt{2} & -2\sqrt{2} & -\sqrt{2}
\end{pmatrix}\hspace{50pt}J_{\mathrm{triv}}=\begin{pmatrix}3 & 0 & 0\\
0 & 1 & 1\\
0 & 0 & 1
\end{pmatrix}.
\]
Hence both $\Op$ and $\Op_{\mathrm{triv}}$ are non-diagonalizable.
Nevertheless we see that the spectral relation (\ref{eq:spectral_property_of_quotient})
given in Proposition \ref{Prop:Spectral_property} still holds. For
example, in this case, taking the spectral parameter $\lambda=1$
we have 
\begin{equation}
\Hom(V_{\rho},\Espace{\Op}1(1))\cong\Espace{\Op_{\rho}}1(1).\label{eq:Eigenspace_of_digraph}
\end{equation}
Note that the $\lambda=1$ eigenspace of $\Op$ is of dimension $2$
($\dim\Espace{\Op}1(1)=2$), but there is only one eigenvector (up
to scaling) in this eigenspace that transforms according to the trivial
representation. This eigenvector is explicitly given as the second
column of $P$ and it corresponds to the eigenvector of $\Op_{\mathrm{triv}}$
with the same eigenvalue (the second column of $Q$). Hence the vector
spaces in \eqref{eq:Eigenspace_of_digraph} are one-dimensional.

Similarly, $\Op$ has a $\lambda=1$ generalized eigenvector of rank
$2$, which transforms under the trivial representation. This is the
third column vector of $P$ which corresponds to the third column
vector of $Q$. Namely, the second and third column vectors of $P$
span $\Hom(V_{\rho},\Espace{\Op}2(1))$ and similarly the second and
third column vectors of $Q$ span the isomorphic space $\Espace{\Op_{\rho}}2(1)$.
\end{example}

\begin{example}[The quaternion group]
\label{Ex: The quaternion group} The current example has some interesting
physical content. Within quantum chaos, Gaussian Symplectic Ensembles
(GSE) statistics have typically been associated with the distribution
of energy levels in complex quantum systems possessing a half-integer
spin. However, recently it has been demonstrated that one may still
observe these statistics even without spin (i.e. the wavefunctions
have a single component) \cite{Joy14}, leading to the first experimental
realization of the GSE \cite{Rehemanjiang-2016}. The example in \cite{Joy14}
was achieved by obtaining a suitable quotient of a quantum graph with
a symmetry corresponding to the quaternion group $Q_{8}$. Here we
provide an analogous example using discrete graphs to highlight some
interesting properties of this quotient operator.

The quaternion group is given by the following eight group elements
\[
Q_{8}:=\{\pm1,\pm\ii,\pm\jj,\pm\kk:\ii^{2}=\jj^{2}=\kk^{2}=\ii\jj\kk=-1\}
\]
and can be generated by the two elements $\ii$ and $\jj$. We construct
a discrete graph which is symmetric under the action of $Q_{8}$.
The graph is shown in Figure \ref{fig: quaternion group} (a); it
has 24 vertices and edge weights $a,b,c,d\in\R$ such that $a\neq b$
and $c\neq d$ (one may also include vertex potentials as long as
the symmetry is retained). The graph is based on the Cayley graph
of $Q_{8}$ with respect to the generating set $\{\ii,\jj\}$ (see
e.g. \cite{Joy14}). To this Cayley graph we add a vertex at the middle
of each edge, to obtain the graph in Figure \ref{fig: quaternion group}
(a). We take a representation $V_{\pi}=\C^{24}$, which acts by permuting
the eight vertices $\{\pm1,~\pm\ii,~\pm\jj,~\pm\kk\}$ by left action
and permuting all other vertices accordingly. There is a real-symmetric
weighted adjacency matrix $A$ associated to the graph in \ref{fig: quaternion group}
(a) and this operator $A$ is $Q_{8}$-symmetric.

\begin{figure}[ht]
\centerline{(a) \includegraphics[width=0.4\textwidth]{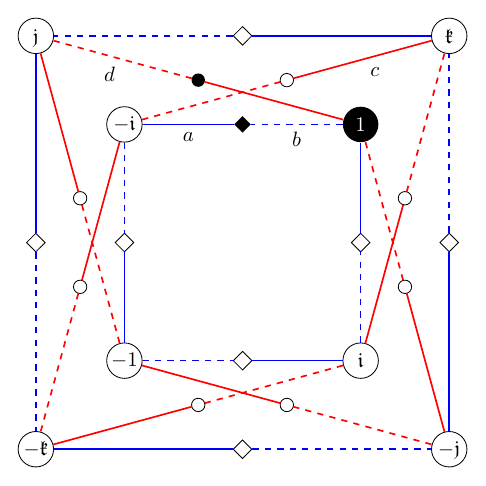}
\hspace{20pt} (b) \includegraphics[width=0.4\textwidth]{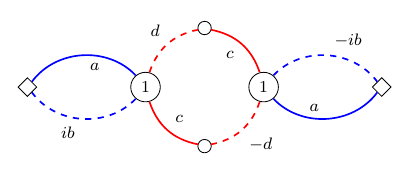}}
\caption{(a) Discrete graph with 24 vertices and edge weights $a,b,c$ and
$d$ associated to blue solid, blue dashed, red solid and red dashed
lines respectively. The vertices taken to form the fundamental domain
are marked in black. (b) The quotient graph associated to the irreducible
representation given in (\ref{Eqn: Q8 irrep}).}
\label{fig: quaternion group}
\end{figure}

The irreducible representation of $Q_{8}$, in which we are interested,
is generated by the following $2\times2$ matrices 
\begin{equation}
\ii\mapsto\begin{pmatrix}\ui & 0\\
0 & -\ui
\end{pmatrix},\hspace{30pt}\jj\mapsto\begin{pmatrix}0 & 1\\
-1 & 0
\end{pmatrix}.\label{Eqn: Q8 irrep}
\end{equation}
We choose three vertices to form the fundamental domain under the
group action (see the vertices marked in black in Figure \ref{fig: quaternion group}
(a)). Using the formula (\ref{eq:Quotient_formula_explicit}) we obtain
a quotient operator comprised of nine $2\times2$ blocks, given by
\begin{equation}
A_{\rho}=\begin{pmatrix}0 & a\Id_{2}+b\ii & c\Id_{2}+d\jj\\
a\Id_{2}+b\ii^{*} & 0 & 0\\
c\Id_{2}+d\jj^{*} & 0 & 0
\end{pmatrix}.\label{Eqn: Quaternion quotient}
\end{equation}
One may, equally validly, view each of the 6 indices as separate vertices,
in which case one has the corresponding graph shown in Figure \ref{fig: quaternion group},(b).

In contrast to the original operator $A$, the quotient operator $A_{\rho}$
is no longer real-symmetric. Instead, it is what is known as quaternion
self-dual - a generalization of self-adjointness to quaternions. Specifically,
a matrix $B$ is said to be quaternion self-dual if it has quaternion-real
entries, i.e. $B_{nm}=a_{nm}^{(1)}1+a_{nm}^{(\ii)}\ii+a_{nm}^{(\jj)}\jj+a_{nm}^{(\kk)}\kk$
where the $a_{i}$ are real coefficients, and $B_{mn}=\overline{B_{nm}}=a_{nm}^{(1)}1-a_{nm}^{(\ii)}\ii-a_{nm}^{(\jj)}\jj-a_{nm}^{(\kk)}\kk$
for all $n,m$. Here, the $2\times2$ blocks in $A_{\rho}$ are complex
representations of quaternion-real elements and the quotient operator
is quaternion self-dual since $g^{*}=-g$ for $g=\ii,\jj,\kk$. Interestingly,
a quaternion-self dual matrix will have a two-fold degeneracy in the
spectrum, which, in the physics literature, is known as Kramer's degeneracy
(see e.g. Chapter 2 of \cite{Haake-2010}).

This difference in structure between the original operator $A$ and
the quotient operator $A_{\rho}$ has important ramifications in quantum
chaos. This is because (see e.g \cite{Haake-2010} for more details)
the statistical distribution of eigenvalues in quantum systems, whose
classical counterparts are chaotic, typically align with the eigenvalues
of the Gaussian Orthogonal Ensemble (GOE) when the operator is real-symmetric
and the GSE when the operator is quaternion self-dual.
\end{example}

\section{Quantum graph quotients\label{sec:QuantumGraphs}}

Differential operators on metric graphs, also known as ``quantum
graphs'' arise in mathematical physics and spectral geometry either
as standalone models of interest or as limits of operators on thin
branching domains. We keep the description of quantum graphs below
as condensed as possible. A reader who is unfamiliar with quantum
graphs has many sources to choose from, in particular the elementary
introduction \cite{Ber_crm17}, the introductory exercises \cite{MR3880377},
the review \cite{GnuSmi_ap06} and the books \cite{Post_book12,BerKuc_graphs,Mugnolo_book}.

The Laplacian (or the Schrödinger operator) on a quantum graph is
an unbounded operator on an infinite-dimensional Hilbert space and
would not appear to fit within our current framework. However a quotient
of a quantum graph has already been defined in \cite{Ban09,Par10}
and in this section we show that we can indeed extend our methods
to this setting. The result is a more explicit and compact construction
than those given in Section 6 in \cite{Ban09} and Section 4.2 in
\cite{Par10}. Moreover, the quotient quantum graphs we obtain here
are always self-adjoint --- thus answering a question that was left
open in \cite{Ban09,Par10}.

%%%%%%%%%%%%%%%%%%%%%%%%%%%%%%%%%%%%%%%%%%%%%%%%

\subsection{Quantum graphs in a nutshell}

\label{sec:qg_review}

Consider a graph $\Gamma$ with finite vertex and edge sets $\V$
and $\E$. If we associate a length $l_{e}>0$ to each edge $e\in\E$
the graph $\Gamma$ becomes a \emph{metric graph}, which allows us
to identify points $x\in[0,l_{e}]$ along each edge. To do this, we
need to assign a direction to each edge which, at the moment, we can
do arbitrarily.

We now consider functions and operators on the metric graph $\Gamma$.
To this end we take our function space to be the direct sum of spaces
of functions defined on each edge 
\begin{equation}
L^{2}(\Gamma):=\oplus_{e\in\E}L^{2}(e):=\oplus_{e\in\E}L^{2}([0,l_{e}])\quad\mbox{and}\quad H^{2}(\Gamma):=\oplus_{e\in\E}H^{2}(e):=\oplus_{e\in\E}H^{2}([0,l_{e}])\label{eq:graph_spaces}
\end{equation}
and choose some Schrödinger type operator defined on $H^{2}(\Gamma)$,
\begin{equation}
\Op:=-\triangle+q\thinspace:\thinspace\begin{pmatrix}f_{e_{1}}\\
\vdots\\
f_{e_{|\E|}}
\end{pmatrix}\mapsto\begin{pmatrix}-f_{e_{1}}''+q_{e_{1}}f_{e_{1}}\\
\vdots\\
-f_{e_{|\E|}}''+q_{e_{|\E|}}f_{e_{|\E|}}
\end{pmatrix},\label{eq:Q_graph operator}
\end{equation}
where $f_{e}$ is the component of a function $f$ on $\Gamma$ to
the edge $e\in\E$.

To make the operator self-adjoint we must further restrict the domain
of the operator by introducing vertex conditions, formulated in terms
of the values of the functions and their derivatives at the edge ends.
To formulate these conditions, we introduce the ``trace'' operators
$\gamma_{D},\gamma_{N}:H^{2}(\Gamma)\to\C^{2|\E|}$ which extract
the Dirichlet (value) and Neumann (derivative) boundary data from
a function $f$ defined on a graph, 
\begin{equation}
\gamma_{D}(f):=\begin{pmatrix}f_{e_{1}}(0)\\
f_{e_{1}}(l_{e_{1}})\\
%f_{e_{2}}(0)
\vdots\\
f_{e_{|\E|}}(0)\\
f_{e_{|\E|}}(l_{e_{|\E|}})
\end{pmatrix}\quad\hbox{and}\quad\gamma_{N}(f):=\begin{pmatrix}f'_{e_{1}}(0)\\
-f'_{e_{1}}(l_{e_{1}})\\
\vdots\\
f'_{e_{|\E|}}(0)\\
-f'_{e_{|\E|}}(l_{e_{|\E|}})
\end{pmatrix}\ ;\label{eq:vertex_traces}
\end{equation}
The vertex conditions are implemented by the condition 
\begin{equation}
(A\thinspace|\thinspace B)\begin{pmatrix}\gamma_{D}(f)\\
\gamma_{N}(f)
\end{pmatrix}=A\gamma_{D}(f)+B\gamma_{N}(f)=0,\label{eq:vertex_conditions_global}
\end{equation}
where $A$ and $B$ are $2|\E|\times2|\E|$ complex matrices. We shall
write $\Dom(\Op)$ for the domain of our operator, i.e.\ the set
of functions $f\in H^{2}(\Gamma)$ such that (\ref{eq:vertex_conditions_global})
is satisfied. The connectivity of the graph is retained by requiring
that only values and/or derivatives at the points corresponding to
the same vertex are allowed to enter any equation of conditions \eqref{eq:vertex_conditions_global}.
In particular this means we have the decomposition $A=\bigoplus_{v\in\V}A^{(v)}$,
and similarly for $B$. It was shown by Kostrykin and Schrader \cite{Kostrykin-1999}
that the operator is self-adjoint if and only if the following conditions
hold
\begin{enumerate}
\item \label{Itm: QG cond 1} the $2|\E|\times4|\E|$ matrix $(A|B)$ is
of full rank, i.e. $\Rank(A|B)=2|\E|$.
\item \label{Itm: QG cond 2} $AB^{*}$ is self adjoint.
\end{enumerate}
\begin{rem}
\label{Rem: Vertex condition equivalence} One obtains a unitarily
equivalent quantum graph, specified by the conditions $\tilde{A}=CA$
and $\tilde{B}=CB$, if and only if $C\in GL(\C^{2|\E|})$, \cite{Kostrykin-1999}.
\end{rem}

A compact quantum graph (i.e. a graph with a finite number of edges
and a finite length for each edge) with bounded potential $q$ has
a discrete infinite eigenvalue spectrum bounded from below, which
we denote by 
\begin{equation}
\spec(\Gamma):=\left\{ k^{2}\thinspace:\thinspace-\triangle f+qf=k^{2}f,\ (\gamma_{D}(f),\gamma_{N}(f))^{T}\in\ker(A|B)\right\} .\label{Eqn: QG spectrum}
\end{equation}

A common choice of vertex conditions are Kirchhoff--Neumann conditions:
for every vertex the values at the edge-ends corresponding to this
vertex are the same and the derivatives (with the signs given in \eqref{eq:vertex_traces})
add up to zero. In particular, at a vertex of degree 2 these conditions
are equivalent to requiring that the function is $C^{1}$ across the
vertex. Conversely, each point on an edge may be viewed as a ``dummy
vertex'' of degree 2 with Neumann-Kirchhoff conditions.

%%%%%%%%%%%%%%%%%%%%%%%%%%%%%%%%%%%%%%%%%%%%%%%%%%

\subsection{Quantum graphs and symmetry}

We now extend the notion of a $G$-symmetric operator (Definition
\ref{def:sym_operator}) to quantum graphs. To motivate our definition
informally consider a graph $\Gamma$ with some symmetries: transformations
on the metric graph that map vertices to vertices while preserving
the graph's metric structure. We can assume, without loss of generality,
that an edge is never mapped to its own reversal.\footnote{If such an edge exists we can always introduce a dummy vertex in the
middle of it, splitting the edge into two (which are now mapped to
each other). Introducing dummy vertices at midpoints of \emph{every}
such edge restores the property of mapping vertices to vertices.} If this condition is satisfied, it is easy to see that the edge directions
(as described in Section \ref{sec:qg_review}) may be assigned so
that they are preserved under the action of the symmetry group. From
now on we will denote by $ge=e'$ the action of an element $g$ of
the symmetry group $G$ on the edges with their assigned direction.
Now the ``preservation of metric structure'' mentioned above is
simply the condition that the edge lengths are fixed by all $g\in G$,
i.e. $l_{ge}=l_{e}$. This allows us to pointwise compare functions
defined on two $G$-related edges. We are now ready to introduce the
notion of a $G$-symmetric quantum graph.
\begin{defn}
\label{def:QG_pi_symmetric} Let $\Vp:=\C^{\E}$ be the vector space
of functions $\E\to\C$ and let $\pi:G\rightarrow\GL(\Vp)$ be a group
homomorphism such that for each $g\in G$, $\pi(g)$ is a permutation
matrix. A quantum graph $(\Gamma,T)$ is $G$-symmetric (with respect
to the representation $\pi$) if for all $g\in G$
\begin{enumerate}
\item \label{enu:QG_symmetric_lengths} the edge lengths are preserved:
$l_{ge}=l_{e}$ for all $e\in\E$,
\item \label{enu:QG_symmetric_potential} the potential is preserved: $q_{ge}\equiv q_{e}$
for all $e\in\E$,
\item \label{enu:QG_symmetric_domain} the operator domain is preserved:
\begin{equation}
f=\begin{pmatrix}f_{e_{1}}\\
\vdots\\
f_{e_{|\E|}}
\end{pmatrix}\in\Dom(\Op)\qquad\Leftrightarrow\qquad\pi(g)f:=\begin{pmatrix}f_{g^{-1}e_{1}}\\
\vdots\\
f_{g^{-1}e_{|\E|}}
\end{pmatrix}\in\Dom(\Op).\label{eq:VC_preserved}
\end{equation}
\end{enumerate}
\end{defn}

Some comments are in order. Since $l_{e}=l_{ge}$, there is a natural
isomorphism $H^{2}(e)\cong H^{2}(ge)$ for every $g\in G$. This allows
us to define the action of $\pi(g)$ on the function space $H^{2}(\Gamma)$
(or any other similar space on $\Gamma$) as in the right-hand side
of \eqref{eq:VC_preserved}. Informally, this just means that the
transformation $\pi(g)$ takes the function from an edge $e$ and
places it on the edge $ge$. In ``placing'' the function, we implicitly
use the isomorphism mentioned above. In particular, the condition
on the potential $q=(q_{e_{1}},\ldots,q_{e_{|\E|}})^{T}$ in Definition~\ref{def:QG_pi_symmetric}
can be now written as $\pi(g)q=q$ for any $g\in G$.

We note that the action of $\pi(g)$ above coincides with the result
of the formal multiplication of the vector $f$ by the permutation
matrix $\pi(g)$. This multiplication is ``formal'' because different
entries of $f$ belong to different spaces and there is no a priori
way to shuffle them or create linear combinations. It is only through
symmetry of the graph and the resulting isomorphism between different
edge spaces that we get a meaningful result. We will return to this
point in the proof of Theorem~\ref{thm:commutative_diagram-QG} below.

As a corollary of the discussion above we get that if $\Gamma$ is
a $G$-symmetric graph (with respect to a representation $\pi$) then
the function spaces $L^{2}(\Gamma)$ and $H^{2}(\Gamma)$ are $G$-modules,
and so is also $\Dom(\Op)\subset H^{2}(\Gamma)$.

Moving on to the requirement on the domain of $T$, one can easily
check that 
\begin{equation}
\gamma_{D}(\pi(g)f)=\hat{\pi}(g)\gamma_{D}(f),\qquad\text{where }\hat{\pi}(g):=\pi(g)\otimes\Id_{2},\label{eq:trace_transformation}
\end{equation}
and similarly for the Neumann trace, $\gamma_{N}$. In view of description
\eqref{eq:vertex_conditions_global} of the domain $\Dom(T)$, condition~\eqref{eq:VC_preserved}
is equivalent to 
\begin{equation}
AF+BF'=0\qquad\Leftrightarrow\qquad A\hat{\pi}(g)F+B\hat{\pi}(g)F'=0\qquad\text{for all }g\in G\text{ and }F,F'\in\C^{2|\E|}.\label{eq:symmetric_VC}
\end{equation}

Finally, we remark that, in general, the matrices $A$ and $B$ are
not $G$-symmetric (see Example \ref{Ex: quotient_quantum_graph}).
But there is always an equivalent choice that is $G$-symmetric (with
respect to the permutation representation $\hat{\pi}$).
\begin{lem}
\label{lem:invariant_AB} Let $A$ and $B$ define vertex conditions
on a $G$-symmetric quantum graph. Then the matrices 
\begin{equation}
\tA=(A+\rmi B)^{-1}A\qquad\mbox{and}\qquad\tB=(A+\rmi B)^{-1}B.\label{eq:invariant_AB}
\end{equation}
define equivalent vertex conditions. Furthermore, $\tA$ and $\tB$
are $G$-symmetric, i.e., 
\begin{equation}
\tA\hat{\pi}(g)=\hat{\pi}(g)\tA\qquad\mbox{and}\qquad\tB\hat{\pi}(g)=\hat{\pi}(g)\tB\qquad\text{for all }g\in G.\label{eq:pi-invariance_def}
\end{equation}
\end{lem}

\begin{proof}
It is shown in \cite{Kostrykin-1999} (see also \cite[lemma~1.4.7]{BerKuc_graphs})
that the matrix $A+\rmi B$ is invertible. Therefore (see Remark \ref{Rem: Vertex condition equivalence})
$\widetilde{A}$ and $\widetilde{B}$ define equivalent vertex conditions.
To show (\ref{eq:pi-invariance_def}) we observe that since $\tA+\rmi\tB=\Id$
commutes with $\hat{\pi}(g)$ for any $g\in G$, it is enough to show
that $\tA-\rmi\tB$ also commutes with $\hat{\pi}(g)$ for any $g\in G$.

Take an arbitrary vector $a\in\C^{2|\E|}$ and let 
\begin{equation}
b=(\tA-\rmi\tB)a.\label{eq:b_vector_def}
\end{equation}
Writing out the definitions of $\tA$ and $\tB$ from (\ref{eq:invariant_AB})
and using \eqref{eq:symmetric_VC} we get for all $g\in G$ 
\begin{align*}
b=(\tA-\rmi\tB)a~~ & \Leftrightarrow~~b=(A+\rmi B)^{-1}(A-\rmi B)a\\
 & \Leftrightarrow A(b-a)+B(\rmi b+\rmi a)=0\\
 & \Leftrightarrow A\hat{\pi}(g)(b-a)+B\hat{\pi}(g)(\rmi b+\rmi a)=0\\
 & \Leftrightarrow(A\hat{\pi}(g)+\rmi B\hat{\pi}(g))b=(A\hat{\pi}(g)-\rmi B\hat{\pi}(g))a\\
 & \Leftrightarrow b=\hat{\pi}(g)^{*}(A+\rmi B)^{-1}(A-\rmi B)\hat{\pi}(g)a.
\end{align*}

We conclude that for every $a$ 
\begin{equation}
(\tA-\rmi\tB)a=\hat{\pi}(g)^{*}(\tA-\rmi\tB)\hat{\pi}(g)a,\label{eq:ab_play_result}
\end{equation}
which establishes the desired invariance.
\end{proof}
\begin{rem}
\label{rem:AB_symmetric} In equation~\eqref{eq:vertex_conditions_global},
the rows of $A$ and $B$ correspond to the restrictions imposed on
the domain of the operator. Unlike the columns of $A$ and $B$, the
rows sre not a priori related to any particular location in the graph.

Once symmetrized by \eqref{eq:invariant_AB}, $\tA$ and $\tB$ can
be properly viewed as operators on $\C^{\E}\otimes\C^{2}=\C^{2|\E|}$.
The latter space is a $G$-module with $g$ acting by $\pi(g)\otimes\Id_{2}$.
When $\tA$ and $\tB$ are $G$-symmetric, we can take their quotient
with respect to a representation $\Vr$.

From now on we will assume that our chosen $A$ and $B$ have already
been symmetrized using \eqref{eq:invariant_AB}. Their rows and columns
will be labeled by the index set $\big\{1,\o 1,2,\o 2,\ldots|\E|,\o{|\E|}\big\}$.
We will also use the following notation for $2\times2$ blocks of
$A$ corresponding to \emph{edges} $e_{i}$, $e_{j}$: 
\begin{equation}
A_{i,j}^{\otimes}=\begin{pmatrix}A_{i,j} & A_{i,\o j}\\
A_{\o i,j} & A_{\o i\o j}
\end{pmatrix},\label{eq:ABblock_notation}
\end{equation}
and similarly for $B$.
\end{rem}

%%%%%%%%%%%%%%%%%%%%%%%%%%%%%%%%%%%%%%%%

\subsection{Quotient quantum graph\label{subsec:Quotient-QG}}

We start from a graph $\Gamma$ with a corresponding operator $\Op:\Dom(\Op)\to L^{2}(\Gamma)$
which is $G$-symmetric. For any $G$-module $\Vr$ we will describe
a new graph $\Gamma_{\rho}$ with a corresponding operator $\Op_{\rho}$
(see Definition \ref{def:quotient_QG} below). In Theorem \ref{thm:commutative_diagram-QG}
we will see that $\Op_{\rho}$ is self-adjoint and unitarily equivalent
to $\left.T\right|_{\Hom(\Vr,\Dom(\Op))}$, which is analogous to
Theorem \ref{thm: Quotient fundamental property} for discrete graph
operators.

As described in the previous subsection, both $\Dom(\Op)$ and $L^{2}(\Gamma)$
are $G$-modules. Since $G$ acts on the edge set $\E$ of $\Gamma$
(this action is given by the permutation representation $\pi$), we
may choose a fundamental domain $\D=\{e_{1},e_{2}\ldots,e_{|\D|}\}$
by selecting precisely one edge $e_{i}$ from each orbit $O_{i}=\{e_{k}\in\E:\exists g\in G,e_{k}=ge_{i}\}$.
For each $e_{i}\in\D$, consider the stabilizer group $G_{i}=\set{g\in G}{ge_{i}=e_{i}}$
and $(\Vr)^{G_{i}}$, the $G_{i}$-invariant subspace of $\Vr$. Denote
$d_{i}:=\dim(\Vr)^{G_{i}}$ and take $d_{i}$ copies of the directed
edge $e_{i}$, i.e., these are $d_{i}$ intervals, each of length
$l_{e_{i}}$. Denote these by $\{e_{i,j}\}_{j=1,\ldots,d_{i}}$. Furthermore,
assume that the matrices $A,~B$ describing the vertex conditions
of $\Gamma$ are $G$-symmetric with respect to the permutation representation
$\hat{\pi}(g)=\pi(g)\otimes\Id_{2}$ (see Lemma \ref{lem:invariant_AB}).
\begin{defn}[Quotient quantum graph]
\label{def:quotient_QG} Let $(\Gamma,T)$ be a $G$-symmetric quantum
graph, where $G$ is a finite group and let $\Vr$ be a $G$-module.
Referring to the description before this definition, we define the
quotient graph $\Gamma_{\rho}$ to be a metric graph formed from the
edges $\{e_{i,j}\colon e_{i}\in\D,\ j=1,\ldots,d_{i}\}$ of length
$l_{e_{i,j}}=l_{e_{i}}$. We define the operator $\Op_{\rho}$ by
the differential expression $-d^{2}/dx^{2}+q_{e_{i}}$ acting on the
edge $e_{i,j}$, with the vertex conditions specified by the matrices
$A_{\rho}$ and $B_{\rho}$, which are the quotients of $A$ and $B$
by $\Vr$.
\end{defn}

\begin{rem}
\label{rem:compute_Arho} To give an explicit description of $A_{\rho}$
and $B_{\rho}$, we recall formula (\ref{eq:Quotient_formula_explicit}).
For each $e_{i}\in\D$ pick $\{\varphi_{i}^{(n)}\}_{n=1}^{d_{i}}$
to be an orthonormal basis for $(\Vr)^{G_{i}}$ and form an $r\times d_{i}$
matrix $\Phi_{i}$ whose columns are these basis vectors ($r=\dim\rho$).
With respect of the action of $G$ on $\C^{2|\E|}=\Vp\otimes\C^{2}$,
we note that $G_{i}=G_{\o i}$ and therefore $d_{i}=d_{\o i}$ and
$\Phi_{i}=\Phi_{\o i}$.

Now, $A_{\rho}$ and $B_{\rho}$ are matrices comprised of $2|\D|\times2|\D|$
blocks with the $(z,w)$-th block, $z,w\in\big\{1,\o 1,\ldots,|\D|,\o{|\D|}\big\}$,
being the following $d_{z}\times d_{w}$ matrix 
\begin{align}
\left[A_{\rho}\right]_{z,w} & =\frac{1}{\sqrt{|G_{z}||G_{w}|}}\Phi_{z}^{*}\,\left(\sum_{g\in G}A_{z,gw}\,g\right)\,\Phi_{w}\label{eq: Quotient A matrix}\\
\left[B_{\rho}\right]_{z,w} & =\frac{1}{\sqrt{|G_{z}||G_{w}|}}\Phi_{z}^{*}\,\left(\sum_{g\in G}B_{z,gw}\,g\right)\,\Phi_{w}.\label{eq: Quotient B matrix}
\end{align}

Due to the order we assume on the entries of the boundary data in
\eqref{eq:vertex_traces}, the blocks computed in \eqref{eq: Quotient     A matrix}
and \eqref{eq: Quotient B matrix} are not contiguous within the matrices
$A_{\rho}$ and $A_{\rho}$. To overcome this problem, we can instead
use the notation introduced in \eqref{eq:ABblock_notation} and write
\begin{equation}
\left[A_{\rho}\right]_{i,j}^{\otimes}=\frac{1}{\sqrt{|G_{i}||G_{j}|}}\sum_{g\in G}\left(\Phi_{i}^{*}g\Phi_{j}\right)\otimes A_{i,gj}^{\otimes}\qquad\left[B_{\rho}\right]_{i,j}^{\otimes}=\frac{1}{\sqrt{|G_{i}||G_{j}|}}\sum_{g\in G}\left(\Phi_{i}^{*}g\Phi_{j}\right)\otimes B_{i,gj}^{\otimes},\label{eq:ABrho_tensor}
\end{equation}
where $i,j\in\{1,2,\ldots,|\D|\}$. In Example~\ref{Ex:     quotient_quantum_graph}
we will use this formula to compute the quotient.
\end{rem}

The operator $\Op_{\rho}$ defined above plays the same role as the
discrere graph quotient operator in Definition \ref{def:   matrix_form_of_quotient}.
To see this we note that if $\Vr$ is a $G$-module then $\Op$ acts
on elements of $\Hom(\Vr,\Dom(\Op))$ by composition and returns elements
of $\Hom(\Vr,L^{2}(\Gamma))$ (see (\ref{eq:T_acts_on_Hom_part1}),(\ref{eq:T_acts_on_Hom_part2})
for the finite dimensional case). The space $\Hom(\Vr,\Dom(\Op))$
warrants a deeper discussion as it is useful in computational examples
(see Example~\ref{ex:Tetra_QG}). It consists of the row vectors
$\phi=\left(\phi^{(1)},\ldots,\phi^{(r)}\right)$, $\phi^{(j)}\in\Dom(\Op)$,
$j=1,\ldots,r=\dim(\Vr)$, satisfying the intertwining condition 
\begin{equation}
\pi(g)\phi:=\left(\pi(g)\phi^{(1)},\ldots,\pi(g)\phi^{(r)}\right)=\phi g,\qquad\forall\ g\in G,\label{eq:intertwining_cond_qg}
\end{equation}
where each $\phi^{(j)}$ can be visualized as a column 
\[
\phi^{(j)}=\begin{pmatrix}\phi_{e_{1}}^{(j)},\\
\vdots\\
\phi_{e_{|\E|}}^{(j)}
\end{pmatrix},\qquad\phi_{e_{k}}^{(j)}\in H^{2}([0,l_{e_{k}}]).
\]
The action $\pi(g)\phi^{(j)}$ is defined in \eqref{eq:VC_preserved},
while $g$ on the right-hand side of \eqref{eq:intertwining_cond_qg}
denotes the action of $g$ on $\Vr$.

The space $\Hom(\Vr,L^{2}(\Gamma))$ is defined analogously. The Hilbert
space structure on

$\Hom(\Vr,L^{2}(\Gamma))$ is given by the Frobenius inner product
\begin{equation}
\langle\phi,\psi\rangle_{\Hom(\Vr,L^{2}(\Gamma))}=\Tr\left(\phi^{*}\psi\right):=\sum_{j=1}^{r}\sum_{e\in\E}\langle\phi_{e}^{(j)},\psi_{e}^{(j)}\rangle_{L^{2}([0,l_{e}])}.\label{eq:scalar_prod_Hom}
\end{equation}
 The action of $T$ mentioned above is 
\begin{equation}
T:\Hom(\Vr,\Dom(\Op))\to\Hom(\Vr,L^{2}(\Gamma))\qquad T:\left(\phi^{(1)},\ldots,\phi^{(r)}\right)\mapsto\left(T\phi^{(1)},\ldots,T\phi^{(r)}\right).\label{eq:T_action_Hom}
\end{equation}
For this operator, which we denote by $T\big|_{\Hom(\Vr,\Dom(\Op))}$,
we have the following.
\begin{thm}
\label{thm:commutative_diagram-QG} Let $G$ be a finite group, $(\Gamma,T)$
be a $G$-symmetric quantum graph (with the operator $T$ being self-adjoint)
and let $\Vr$ be a $G$-module. Then the operator $\Op_{\rho}$ is
self-adjoint and is unitarily equivalent to $\left.T\right|_{\Hom(\Vr,\Dom(\Op))}$.
\end{thm}

The proof of Theorem~\ref{thm:commutative_diagram-QG} follows the
spirit of the proof of Theorem~\ref{thm: Quotient fundamental property},
but has more technicalities due to the intertwiners being linear maps
into an infinite dimensional space. For this reason we defer it to
Appendix~\ref{app:qg_quotient}.

Theorem \ref{thm:commutative_diagram-QG} shows that the quotient
quantum graph we describe in Definition \ref{def:quotient_QG} is
also a quotient quantum graph according to \cite[Definition 1]{Par10}.
There are several advantages to the quotient construction in Definition
\ref{def:quotient_QG}. The first is that the construction method
of the quotient is simpler to implement and convenient for computer-aided
computation (by employing the formulas (\ref{eq:   Quotient A matrix}),(\ref{eq: Quotient B matrix})).
Another advantage is that our present construction ensures the self-adjointness
of the operator on the quotient graph and by this answers a question
which was left open in \cite{Ban09,Par10}. In addition, it is possible
to construct quotients for general Schrödinger operators and not just
the Laplacian. Finally, our construction extends to an alternative
description of quantum graphs by scattering matrices, as detailed
in the next section.

Note that the graph topology is described indirectly by the vertex
conditions, which are implemented through $A_{\rho}$ and $B_{\rho}$.
That these matrices satisfy the conditions reviewed in Section~\ref{sec:qg_review}
is established in Lemma \ref{lem:valid_sa}.

%%%%%%%%%%%%%%%%%%%%%%%%%%%%%%%%%%%%%%%%%%%%%%%%%%%%%%%%%

\subsection{Bond scattering matrix and quotients}

\label{subsec:Quotient-QG-scatter}

The Schrödinger operator of a quantum graph is not a finite dimensional
operator. However, there exists a so-called \emph{unitary evolution}
operator $U(k)=T(k)S(k)$ of finite dimension (equal to $2|\E|\times2|\E|$)
that can fully describe the spectrum. Here $T(k)$ is a \emph{bond
transmission matrix} and $S(k)$ a \emph{bond scattering matrix} expressed
in terms of the vertex condition matrices,

\begin{equation}
\forall k\in\mathbb{R},\quad\quad~~~S(k)=-\left(A+\rmi kB\right)^{-1}\left(A-\rmi kB\right)J,\label{eq:Vertex_Scattering_Matrix}
\end{equation}
where 
\begin{equation}
J=\Id_{|\E|}\otimes\left(\begin{smallmatrix}0 & 1\\
1 & 0
\end{smallmatrix}\right),\label{eq:J_Matrix}
\end{equation}
is a matrix that swaps between entries corresponding to start-points
and end-points of edges. Note that (\ref{eq:Vertex_Scattering_Matrix})
is well defined, as it is shown in \cite{Kostrykin-1999} (see also
\cite[lemma~1.4.7]{BerKuc_graphs}) that the matrix $A+\rmi kB$ is
invertible.

The matrix $S(k)$ is assured to be unitary if the operator is self-adjoint,
namely, if $(A|B)$ is of full rank and $AB^{*}$ is self adjoint
\cite[lemma 1.4.7]{BerKuc_graphs}. In addition, it might happen that
$S(k)$ is $k$-independent. This occurs, for example, when the vertex
conditions are either Kirchhoff-Neumann (see Example \ref{Ex:   quotient_quantum_graph})
or Dirichlet (a vertex with Dirichlet condition means that the function
vanishes at that vertex).

To understand the physical meaning of $S(k)$ recall that its rows
and columns are indexed by the set $\{1,\o 1,\ldots,|\E|,\o{|\E|}\}$.
We view $j$ as corresponding to a wave traveling along the edge $e_{j}$
in the direction we assigned to $e_{j}$ and $\o j$ as corresponding
to the wave travelling against the assigned direction. The entry $[S(k)]_{z,z'}$
then gives the quantum amplitude of wave scattering from the direction
$z'$ to direction $z$. In particular, $[S(k)]_{z,z'}\neq0$ only
if $z'$ is directed into some vertex $v$ and $z$ is directed out
of $v$. The matrix $S(k)$ therefore contains all the information
about the graph's connectivity and its vertex conditions.

In contrast, the transmission matrix $T(k)$ is a diagonal matrix
describing the wave evolution along the edges (in or against the assigned
direction). When the potential $q$ is zero, $T(k)=e^{ik\hat{L}}$,
where $\hat{L}:=\diag(l_{e_{1}},l_{e_{1}},\ldots,l_{e_{|\E|}},l_{e_{|\E|}})$
is a diagonal matrix of edge lengths, each listed twice (for the two
directions). The spectrum of the Laplacian (\ref{Eqn: QG   spectrum})
is then given by 
\begin{equation}
\spec(\Gamma)=\left\{ k^{2}\colon\det\left(\Id_{2|\E|}-U\left(k\right)\right)=0\right\} \ ,\label{eq:secular_cond}
\end{equation}
as was shown in \cite{Bel_laa85,KotSmi_prl97} (see also \cite[Theorem 2.1.8]{BerKuc_graphs}).

We may apply the theory of the current paper to compute the quotient
of the graph's scattering matrix. One then finds that taking the quotient
operator of the scattering matrix equals the scattering matrix of
the quotient graph. For simplicity we assume that the potential $q$
is zero.
\begin{prop}
\label{prop:scat_mat_fact} Let $(\Gamma,T)$ be a $G$-symmetric
quantum graph with zero potential, whose bond scattering matrix is
denoted by $S(k)$. Let $\Vr$ be a $G$-module and $(\Gamma_{\rho},T_{\rho})$
be the corresponding quotient graph as in Definition \ref{def:quotient_QG}.

Then, for all $k\in\C$, $S(k)$ and $U(k)$ are $G$-symmetric, i.e.,
\begin{equation}
\forall g\in G;\qquad S(k)\hat{\pi}(g)=\hat{\pi}(g)S(k)\qquad\mbox{and}\qquad U(k)\hat{\pi}(g)=\hat{\pi}(g)U(k).\label{eq: S(k) and U(k) are G-symmetric}
\end{equation}
In addition, the scattering matrix of $(\Gamma_{\rho},T_{\rho})$
is given by the quotient operator $S_{\rho}(k)$ and the unitary evolution
operator of $(\Gamma_{\rho},T_{\rho})$ is given by the quotient $U_{\rho}(k)$.
\end{prop}

The proof of this Proposition is given in Appendix~\ref{app:qg_quotient_scat}.

We end by pointing out that there is another meaning to the notion
`scattering matrix of a quantum graph'. This other scattering matrix
is formed by connecting semi-infinite leads to some of the graph vertices.
The dimension of this matrix is equal to the number of leads and each
of its entries equals the probability amplitude for a wave to scatter
from a certain lead to another. This amplitude is calculated by summing
over all possible paths through the graph leading from the first lead
to the second (including a direct transmission of the wave from the
first lead to the second, if it exists). This matrix is sometimes
called an exterior scattering matrix (to distinguish it from the edge-scattering
matrix discussed above). More on this matrix can be found in \cite{BanBerSmi_ahp12,KotSmi_jpa03}.

If a graph $\Gamma$ is $G$-symmetric and leads are attached to it
in a way which respects this symmetry, then the obtained exterior
scattering matrix, $\Sigma(k)$ inherits this symmetry. As a consequence,
we are able to construct a quotient of this scattering matrix with
respect to a representation $\rho$ of the symmetry group $G$. This
quotient matrix, $\Sigma_{\rho}(k)$ was shown in \cite{BanSawSmi_jpa10}
to equal the exterior scattering matrix of the quotient graph $\Gamma_{\rho}$,
with an appropriate connection of leads. Hence, the quotient theory
in the current paper may be used to recover the previously obtained
results in \cite{BanSawSmi_jpa10}.

%%%%%%%%%%%%%%%%%%%%%%%%%%%%%%%%%%%%%%%%%%%%%%%%%%%

\subsection{Quotient quantum graph examples}

\label{subsec:Quotient-QG-examle}
\begin{example}
\label{Ex: quotient_quantum_graph} Let $\Gamma$ be a star graph
with three edges, one edge of length $l_{1}$ and the other two of
length $l_{2}$ (see Figure \ref{Fig:Star_grah_and_quotient} (a)).
Equip all graph vertices with Neumann conditions. We observe that
the graph is symmetric with respect to exchanging the two edges with
the same length, i.e. $e_{2}\leftrightarrow e_{3}$. Hence, the symmetry
group we take is $C_{2}=\{\mathrm{I},\mathrm{R}:\mathrm{R}^{2}=\mathrm{I}\}$,
and the graph is $\pi$ symmetric, where the representation $\pi$
is 
\[
\pi(\mathrm{I})=\Id_{3},\hspace{50pt}\pi(\mathrm{R})=\left(\begin{smallmatrix}1 & 0 & 0\\
0 & 0 & 1\\
0 & 1 & 0
\end{smallmatrix}\right).
\]

\begin{figure}[ht]
\centerline{(a) \includegraphics[width=0.35\textwidth]{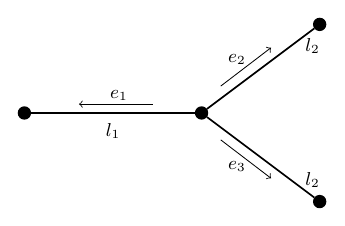}
\hspace{20pt} (b) \includegraphics[width=0.4\textwidth]{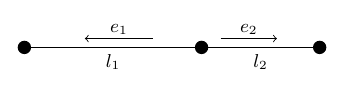}}
\caption{(a) A star graph $\Gamma$ with $C_{2}$ symmetry. (b) The corresponding
quotient graph $\Gamma_{\rho}$ with respect to the trivial representation
$\rho=\protect\triv$. Its vertex conditions are given in (\ref{Eqn: Quot graph vertex conditions}).}
\label{Fig:Star_grah_and_quotient}
\end{figure}

The vertex conditions of the graph may be described by (using the
notational convention as in (\ref{eq:vertex_conditions_global})):
\[
A=\left(\begin{smallmatrix}1 & 0 & -1 & 0 & 0 & 0\\
0 & 0 & 1 & 0 & -1 & 0\\
0 & 0 & 0 & 0 & 0 & 0\\
0 & 0 & 0 & 0 & 0 & 0\\
0 & 0 & 0 & 0 & 0 & 0\\
0 & 0 & 0 & 0 & 0 & 0
\end{smallmatrix}\right),\hspace{50pt}B=\left(\begin{smallmatrix}0 & 0 & 0 & 0 & 0 & 0\\
0 & 0 & 0 & 0 & 0 & 0\\
1 & 0 & 1 & 0 & 1 & 0\\
0 & 1 & 0 & 0 & 0 & 0\\
0 & 0 & 0 & 1 & 0 & 0\\
0 & 0 & 0 & 0 & 0 & 1
\end{smallmatrix}\right),
\]
Those matrices are not $\hat{\pi}$-symmetric (e.g., $A\hat{\pi}(R)\neq\hat{\pi}(R)A$).
With the aid of Lemma~\ref{lem:invariant_AB} we replace those by
the following $\hat{\pi}$-symmetric matrices which describe equivalent
vertex conditions: 
\[
\tA=(A+\rmi B)^{-1}A=\frac{1}{3}\left({\scriptsize
\global\long\def\arraystretch{0.8}%
\arraycolsep=.2em\begin{array}{cc|cc|cc}
2 & 0\, & -1 & \,0\, & -1 & 0\\
0 & 0\, & 0 & 0 & 0 & 0\\
\hline \rule{0pt}{8pt}-1 & 0\, & 2 & 0 & -1 & 0\\
0 & 0\, & 0 & 0 & 0 & 0\\
\hline \rule{0pt}{8pt}-1 & 0\, & -1 & 0 & 2 & 0\\
0 & 0\, & 0 & 0 & 0 & 0
\end{array}}\right),\qquad\tB=(A+\rmi B)^{-1}B=\frac{1}{3i}\left({\scriptsize
\global\long\def\arraystretch{0.8}%
\arraycolsep=.2em\begin{array}{cc|cc|cc}
1 & 0 & 1 & 0 & 1 & 0\\
0 & 3 & 0 & 0 & 0 & 0\\
\hline \rule{0pt}{8pt}1 & 0 & 1 & 0 & 1 & 0\\
0 & 0 & 0 & 3 & 0 & 0\\
\hline \rule{0pt}{8pt}1 & 0 & 1 & 0 & 1 & 0\\
0 & 0 & 0 & 0 & 0 & 3
\end{array}}\right),
\]
where the partitioning is added to align with block notation~\eqref{eq:ABblock_notation}.

We are now in a position to construct the quotient graph $\Gamma_{\rho}$,
and choose to do it for $\rho=\triv$, the trivial representation
of $C_{2}$. We choose our fundamental domain to be $\D=\{e_{1},e_{2}\}$,
with $e_{2}$ being a representative of the orbit $O_{e_{2}}=\{e_{2},e_{3}\}$.

We have $|G_{1}|=2$ and $|G_{2}|=1$, as well as $\Phi_{1}=\Phi_{2}=1$.
Using \eqref{eq:ABrho_tensor}, we get 
\begin{align}
\left[A_{\rho}\right]_{1,1}^{\otimes} & =\frac{1}{2}\left[A_{1,1}^{\otimes}+A_{1,1}^{\otimes}\right]=\frac{1}{2}\left[\left(\begin{smallmatrix}2 & 0\\
0 & 0
\end{smallmatrix}\right)+\left(\begin{smallmatrix}2 & 0\\
0 & 0
\end{smallmatrix}\right)\right]=\left(\begin{smallmatrix}2 & 0\\
0 & 0
\end{smallmatrix}\right),\label{eq:Arho11}\\
\left[A_{\rho}\right]_{1,2}^{\otimes} & =\frac{1}{\sqrt{2}}\left[A_{1,2}^{\otimes}+A_{1,3}^{\otimes}\right]=\frac{1}{\sqrt{2}}\left[\left(\begin{smallmatrix}-1 & 0\\
0 & 0
\end{smallmatrix}\right)+\left(\begin{smallmatrix}-1 & 0\\
0 & 0
\end{smallmatrix}\right)\right]=\left(\begin{smallmatrix}-\sqrt{2} & 0\\
0 & 0
\end{smallmatrix}\right)=\left[A_{\rho}\right]_{2,1}^{\otimes},\label{eq:Arho12}\\
\left[A_{\rho}\right]_{2,2}^{\otimes} & =\left[A_{2,2}^{\otimes}+A_{2,3}^{\otimes}\right]=\left[\left(\begin{smallmatrix}2 & 0\\
0 & 0
\end{smallmatrix}\right)+\left(\begin{smallmatrix}-1 & 0\\
0 & 0
\end{smallmatrix}\right]\right)=\left(\begin{smallmatrix}1 & 0\\
0 & 0
\end{smallmatrix}\right).\label{eq:Arho22}
\end{align}
Performing the same computations for $B_{\rho}$, we obtain, altogether,
\begin{equation}
A_{\rho}=\frac{1}{3}\left(\begin{array}{cc|cc}
2 & 0 & -\sqrt{2} & 0\\
0 & 0 & 0 & 0\\
\hline \rule{0pt}{14pt}-\sqrt{2} & 0 & 1 & 0\\
0 & 0 & 0 & 0
\end{array}\right),\hspace{20pt}B_{\rho}=\frac{1}{3i}\left(\begin{array}{cc|cc}
1 & 0 & \sqrt{2} & 0\\
0 & 3 & 0 & 0\\
\hline \rule{0pt}{14pt}\sqrt{2} & 0 & 2 & 0\\
0 & 0 & 0 & 3
\end{array}\right).\label{Eqn: Quot graph vertex conditions}
\end{equation}

We get a quotient graph which consists of two edges of lengths $l_{1},l_{2}$
(Figure \ref{Fig:Star_grah_and_quotient} (b)). The boundary vertices
of the quotient retain the Neumann conditions (2nd and 4th rows of
$A$ and $B$), whereas the central vertex corresponds to the conditions
(that can be deduced from the 1st and 3rd rows) 
\begin{equation}
\sqrt{2}f_{e_{1}}(0)=f_{e_{2}}(0),\hspace{20pt}f'_{e_{1}}(0)+\sqrt{2}f'_{e_{2}}(0)=0.\label{eq:graph-example-VC-at-center}
\end{equation}
\end{example}

Let us complement this example by computing the corresponding bond-scattering
matrices. First, using (\ref{eq:Vertex_Scattering_Matrix}), the scattering
matrix of the original graph is:

\[
\forall k\in\mathbb{R},\quad\quad S(k)=-J\left(A+ikB\right)^{-1}\left(A-ikB\right)=\frac{1}{3}\left({\scriptsize
\global\long\def\arraystretch{0.8}%
\arraycolsep=.2em\begin{array}{cc|cc|cc}
0 & -1 & 0 & 2 & 0 & 2\\
3 & 0 & 0 & 0 & 0 & 0\\
\hline \rule{0pt}{8pt}0 & 2 & 0 & -1 & 0 & 2\\
0 & 0 & 3 & 0 & 0 & 0\\
\hline \rule{0pt}{8pt}0 & 2 & 0 & 2 & 0 & -1\\
0 & 0 & 0 & 0 & 3 & 0
\end{array}}\right),
\]
where we note that $S(k)$ is $k$-independent and that it does not
depend on whether we take $A,B$ or $\tA,\tB$ above. The scattering
matrix of the quotient graph is obtained by following the same procedure
as in~\eqref{eq:Arho11}-\eqref{eq:Arho22} to obtain

\[
S_{\rho}(k)=\frac{1}{3}\left(\begin{array}{cc|cc}
0 & -1 & 0 & \sqrt{8}\\
3 & 0 & 0 & 0\\
\hline \rule{0pt}{14pt}0 & \sqrt{8} & 0 & 1\\
0 & 0 & 3 & 0
\end{array}\right).
\]
In particular, the vertex conditions, (\ref{eq:graph-example-VC-at-center}),
at the central vertex of the graph correspond to the unitary submatrix,
$\frac{1}{3}\left(\begin{smallmatrix}-1 & \sqrt{8}\\
\sqrt{8} & 1
\end{smallmatrix}\right)$.

%%%%%%%%%%%%%%%%%%%%%%%%%%%%%%%

\begin{example}[Metric tetrahedron graph]
\label{ex:Tetra_QG}

Consider the tetrahedron graph consisting of four vertices and six
edges of equal length $\ell$ connecting the vertices, see Fig.~\ref{fig:tetraQGhom}(a).
The operator $T$ acts as $-d^{2}/dx^{2}+q(x)$ on the functions defined
on the edges; the potential $q(x)$ is assumed to be sufficiently
regular (e.g.\
 piecewise continuous), identical on each edge and symmetric with
respect to the midpoint of the edge, namely $q(x)=q(\ell-x)$. We
impose Neumann--Kirchhoff (NK or ``standard'') conditions at the
vertices. Namely, at each vertex, the Dirichlet data agree and the
Neumann data add up to zero.

\begin{figure}
\centering \includegraphics{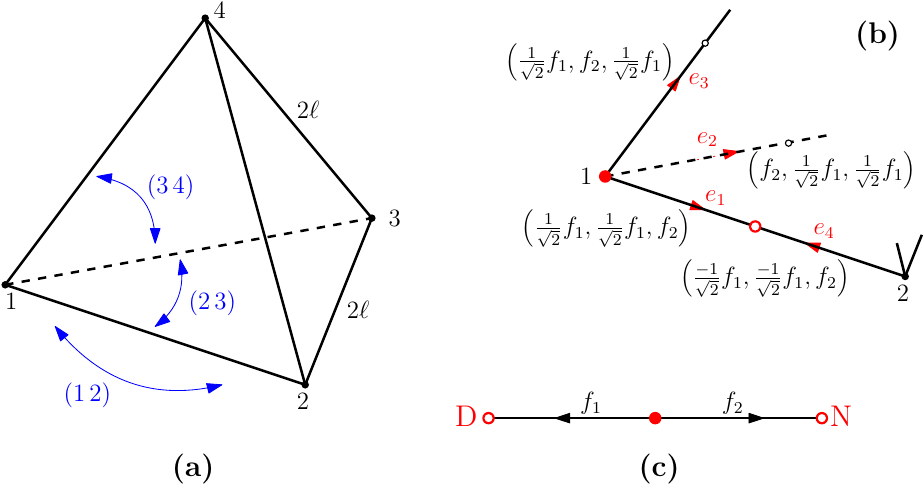} \caption{Tetrahedron quantum graph and its quotient construction via description
of the intertwiner space.}
\label{fig:tetraQGhom}
\end{figure}

The group $S_{4}$ of all permutations of the 4 vertices induces an
action on $L^{2}(\Gamma)$ that leaves the domain of the operator
invariant. For the purpose of quotient construction, we add ``dummy''
vertices at the midpoints of every edge. The conditions at the new
vertices are Neumann-Kirchhoff, which preserves the domain of the
operator modulo obvious isometries.

We compute the quotient graph with respect to the standard representation
\eqref{eq:standard_rep} of $S_{4}$, which we repeat here for convenience
\[
%\label{eq:standard_{r}ep_{r}epeated}
\rho_{(1\,2)}=\begin{pmatrix}0 & -1 & 0\\
-1 & 0 & 0\\
0 & 0 & 1
\end{pmatrix},\ \rho_{(2\,3)}=\begin{pmatrix}0 & 0 & 1\\
0 & 1 & 0\\
1 & 0 & 0
\end{pmatrix},\ \rho_{(3\,4)}=\begin{pmatrix}0 & 1 & 0\\
1 & 0 & 0\\
0 & 0 & 1
\end{pmatrix},\ \rho_{(2\,4)}=\begin{pmatrix}1 & 0 & 0\\
0 & 0 & 1\\
0 & 1 & 0
\end{pmatrix}.
\]
We will compute the quotient by both parametrizing the intertwiner
space $\Hom(\Vr,\Dom(T))$ and by following Definition~\ref{def:quotient_QG}.

Let $e_{1}$ denote the edge from the vertex $1$ to the midpoint
of the (former) edge $(1,2)$, see Fig.~\ref{fig:tetraQGhom}(b)
where the edge labels are shown in red. The orbit of this edge covers
the whole graph and it is fixed by the permutation $(3\,4)$.

Denote the row of the intertwiner $\phi$ corresponding to $e_{1}$
by $\phi_{e_{1}}=(\phi_{1,1},\phi_{1,2},\phi_{1,3})$, where each
$\phi_{1,j}$ belongs to $H^{2}(e_{1})=H^{2}([0,\ell/2])$. We first
take care of the stabilizer group which imposes the condition 
\begin{equation}
\phi_{e_{1}}=\phi_{e_{1}}\rho_{(3\,4)}.\label{eq:edge1_qg}
\end{equation}
We therefore parametrize $\phi_{e_{1}}$ as 
\begin{equation}
\phi_{e_{1}}=\left(\frac{1}{\sqrt{2}}f_{1},\frac{1}{\sqrt{2}}f_{1},f_{2}\right),\qquad(f_{1},f_{2})\in H^{2}([0,\ell/2])\oplus H^{2}([0,\ell/2]),\label{eq:phi_qg_parametrization}
\end{equation}
where the factors $1/\sqrt{2}$ are added to make it a linear isometry
(up to an overall factor). We fill out the other rows of $\phi$ as
follows.

Since the edge $e_{1}$ is the image of $e_{2}$ under the action
of $(2\,3)$, we have 
\begin{equation}
\phi_{e_{2}}=\phi_{e_{1}}\rho_{(2\,3)}=\left(f_{2},\frac{1}{\sqrt{2}}f_{1},\frac{1}{\sqrt{2}}f_{1}\right).\label{eq:edge2_qg}
\end{equation}
Similarly, we get 
\begin{equation}
\phi_{e_{3}}=\phi_{e_{2}}\rho_{(3\,4)}=\left(\frac{1}{\sqrt{2}}f_{1},f_{2},\frac{1}{\sqrt{2}}f_{1}\right),\quad\text{and}\quad\phi_{e_{4}}=\phi_{e_{1}}\rho_{(1\,2)}=\left(\frac{-1}{\sqrt{2}}f_{1},\frac{-1}{\sqrt{2}}f_{1},f_{2}\right).\label{eq:edge3_qg}
\end{equation}
This is enough to compute the matching conditions on the functions
$f_{1},f_{2}\in H^{2}([0,\ell/2])$. From NK conditions at the vertex
of degree 3 we have 
\begin{equation}
\frac{1}{\sqrt{2}}f_{1}(0)=f_{2}(0)\quad\text{and}\quad\sqrt{2}f_{1}'(0)+f_{2}'(0)=0.\label{eq:conditions_box}
\end{equation}
At the vertex of degree 2 (empty circle in Fig.~\ref{fig:tetraQGhom}(b))
we have 
\begin{align*}
%\label{eq:dummy_{c}ond1}
 & \frac{1}{\sqrt{2}}f_{1}(\ell/2)=\frac{-1}{\sqrt{2}}f_{1}(\ell/2),\quad &  & -\frac{1}{\sqrt{2}}f_{1}'(\ell/2)-\frac{-1}{\sqrt{2}}f_{1}'(\ell/2)=0,\\
%\label{eq:dummy_{c}ond2}
 & f_{2}(\ell/2)=f_{2}(\ell/2),\quad &  & -f_{2}'(\ell/2)-f_{2}'(\ell/2)=0.
\end{align*}
which are equivalent to 
\begin{equation}
f_{1}(\ell/2)=0,\qquad f_{2}'(\ell/2)=0.\label{eq:conditions_sides}
\end{equation}
To summarize, conditions \eqref{eq:conditions_box} and \eqref{eq:conditions_sides}
define a self-adjoint operator, acting as before by $-d^{2}/dx^{2}+q(x)$
on the edges of the graph in Fig.~\ref{fig:tetraQGhom}(c).\\

\begin{figure}
\centering \includegraphics{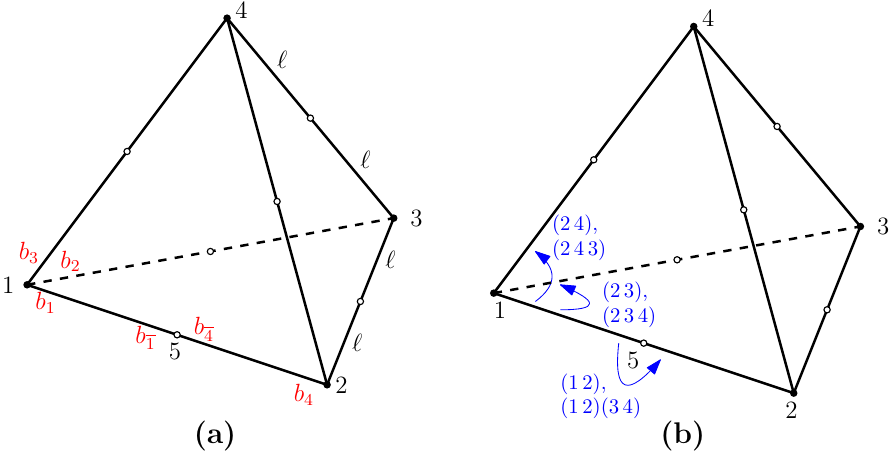} \caption{Tetrahedron quantum graph and the symmetry group action on the edge
endpoints.}
\label{fig:tetraQGaction}
\end{figure}

We now construct the quotient graph by following Definition~\ref{def:quotient_QG}.
The symmetry group $S_{4}$ has an induced action on the edge endpoints,
which we denote\footnote{$b$ for ``boundary''.} $b_{1},b_{2},\ldots,b_{\o 1},b_{\o 2},\ldots$,
see Fig.~\ref{fig:tetraQGaction}(a). There are two orbits under
the group action, and we choose $b_{1}$ and $b_{\o 1}$ as the representatives.

The vertex conditions are local in that they only link the Dirichlet
and Neumann values among the endpoints incident to the same vertex.
We can therefore work locally, separately treating the block of $A$
(or $B$) corresponding to $\{b_{1},b_{2},b_{3}\}$ (incident to the
vertex of degree 3) and the block of $A$ corresponding to $\{b_{\o 1},b_{\o 4}\}$
(incident to the midpoint vertex labeled by $5$ in Fig.~\ref{fig:tetraQGaction}).

After symmetrization \eqref{eq:invariant_AB}, the corresponding blocks
of $A$ and $B$ are 
\begin{align}
 & A=\frac{1}{3}\begin{pmatrix}2 & -1 & -1\\
-1 & 2 & -1\\
-1 & -1 & 2
\end{pmatrix},\qquad &  & B=\frac{1}{3i}\begin{pmatrix}1 & 1 & 1\\
1 & 1 & 1\\
1 & 1 & 1
\end{pmatrix}\label{eq:NKmat3}\\
 & A=\frac{1}{2}\begin{pmatrix}1 & -1\\
-1 & 1
\end{pmatrix},\qquad &  & B=\frac{1}{2i}\begin{pmatrix}1 & 1\\
1 & 1
\end{pmatrix}.\label{eq:NKmat2}
\end{align}
The stabilizer groups of $b_{1}$ and $b_{\o 1}$ are $G_{b_{1}}=G_{b_{\o 1}}=\{e,(3\,4)\}$
and therefore a valid choice of $\Phi$ is 
\begin{equation}
\Phi_{b_{1}}=\Phi_{b_{\o 1}}=\begin{pmatrix}\frac{1}{\sqrt{2}} & 0\\
\frac{1}{\sqrt{2}} & 0\\
0 & 1
\end{pmatrix}.\label{eq:Phi_tetraQG}
\end{equation}
Due to locality of the vertex conditions, the summation in equation~\eqref{eq: Quotient A matrix}
can be restricted to the subgroups fixing vertex $1$ and the midpoint
vertex $5$, correspondingly. For vertex $1$ the subgroup is $\{e,(3\,4),(2\,3),(2\,3\,4),(2\,4),(2\,4\,3)\}$.
Using $(2\,3\,4)=(2\,3)(3\,4)$ and $(2\,4\,3)=(2\,4)(3\,4)$ to exploit
the invariance properties of $\Phi_{b_{1}}$, we write 
\begin{align}
\left[A_{\rho}\right]_{b_{1},b_{1}} & =\frac{1}{2}\Phi_{b_{1}}^{*}\Big(A_{b_{1},b_{1}}\big(e+(3\,4)\big)+A_{b_{1},b_{2}}(2\,3)\big(e+(3\,4)\big)+A_{b_{1},b_{3}}(2\,4)\big(e+(3\,4)\big)\Big)\Phi_{b_{1}}\nonumber \\
 & =\Phi_{b_{1}}^{*}\big(A_{b_{1},b_{1}}e+A_{b_{1},b_{2}}(2\,3)+A_{b_{1},b_{3}}(2\,4)\big)\Phi_{b_{1}}=\frac{1}{3}\begin{pmatrix}1 & -\sqrt{2}\\
-\sqrt{2} & 2
\end{pmatrix}.\label{eq:Arho_tetraGQ1}
\end{align}
Similarly for $B_{\rho}$, 
\begin{equation}
\left[B_{\rho}\right]_{b_{1},b_{1}}=\frac{1}{3i}\begin{pmatrix}2 & \sqrt{2}\\
\sqrt{2} & 1
\end{pmatrix}\label{eq:Brho_tetraQG1}
\end{equation}
These matrices are the symmetrized versions of the condition~\eqref{eq:conditions_box}.

For the midpoint vertex $4$, the relevant subgroup is $\{e,(3\,4),(1\,2),(1,2)(3\,4)\}$
and the quotient vertex conditions are 
\begin{equation}
\left[A_{\rho}\right]_{b_{\o 1},b_{\o 1}}=\Phi_{b_{\o 1}}^{*}\big(A_{b_{\o 1},b_{\o 1}}e+A_{b_{\o 1},b_{\o 4}}(1\,2)\big)\Phi_{b_{\o 1}}=\begin{pmatrix}1 & 0\\
0 & 0
\end{pmatrix},\quad\text{and}\quad\left[B_{\rho}\right]_{b_{\o 1},b_{\o 1}}=\frac{1}{i}\begin{pmatrix}0 & 0\\
0 & 1
\end{pmatrix},\label{eq:Arho_tetraQG2}
\end{equation}
in agreement with \eqref{eq:conditions_sides}.
\end{example}

%%%%%%%%%%%%%%%%%%%%%%%%%%%%

\begin{example}[Tetrahedron revisited]
\label{ex:tetrahedrom_all3}

The quotient graph we computed in Example~\ref{ex:Tetra_QG} does
not cover all triply degenerate eigenvalues of the tetrahedron graph.
This is because the group $S_{4}$, in addition to the standard representation
$\rho$ given by \eqref{eq:standard_rep} has another 3-dimensional
representation. This representation is obtained by multiplying all
matrices in equation~\eqref{eq:standard_rep} by $-1$ and we will
denote it by $\hat{\rho}$.

\begin{figure}
\centering \includegraphics{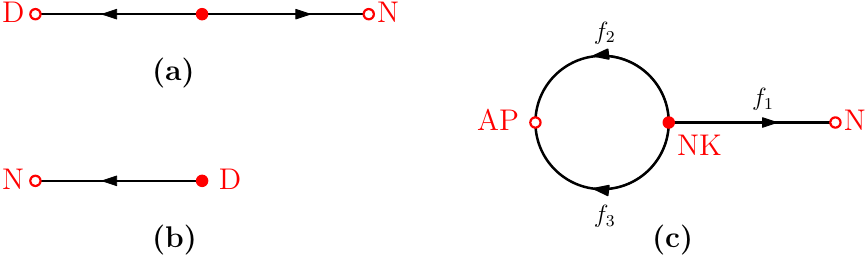} \caption{Various quotients of the tetrahedron quantum graph from Example~\ref{ex:tetrahedrom_all3}.}
\label{fig:tetraQGall3}
\end{figure}

The procedure for computing the quotient with respect to $\hat{\rho}$
is identical to the one followed in Example~\ref{ex:Tetra_QG} with
the only significant difference being in the matrix $\Phi$. The result
is shown in Fig.~\ref{fig:tetraQGall3}(b); it is a single interval
of length $\ell$ with Dirichlet and Neumann conditions.

In fact, the two quotients can be ``joined'' in a single graph,
shown in Fig.~\ref{fig:tetraQGall3}(c). The conditions at the central
vertex are the standard NK (as opposed to the more exotic conditions
at the corresponding vertex of the quotient in Fig.~\ref{fig:tetraQGall3}(a));
the conditions at the left vertex are ``anti-periodic'', 
\begin{equation}
f_{2}(\ell)=-f_{3}(\ell),\qquad f_{2}'(\ell)=f_{3}'(\ell).\label{eq:antiperiodic}
\end{equation}
This graph has $C_{2}$ symmetry (up-down reflection) and the quotients
with respect to this symmetry are exactly $\Gamma_{\rho}$ and $\Gamma_{\hat{\rho}}$
shown in Fig.~\ref{fig:tetraQGall3}(a) and (b).

The graph in Fig.~\ref{fig:tetraQGall3}(c) captures \emph{all} triply
degenerate eigenvalues of the tetrahedron graph $\Gamma$ for a typical
potential\footnote{To what extent it is true in general is an open question, see \cite{Zel_aif90,BerLiu_lmp18}.}.
With the hindsight, we can get to this graph more directly, by restricting
the representation $\rho$ to the subgroup $A_{4}$ of $S_{4}$ consisting
of all even permutations. This restricted representation $\sigma$
is specified by 
\begin{equation}
\sigma\colon(2\,3\,4)\mapsto\begin{pmatrix}0 & 0 & 1\\
1 & 0 & 0\\
0 & 1 & 0
\end{pmatrix},\qquad(1\,2)(3\,4)\mapsto\begin{pmatrix}-1 & 0 & 0\\
0 & -1 & 0\\
0 & 0 & 1
\end{pmatrix},\label{eq:irrepA4}
\end{equation}
and has the property that $\Ind{A_{4}}{S_{4}}{\sigma}=\rho\oplus\hat{\rho}$.
Therefore, by Proposition~\ref{prop: Algebraic     properties},
parts (\ref{enu: prop_algebraic_subgroup_quotient}) and (\ref{enu: prop_algebraic_decomp_2parts}),
the quotient $\Gamma_{\sigma}$ is unitarily equivalent to the direct
sum of $\Gamma_{\rho}$ and $\Gamma_{\hat{\rho}}$. In fact, computing
$\Gamma_{\sigma}$ is somewhat simpler because the stabilizer groups
are trivial.

The same computation can be performed on the operator $H$ from Example~\ref{Ex: Reducing points},
resulting in 
\begin{equation}
H_{\sigma}=\begin{pmatrix}q_{\bullet} & a & a & a\\
a & q_{\circ}-b & 0 & 0\\
a & 0 & q_{\circ}-b & 0\\
a & 0 & 0 & q_{\circ}+b
\end{pmatrix}.\label{eq:Hsigma}
\end{equation}
The corresponding graph appears in Fig.~\ref{fig:discrete_tetra}(b).
\end{example}

%%%%%%%%%%%%%%%%%%%%%%%%%%%%%%%%%%%%%%%%%%%%%%%%%%
%%%%%%%%%%%%%%%%%%%%%%%%%%%%%%%%%%%%%%%%%%%%%%%%%%

\section{Applications of Quotient Operators}

\label{Sec: Applications of quot ops}

In this section we point out some applications of quotient operators.
In particular, we show how some earlier results are obtained as particular
cases of the theory constructed in Section~\ref{sec:quotient_operators}.
This allows for extension of those earlier results and thus provides
further applications.

\subsection{Isospectrality\label{Subsec: Applications - Isospectrality}}

Proposition \ref{prop: Algebraic properties} may be perceived as
an application of quotients to get isospectral examples. Indeed, that
Proposition points out a few pairs of operators that are unitarily
equivalent and hence isospectral. The third part of that Proposition
allows to prove the following theorem which lies in the heart of many
isospectral examples.
\begin{thm}
\label{thm:isospectral_finite_dim_operators} Let $G$ be a finite
group. Let $\Op$ be a finite dimensional operator which is $G$-symmetric.
Let $H_{1},H_{2}$ be subgroups of $G$ with corresponding $H_{1}$-module
$\textrm{V}_{\sigma_{1}}$ and $H_{2}$-module $\textrm{V}_{\sigma_{2}}$.
If 
\begin{equation}
\Ind{H_{1}}G{\textrm{V}_{\sigma_{1}}}\cong\Ind{H_{2}}G{\textrm{V}_{\sigma_{2}}},\label{eq:Induction_condition}
\end{equation}
then $\Op_{\sigma_{1}}$ and $\Op_{\sigma_{2}}$ are unitarily equivalent
and, hence, isospectral.
\end{thm}

\begin{proof}
The proof is a straightforward corollary of Proposition \ref{prop: Algebraic properties},(\ref{enu: prop_algebraic_subgroup_quotient}).
Denote $\textrm{V}_{\rho_{n}}\cong\Ind{H_{n}}G{\textrm{V}_{\sigma_{n}}}$
for $n=1,2$. By (\ref{eq:quotient_of_induced_rep}) in Proposition
\ref{prop: Algebraic properties} we get $\Op_{\sigma_{n}}\cong\Op_{\rho_{n}}$
for both $i=1,2$. The condition (\ref{eq:Induction_condition}),
or equivalently $\textrm{V}_{\rho_{1}}=\textrm{V}_{\rho_{2}}$ implies
$\Op_{\rho_{1}}\cong\Op_{\rho_{2}}$ and finishes the proof.
\end{proof}
To connect this theorem to existing isospectral examples in the literature,
we first refer to the pioneering work of Sunada \cite{Sun85}, which
provided a general method for the construction of isospectral objects.
That construction is based on taking the ``trivial'' quotients of
a manifold by some groups in the following sense. Let $M$ be a manifold
with a group $G$ acting on it. Let $H_{1}$ and $H_{2}$ be subgroups
of $G$ which satisfy the following condition 
\begin{equation}
\forall g\in G,\quad\quad~~~\left|\left[g\right]\cap H_{1}\right|=\left|\left[g\right]\cap H_{2}\right|,\label{eq:Sunada_condition}
\end{equation}
where $\left[g\right]$ indicates the conjugacy class of $g$. In
\cite[Thm. 1]{Sun85} it is stated that if condition (\ref{eq:Sunada_condition})
holds then the quotient manifolds $\nicefrac{M}{H_{1}}$, $\nicefrac{M}{H_{2}}$
are isospectral, . It is shown in \cite{Brooks_cm99,Pesce_cm94} that
(\ref{eq:Sunada_condition}) is equivalent to the alternative condition
\begin{equation}
\Ind{H_{1}}G{\triv_{H_{1}}}\cong\Ind{H_{2}}G{\triv_{H_{2}}},\label{eq:Sunada_condition_Induction}
\end{equation}
where $\triv_{H_{n}}$ is the trivial representation of $H_{n}$ ($n=1,2$).

Observe that the condition (\ref{eq:Sunada_condition_Induction})
in Sunada's theorem is a particular case of condition (\ref{eq:Induction_condition})
in Theorem \ref{thm:isospectral_finite_dim_operators} when $\sigma_{n}=\mathrm{triv}H_{n}$
for both $n=1,2$. In this case, the isospectrality implied by Theorem
\ref{thm:isospectral_finite_dim_operators} is the same as Sunada's,
once it is shown that the topological quotient, $\nicefrac{M}{H_{n}}$,
is the same manifold as the quotient $M_{\mathrm{triv}H_{n}}$. This
indeed follows by the quotient definition in \cite{Ban09,Par10} and
in that sense, \cite{Ban09,Par10} extended Sunada's isospectrality
theorem (Indeed see \cite[Cor 4.4]{Ban09}, \cite[Cor 4]{Par10} for
the manifold and metric graph version\footnote{The theorem here is slightly stronger since it establishes that the
operators are not only isospectral but unitarily equivalent.} of Theorem \ref{thm:isospectral_finite_dim_operators}). Another
aspect of \cite{Ban09,Par10} is the adjustment of the theory from
manifolds to quantum (metric) graphs. In the current paper we provide
the discrete graph analogue of this isospectral theory (which actually
holds for any finite-dimensional operator). Moreover, in Section \ref{sec:QuantumGraphs}
we show that our methods extend to metric graphs (where the operator
of interest is unbounded) and provide a more explicit and compact
construction of a quotient quantum graph than the one given in \cite{Ban09,Par10}.

In light of the above we are now able to re-examine some earlier works
on isospectrality of discrete graphs and show how previous results
may be obtained as particular applications of the theory in the current
paper.

\subsubsection*{Regular graphs and free group action}

In \cite{Brooks_aif99} Brooks considers $k$-regular graphs and groups
which act freely on them and constructs isospectral graphs. Explicitly,
Theorem 1.1 in \cite{Brooks_aif99} can be viewed as a particular
case of Theorem \ref{thm:isospectral_finite_dim_operators} above,
in the following sense: We take $\Op$ in Theorem \ref{thm:isospectral_finite_dim_operators}
to be the discrete Laplacian on some $k$-regular graph, 
\begin{equation}
\Op_{i,j}=\begin{cases}
1-\frac{1}{k}A_{i,i} & i=j\\
-\frac{1}{k}A_{i,j} & i\neq j
\end{cases},\label{eq:normalized_Lap_k_reg}
\end{equation}
where $A_{i,j}$ indicates the number of edges connecting $i$ with
$j$ (and in $A_{i,i}$ every loop is counted twice). Assume that
$\Op$ is $G$-symmetric and the group $G$ acts freely on the graph
vertices. Take two subgroups $H_{1},H_{2}<G$, which satisfy condition
(\ref{eq:Induction_condition}) with their trivial representations
$\sigma_{n}=\mathrm{triv}H_{n}$ ($n=1,2$). In order to express the
operators $\Op_{\mathrm{triv}H_{1}}$ and $\Op_{\mathrm{triv}H_{2}}$
we may employ formulas \eqref{eq:quotient_formula_trivial_rep} (thanks
to using trivial representation) and \eqref{eq:quotient_formula_free}
(thanks to the free actions) and have 
\begin{equation}
\left[\Op_{\mathrm{triv}H_{n}}\right]_{i,j}=\sum_{g\in H_{n}}\Op_{i,gj}.\label{eq:quotient_of_Brooks}
\end{equation}

Both $\Op_{\mathrm{triv}H_{1}}$ and $\Op_{\mathrm{triv}H_{2}}$ are
also normalized Laplacians corresponding to $k$-regular graphs. By
Theorem \ref{thm:isospectral_finite_dim_operators} we get that they
are isospectral and this is exactly the statement of Theorem 1.1 in
\cite{Brooks_aif99}.

\subsubsection*{Graph Laplacians and weak fixed point condition}

Halbeisen and Hungerbühler extend in \cite{Hal99} the isospectral
construction of Brooks. They consider graphs which are not necessarily
$k$-regular and a group action which is not necessarily free. They
consider the (non-normalized) Laplacian, $\Op=D-A$, where $A$ is
the adjacency matrix (as in (\ref{eq:normalized_Lap_k_reg})) and
$D$ is a diagonal matrix of vertex degrees. Assume that $\Op$ is
$G$-symmetric, and replace the free action requirement by a weaker
condition\footnote{This is called the weak fixed point condition in \cite{Hal99}, where
free action is referred to as the strong fixed point condition.}, which in our terminology may be stated as $\left\{ \Op_{i,j}\neq0~\textrm{and }i,j\in\D\right\} ~\Rightarrow~G_{i}=G_{j}$.
Namely, if two vertices in the fundamental domain are adjacent then
their stabilizers are equal. Employing this condition in our formula
(\ref{eq:quotient_formula_trivial_rep}), we get $\left[\Op_{\mathrm{triv}G}\right]_{i,j}=\sum_{j'\in O_{j}}\Op_{i,j'}$.
It can be checked that this quotient operator, $\Op_{\mathrm{triv}G}$,
is exactly the (non-normalized) Laplacian associated with the quotient
graph. In this case our quotient graph, $\Gamma_{\mathrm{triv}G}$,
is identical to the quotient graph in the topological sense, $\nicefrac{\Gamma}{G}$.
Using this construction, \cite{Hal99} in effect employ condition
(\ref{eq:Induction_condition}) with $\sigma_{1}=\mathrm{triv}H_{1}$,
$\sigma_{2}=\mathrm{triv}H_{2}$ to generate examples of isospectral
discrete graphs. We stated above their construction using our terminology,
whereas in \cite{Hal99} it is done and proven differently.

\subsubsection*{Isospectral quantum (metric) graphs}

We mentioned above that an isospectral construction for metric graphs
in the spirit of the current paper already appeared in \cite{Ban09,Par10}.
In Section \ref{sec:QuantumGraphs} we provide an alternative construction
of a quotient metric graphs which together with Proposition \ref{prop: Algebraic properties}
and Theorem \ref{thm:isospectral_finite_dim_operators} may be used
to provide isospectral examples of metric graphs. With this construction
we may recover earlier isospectral examples such as the ones in \cite{GutSmi_jpa01,BanShaSmi_jpa06,BanSawSmi_jpa10}.
Furthermore, there are quite a few very interesting recent works on
isospectrality of metric graphs \cite{Pistol_arXiv23,KurMul_arXiv21,SteLlePos_arXiv22,Mutlu_cpaa21,JevLip_polonica21,ChePiv_ieot20,MugPiv_jpa23,LawSawBiaSir_Scirep21};
among which the papers \cite{Mutlu_cpaa21,JevLip_polonica21} are
actually based on the quotient construction presented here (these
papers cite an earlier version of the present work). A natural question
is which of the isospectral examples in those recent works may be
reproduced using the theory presented here. We leave the thorough
examination of this question and comparison of those isospectral methods
to future works. Here we briefly explain how some of the isospectral
constructions in \cite{SteLlePos_arXiv22,KurMul_arXiv21} can be deduced
from the theory presented in the current paper.

The paper \cite{SteLlePos_arXiv22} presents a construction method
of isospectral graphs (both discrete and metric and including magentic
fields). The heart of the isospectral method lies in theorem 5.7 of
that paper (see also (3.1) there). This theorem presents a spectral
decomposition of a special type of graphs. The graphs there are constructed
from a basic building block which is duplicated according to a certain
partition of some number $r\in\N$. Specifically, the partition is
denoted by $A=(a_{1},a_{2},\ldots,a_{s})$, where $r=\sum_{i=1}^{s}a_{i}$,
and the corresponding graph is denoted there by $F_{A,V_{1}}$. In
\cite[thm 5.7]{SteLlePos_arXiv22} the spectrum of this graph is shown
to be equal to a union of spectra of some other smaller graphs. This
spectral decomposition depends only on $r$, $r-s$ and the building
block graph. Hence, any other partition $A'$ of $r$ with the same
value of $s$ would yield an isospectral graph $F_{A',V_{1}}$. It
can be shown that the spectral decomposition of \cite[Theorem 5.7]{SteLlePos_arXiv22}
may be obtained directly from Proposition \ref{Prop:Spectral_property},(\ref{enu:Prop_Spectral_property_2}).
To give more details, the graph $F_{A,V_{1}}$ is symmetric under
the group $C_{a_{1}}\times C_{a_{2}}\times\ldots\times C_{a_{s}}$,
where $A=(a_{1},a_{2},\ldots,a_{s})$ is the given partition and $C_{n}$
is the cyclic group of order $n$. Apart from the trivial representation,
the group $C_{n}$ has an $n-1$ irreducible representations given
by the non-trivial $n$-roots of unity. Taking all quotients with
respect to those representations as in Proposition \ref{Prop:Spectral_property},(\ref{enu:Prop_Spectral_property_2})
gives a spectral decomposition of $F_{A,V_{1}}$, which is very close
to the one in \cite[thm 5.7]{SteLlePos_arXiv22}. To get exactly the
spectral decomposition in \cite[thm 5.7]{SteLlePos_arXiv22} we need
just to notice that the quotient of $F_{A,V_{1}}$ with respect to
the trivial representation of $C_{a_{1}}\times C_{a_{2}}\times\ldots\times C_{a_{s}}$
is a graph which is still symmetric, but under the symmetry group
$C_{s}$. Further applying Proposition \ref{Prop:Spectral_property},(\ref{enu:Prop_Spectral_property_2})
for that graph yields exactly the spectral decomposition in \cite[thm 5.7]{SteLlePos_arXiv22}.

Incidentally, the isospectral graphs presented in \cite[Example 1]{KurMul_arXiv21}
may be obtained from the above construction with partitions $A=(2,2)$
and $A'=(1,3)$. However, we ought to emphasize that \cite{KurMul_arXiv21}
contains further isospectral construction methods; it is not clear
to us at this point whether those can be reproduced from the theory
presented here.

\subsection{Spectral computations and graph factorization\label{Subsec: Applications - Computations}}

In this section we focus on using the decompositions given in Proposition
\ref{prop: Algebraic properties},(\ref{enu: prop_algebraic_Operator decomposition})
and in its spectral counterpart, Proposition \ref{Prop:Spectral_property},(\ref{enu:Prop_Spectral_property_2}),
to facilitate the computation of spectra of large symmetric graphs
(or any other domain).

\subsubsection*{Spectral decomposition for computations: Chung--Sternberg formula}

Chung and Sternberg use the idea of a quotient operator (although
they do not use this term) as a tool for spectral computations \cite{Chu92}
(see also Section 7.5 in the book of Chung \cite{Chung_spectralgraph}).
Explicitly, they calculate the eigenvalues of the discrete Laplacian
of large symmetric graphs.

The graphs considered in \cite{Chu92} are taken to be symmetric with
respect to a transitive action of a group $G$ on their vertices.
Namely, every vertex can be transformed to every other by some group
element; equivalently, the fundamental domain has only one vertex.
In this case, all vertices are of the same degree. Assign weights
$\{q_{e}\}$ to the edges connected to some vertex (the transitive
action makes the choice of the particular vertex irrelevant) and denote
their sum by $Q=\sum q_{e}$. The operator studied in \cite{Chu92}
is the weighted (discrete) Laplacian 
\begin{equation}
\Op_{i,j}=\Id-\begin{cases}
\nicefrac{q_{e}}{Q} & i\textrm{ is connected to }j~\textrm{by edge}~e\\
0 & i\textrm{ is not connected to }j
\end{cases}.\label{eq:Operator-of-Chung}
\end{equation}
Note that (\ref{eq:normalized_Lap_k_reg}) is a particular case of
(\ref{eq:Operator-of-Chung}), with $q_{e}=1$ for all edges. Assume
that $\Op$ is $G$-symmetric. Given an irreducible representation
$\rho$, we may apply our formula (\ref{eq:Quotient_formula_explicit})
to get the explicit form of the quotient operator. We choose $i$
to be a representative vertex, so that $\D=\{i\}$ and in this case
(\ref{eq:Quotient_formula_explicit}) reads 
\begin{equation}
\Op_{\rho}=\frac{1}{|G_{i}|}\sum_{g\in G}\left(\Phi_{i}^{*}\,\rho(g)\,\Phi_{i}\right)\Op_{i,gi},\label{eq:Quotient_formula_Chung_ours}
\end{equation}
with all notations similar to those of Definition \ref{def: matrix_form_of_quotient}.
Namely, $\Phi_{i}$ is an $r\times d_{i}$ matrix whose columns form
an orthonormal basis for $(\Vr)^{G_{i}}$, the $G_{i}$-invariant
subspace of $\Vr$.

We note that (\ref{eq:Quotient_formula_Chung_ours}) is precisely
the expression in equation (11) of \cite{Chu92}. To see this, we
provide the following dictionary translating from our formula (\ref{eq:Quotient_formula_explicit})
to \cite[eq. (11)]{Chu92}:
\begin{itemize}
\item $\left|G_{i}\right|=\left|H\right|=\left|C_{e}\right|$
\item $\Vr\cong W$,~ $(\Vr)^{G_{i}}\cong W^{H}$
\item $g=a$, and $\Phi_{i}^{*}\,\rho(g)\,\Phi_{i}$ is ``$a$ evaluated
on $W^{H}$''.
\end{itemize}
To conclude, the quotient expression in \cite[eq. (11)]{Chu92} may
be viewed as an application of Definition \ref{def: matrix_form_of_quotient}
for the particular case of a discrete Laplacian with a transitive
action on its vertex set. Chung and Sternberg combine this with a
matrix decomposition such as the one in Proposition \ref{prop: Algebraic properties},(\ref{enu: prop_algebraic_Operator decomposition})
to present the eigenvalues of a large graph as the union of the eigenvalues
of its quotients. We hope that the generalized theory provided in
the current paper would aid in further enhancement of eigenvalue computations
in the spirit of \cite{Chu92}.

\subsubsection*{Graph factorizations}

A similar method of graph factorization appears in the classical book
of Cvetkovi\'{c}, Doob and Sachs \cite[Ch. 5.2]{CveDooSac_spectra_of_graphs_book}.
They discuss the restriction of an operator on a graph to a particular
irreducible representation. However, the reduction is to a system
of linear equations rather than providing an explicit expression for
the quotient operator (such as (\ref{eq:Quotient_formula_explicit})).
The discussion in \cite{CveDooSac_spectra_of_graphs_book} mentions
the spectral point of view of such factorization (similarly to \cite{Chung_spectralgraph,Chu92}),
but lacks the algebraic aspect such as Proposition \ref{prop: Algebraic properties}.

Another useful concept appearing in that book is the one of graph
divisors \cite[Ch. 4]{CveDooSac_spectra_of_graphs_book}. In the literature
a graph divisor is also known by other names: an equitable partition,
a coloration, a quotient graph or an orbigraph \cite{Schwenk_proc74,PowSul_laa82,Brooks_aif99,DalGavMonOchStaSte_involve19}.
Using our terminology, a graph divisor is the quotient of an operator
with respect to the trivial representation of a symmetry group. That
the spectrum of a divisor is contained in the spectrum of the original
graph is widely known, appearing for example in the books \cite[Ch. 4]{CveDooSac_spectra_of_graphs_book},
\cite[Ch. 3.9]{CveRowSim_An_introduction_to_the_theory_of_graph_spectra_2010},
\cite[Ch. 5]{Godsil_algebraic_combinaotrics_1993}, \cite[Ch. 9.3]{GodRoy_algebraic_graph_theory_book_2001}.
It is mentioned in \cite[Ch. 4]{CveDooSac_spectra_of_graphs_book}
that the spectrum of a graph may be decomposed as the union of spectra
of its divisor and codivisor, but without providing a general and
explicit formula for the codivisor (such as, for instance, our Proposition
\ref{prop: Algebraic properties},(\ref{enu:   prop_algebraic_Operator decomposition})).

This divisor--codivisor decomposition is further developed in a series
of recent works \cite{BarFraWeb_laa17,FraSmiSorWeb_laa17,FraSmiWeb_laa19}
in the special case of $G$ being a cyclic group. In this sense, those
works contain results which are particular cases of Proposition \ref{prop:   Algebraic properties},(\ref{enu: prop_algebraic_Operator   decomposition}).

\subsubsection*{Spectral decompositions on finite and infinite quantum (metric) graphs}

Spectral decompositions for metric graphs in the form of Proposition~\ref{Prop:Spectral_property}
are used for various purposes in spectral analysis. For instance,
\cite{Exner_ppn21,ExnLip_jmp19} studied the spectra of graphs formed
by the edges of platonic solids. Naturally, these graphs are symmetric
with respect to well-known groups. Hence, spectral computations in
\cite{Exner_ppn21,ExnLip_jmp19} (which cite an earlier version of
the present work) are made mush easier by a decomposition of the type
established in Proposition~\ref{Prop:Spectral_property}.

A decomposition of the Laplacian on radially symmetric metric trees
by Naimark--Solomyak \cite{NaiSol_rjmp01,Solomyak_wrm04} was very
influential in the modern spectral theory of non-compact metric graphs.
More recently, in the series of works \cite{BreKel_opem13,BreKel_opem13-err,BreLev_ahp20,KosNic_jst21},
the decomposition of Naimark--Solomyak was extended to metric and
discrete non-compact graphs with partial spherical symmetry.

We expect that this decomposition can also be understood in the spirit
of Proposition \ref{Prop:Spectral_property},(\ref{enu:Prop_Spectral_property_2}),
but for non-compact metric graphs and infinite symmetry groups (which
are infinite products of finite groups). The works \cite{BreKel_opem13,BreKel_opem13-err,BreLev_ahp20}
do not explicitly describe the symmetry groups of the underlying graphs
and there is still work to be done to link the two approaches.

Finally, we mention that the results presented here were used to significantly
simplify computations of the spectrum of a Euler--Bernoulli beam
structure in \cite{BerEtt_sam22}. The model therein is a fourth order
differential operator acting on vector-valued functions supported
on a symmetric graph. While the graph and its symmetry groups are
simple, the high order of the operator makes analysis without spectral
decomposition prohibitively cumbersome.

%%%%%%%%%%%%%%%%%%%%%%%%%%%%%%%%%%%%%%%%%%%%%%%%%%%%%%
%%%%%%%%%%%%%%%%%%%%%%%%%%%%%%%%%%%%%%%%%%%%%%%%%%%%%%
%%%%%%%%%%%%%%%%%%%%%%%%%%%%%%%%%%%%%%%%%%%%%%%%%%%%%%

\section*{Acknowledgments}

We thank Rostislav Grigorchuk, Maxim Gurevich, Peter Kuchment, Ji\v{r}\'{\i}
Lipovsk\a'{y}, Delio Mugnolo and Ori Parzanchevski for their critical
comments and friendly encouragement. RB and GB were supported by the
Binational Science Foundation Grant (Grant No. 2024244). RB was supported
by ISF (Grant No. 844/19). GB was partially supported by National
Science Foundation grants DMS-1410657 and DMS-2247473. CHJ would like
to thank the Leverhulme Trust for financial support (ECF-2014-448).

%%%%%%%%%%%%%%%%%%%%%%%%%%%%%%%%%%%%%%%%%%%%%%%%%%%%%%
%%%%%%%%%%%%%%%%%%%%%%%%%%%%%%%%%%%%%%%%%%%%%%%%%%%%%%

\appendix
%dummy comment inserted by tex2lyx to ensure that this paragraph is not empty%dummy comment inserted by tex2lyx to ensure that this paragraph is not empty

\section{A primer on representation theory \label{sec: appendix-representation_theory}}

We bring here the relevant definitions and statements from representation
theory of finite groups. In particular, we make the connection between
group module terminology to matrix representation terminology. We
are aware that there are many who are not familiar with the former,
but are very fond of the latter.

\subsection{Representations and $G$-modules}
\begin{defn}
\label{def: representation_and_module} Let $G$ be a finite group.
Let $V$ be a vector space over $\C$.
\end{defn}

\begin{enumerate}
\item \label{enu:def-representation_and_module_1}A group homomorphism,
$\rho:G\rightarrow\GL(V)$ is called a representation of $G$.
\item \label{enu:def-representation_and_module_2}The group algebra $\C G$
consists of the elements $\sum_{g\in G}c_{g}g$ with $c_{g}\in\C$,
and multiplication in $\C G$ extends the multiplication in $G$.
\\
We say that $V$ is a (left) $\C G$-module (or $G$-module) if there
is an action of $\C G$ on $V$ (i.e., a map $\C G\times V\rightarrow V$)
denoted by $r\cdot v$ for all $r\in\C G$, $v\in V$, such that the
following holds:
\begin{enumerate}
\item $(r+s)\cdot v=r\cdot v+s\cdot v$ for all $r,s\in\C G$, $v\in V$.
\item $(rs)\cdot v=r\cdot(s\cdot v)$ for all $r,s\in\C G$, $v\in V$.
\item $r\cdot(v+u)=r\cdot v+r\cdot u$ for all $r\in\C G$, $v,u\in V$.
\item $1\cdot v=v$ for all $v\in V$, where $1\in G$ is the identity element.
\end{enumerate}
\end{enumerate}
\begin{rem}
We will call $V$ as in Definition \ref{def: representation_and_module},(\ref{enu:def-representation_and_module_2})
a $G$-module rather than a $\C G$-module. Both terms are interchangeably
used in the literature (see \cite{Fulton-2013,DummitFoote_abstract_algebra}).
\end{rem}

The two parts of Definition \ref{def: representation_and_module}
are the same. Namely, if $\rho:G\rightarrow\GL(V)$ is a representation
of $G$, then $V$ becomes a $G$-module by

\[
\forall v\in V,\quad\left(\sum_{g\in G}c_{g}g\right)\cdot v=\sum_{g\in G}c_{g}\thinspace\rho(g)\thinspace v.
\]
Conversely, if $V$ is a $G$-module, then $\rho:G\rightarrow\GL(V)$
defined by $\rho(g)\thinspace v=g\cdot v$ is a representation. See
more in \cite[sec.~34]{Bump_LieGroups_book}.

In light of the above, we tend to denote $\Vr$ to emphasize the connection
between the module and the corresponding representation.

To develop and prove the theory in this paper it is more convenient
to use the module terminology rather than the matrix representation
one. On the other hand, working out the examples requires the use
of specific matrix representations. Hence, both approaches are mentioned
throughout the paper.

%%%%%%%%%%%%%%%%%%%%%%%%%%%%%%%%%%%%%%%%%%%%%%%%%%%%

\subsection{Intertwiners\label{subsec:Appendix-Intertwiners}}

Let $G$ be a finite group and $\Vr$ and $\Vp$ be two $G$-modules.
Denote by $\mathrm{Hom}(\Vr,\Vp)\cong\Vr^{*}\otimes\Vp$ the space
of linear homomorphisms between the vector spaces $\Vp$ and $\Vr$.
We consider $\mathrm{Hom}(\Vr,\Vp)$ as a $G$-module, with the following
action, $g\times\phi\mapsto g\phi g^{-1}$ for all $g\in G$ and $\phi\in\mathrm{Hom}(\Vr,\Vp)$.
The set of fixed points of this action is denoted by 
\begin{equation}
\Hom(\Vr,\Vp):=\set{\phi\in\mathrm{Hom}(\Vr,\Vp)}{g\phi=\phi g}.\label{eq: hom_space_definition-1}
\end{equation}
This is a linear space, known as the trivial isotypic component of
$\mathrm{Hom}(\Vr,\Vp)$ and it is also common to refer to the elements
$\phi\in\Hom(\Vr,\Vp)$ as \emph{intertwiner}s.\\

In this paper we consider only finite dimensional $G$-modules and
assume that they are equipped with an inner product (hence, they are
Hilbert spaces). We also define the following inner product on the
space of intertwiners,

\begin{equation}
\left\langle \phi_{1},\phi_{2}\right\rangle _{\Hom(\Vr,\Vp)}:=\Tr\left(\phi_{2}^{*}\phi_{1}\right)=\sum_{v\in B(\Vr)}\left\langle \phi_{1}(v)\thinspace,\thinspace\phi_{2}(v)\right\rangle _{\Vp},\label{eq:inner_product_on_intertwiner_space}
\end{equation}
where $\left\langle \phantom{}~,~\phantom{}\right\rangle _{\Vp}$
denotes the inner product in $\Vp$ and $B(\Vr)$ is some choice of
basis for $\Vr$ (obviously, it can be shown that the inner product
above does not depend on the choice of basis). This is just the usual
Frobenius inner product on the space of finite dimensional linear
operators, restricted to $\Hom(\Vr,\Vp)$.

%%%%%%%%%%%%%%%%%%%%%%%%%%%%%%%%%%%%%%%%%%%%%%%%%%

\subsection{Induced representations and Frobenius reciprocity\label{subsec:Induced-representations-appendix}}

There is more than one standard way to present induced representations
and Frobenius reciprocity. Here we follow the one in \cite[sec.~34]{Bump_LieGroups_book},
which suits the needed arguments for the proofs in Section \ref{sec:quotient_operators}.
\begin{defn}
\label{def: induced_representation} Let $H$ be a subgroup of a finite
group $G$. Let $\Vs$ be an $H$-module. Define the induced representation
(from $H$ to $G$), $\Vr:=\Ind HG{\Vs}$, as the following $G$-module,
\begin{equation}
\Vr=\set{f:G\rightarrow\Vs}{f(hg)=\sigma(h)f(g),~~\forall h\in H},\label{eq:definition_of_induced_representation}
\end{equation}
where the $G$ action is by $\left(\rho(g)f\right)(\tilde{g}):=f(\tilde{g}g)$
for all $f\in\Vr$ and $g,\tilde{g}\in G$.
\end{defn}

On the induced representation, $\Ind HG{\Vs}$, we fix the following
inner product:

\begin{equation}
\forall f_{1},f_{2}\in\Ind HG{\Vs},\quad\quad\left\langle f_{1},f_{2}\right\rangle :=\frac{1}{\left|G\right|}\sum_{g\in G}\left\langle f_{1}(g)\thinspace,\thinspace f_{2}(g)\right\rangle _{\Vs}.\label{eq:inner_product_induced_rep}
\end{equation}

As a straightforward exercise, the reader might compute that when
taking $H=\{\mathrm{id}\}$ (the trivial group) and $\Vs$ to be the
trivial representation of $H$, we obtain in the definition above
$\Vr=\C G$ and so $\Ind HG{\Vs}=\reg_{G}$, where $\reg_{g}$ is
the regular representation of $G$. This observation is used in the
proof of Proposition \ref{prop: Algebraic properties},(\ref{enu: prop_algebraic_Operator decomposition}).
\begin{thm}
\label{thm:Frobenius_reciprocity} {[}Frobenius reciprocity Theorem{]}
\cite[sec.~34]{Bump_LieGroups_book}

Let $H$ be a subgroup of a finite group $G$. Let $\Vs$ be an $H$-module
and $\Vp$ be a $G$-module. Then
\begin{enumerate}
\item \label{enu: thm-Frobenius-reciprocity-1} $\mathrm{Hom}_{H}\left(\Vp,\Vs\right)$
is isomorphic to $\Hom\left(\Vp,\Ind HG{\Vs}\right)$.\\
 This isomorphism maps $\phi\in\mathrm{Hom}_{H}\left(\Vp,\Vs\right)$
to $\Phi\in\Hom\left(\Vp,\Ind HG{\Vs}\right)$ defined by 
\begin{equation}
\Phi(v)(g)=\phi(\pi(g)v).\label{eq: Frobenius_map_relation-1}
\end{equation}
for $v\in\Vp$ and $g\in G$.
\item \label{enu: thm-Frobenius-reciprocity-2} $\mathrm{Hom}_{H}\left(\Vs,\Vp\right)$
is isomorphic to $\Hom\left(\Ind HG{\Vs},\Vp\right)$.\\
 This isomorphism maps $\phi\in\mathrm{Hom}_{H}\left(\Vs,\Vp\right)$
to $\Phi\in\Hom\left(\Ind HG{\Vs},\Vp\right)$ defined by 
\begin{equation}
\Phi:\Ind HG{\Vs}\to\Vp,\qquad\Phi:f\mapsto\frac{\left|H\right|}{\left|G\right|}\sum_{g\in G/H}\pi(g)\thinspace\phi(f(g^{-1})),\label{eq: Frobenius_map_relation-2}
\end{equation}
where the sum is over some choice of representatives of $H$ left
cosets in $G$ (and does not depend on the particular choice of the
representatives).
\end{enumerate}
\end{thm}

\begin{rem*}
For any two $G$-modules, $\Vs,\Vp$ it is not hard to see that $\mathrm{Hom}_{H}\left(\Vp,\Vs\right)\cong\mathrm{Hom}_{H}\left(\Vs,\Vp\right)$,
where the isomorphism is explicitly given by the adjoint. Hence, the
two parts of the theorem above are equivalent in the sense of existence
of isomorphisms. Yet, we bring both versions (as in \cite[sec.~34]{Bump_LieGroups_book})
since the explicit form of the isomorphisms is useful for the proofs
in the paper. The first part of the theorem is used in proving Theorem~\ref{thm: Quotient     fundamental property}
and the second is used in the proof of Proposition \ref{prop: Algebraic properties},(\ref{enu:     prop_algebraic_subgroup_quotient}).
\end{rem*}
\begin{rem*}
In (\ref{eq: Frobenius_map_relation-2}) above we have introduced
a prefactor as part of Frobenius maps which does not appear in \cite[sec. 34]{Bump_LieGroups_book}.
Obviously, such a factor does not affect the map being an isomorphism,
but it is used to show that the map is unitary.
\end{rem*}
In the current paper we need to use that the Frobenius reciprocity
is not only an isomorphism, but that it also preserves the inner product.
As we did not find explicit references to this observation, we complement
Theorem \ref{thm:Frobenius_reciprocity} by stating and proving this.
\begin{prop}
\label{prop: Frobenius_preserves_inner_product} Both isomorphisms
in Theorem \ref{thm:Frobenius_reciprocity} preserve the inner product.
\end{prop}

\begin{rem}
The following proof may be simplified by observing that the two Frobenius
isomorphisms are adjoints of each other and checking unitarity for
only one of them.
\end{rem}

\begin{proof}
We first show that the isomorphism in Theorem \ref{thm:Frobenius_reciprocity},(\ref{enu: thm-Frobenius-reciprocity-1})
preserves the inner product.

Let $\phi_{1},\phi_{2}\in\mathrm{Hom}_{H}(\Vp,\Vs)$ and $\Phi_{1},\Phi_{2}\in\Hom\left(\Vp,\Ind HG{\Vs}\right)$
the corresponding images by the Frobenius reciprocity map (\ref{eq: Frobenius_map_relation-1}).

\begin{align*}
\left\langle \Phi_{1}~,~\Phi_{2}\right\rangle _{\Hom\left(\Vp,\Ind HG{\Vs}\right)} & =\sum_{v\in B(\Vp)}\left\langle \Phi_{1}(v)~,~\Phi_{2}(v)\right\rangle _{\Ind HG{\Vs}}\\
 & =\frac{1}{\left|G\right|}\sum_{v\in B(\Vp)}\sum_{g\in G}\left\langle \Phi_{1}(v)(g)~,~\Phi_{2}(v)(g)\right\rangle _{\Vs}\\
 & =\frac{1}{\left|G\right|}\sum_{v\in B(\Vp)}\sum_{g\in G}\left\langle \phi_{1}(\pi(g)v)~,~\phi_{2}(\pi(g)v)\right\rangle _{\Vs}\\
 & =\frac{1}{\left|G\right|}\sum_{g\in G}\sum_{v\in B(\Vp)}\left\langle \phi_{1}(\pi(g)v)~,~\phi_{2}(\pi(g)v)\right\rangle _{\Vs}\\
 & =\frac{1}{\left|G\right|}\sum_{g\in G}\left\langle \phi_{1}~,~\phi_{2}\right\rangle _{\mathrm{Hom}_{H}(\Vp,\Vs)}=\left\langle \phi_{1}~,~\phi_{2}\right\rangle _{\mathrm{Hom}_{H}(\Vp,\Vs)}
\end{align*}
where moving to the last line we used the fact that for any $g\in G$,
the set $\left\{ \pi(g)\thinspace v\right\} _{v\in B(\Vp)}$ is a
basis for $\Vp$.

Next, we show that the isomorphism in Theorem \ref{thm:Frobenius_reciprocity},(\ref{enu: thm-Frobenius-reciprocity-2})
preserves the inner product.

We define for each $g\in G$ and $v\in\Vs$ the following map 
\begin{align*}
\epsilon_{g}(v) & :G\rightarrow\Vs\\
\epsilon_{g}(v) & (\tilde{g})=\sqrt{\nicefrac{\left|G\right|}{\left|H\right|}}\begin{cases}
\sigma(\tilde{g}g)v & \textrm{if}~\tilde{g}g\in H,\\
0 & \textrm{otherwise}.
\end{cases}
\end{align*}
One can check that for all $g\in G$ and $v\in\Vs$, $\epsilon_{g}(v)\in\Ind HG{\Vs}$.
We use the map $\epsilon_{g}(v)$ to construct a basis for $\Ind HG{\Vs}$
from a given basis for $\Vs$. Let $\left\{ v_{i}\right\} $ some
choice of an orthonormal basis for $\Vs$ and $\left\{ g_{j}\right\} $
some choice of representatives for the right cosets of $H$. Then,
it is straightforward to check that $\left\{ \epsilon_{g_{j}}(v_{i})\right\} $
is an orthonormal basis for $\Ind HG{\Vs}$ (with the inner product
(\ref{eq:inner_product_induced_rep})). Take $\phi\in\mathrm{Hom}_{H}(\Vs,\Vp)$
which is sent by the Frobenius isomorphism to $\Phi\in\mathrm{Hom}_{G}(\Ind HG{\Vs},\Vp)$.
This map is given explicitly in (\ref{eq: Frobenius_map_relation-2}),
so that 
\begin{align}
\forall i,j,\quad\quad\Phi(\epsilon_{g_{j}}(v_{i})) & =\nicefrac{\left|H\right|}{\left|G\right|}\thinspace\sum_{g\in G/H}\pi(g)\thinspace\phi(\epsilon_{g_{j}}(v_{i})(g^{-1}))\nonumber \\
 & =\sqrt{\nicefrac{\left|H\right|}{\left|G\right|}}\thinspace\pi(\tilde{g})\thinspace\phi(\sigma(\tilde{g}^{-1}g_{j})v_{i})\nonumber \\
 & =\sqrt{\nicefrac{\left|H\right|}{\left|G\right|}}\thinspace\pi(g_{j})\thinspace\phi(v_{i}),\label{eq:Phi_phi_relation}
\end{align}
where in the second line we take $\tilde{g}\in G/H$ to be the $H$
left coset representative such that $g_{j}\in\tilde{g}H$, and on
the third line we used that $\phi$ is an intertwiner. Applying (\ref{eq:Phi_phi_relation}),
we show that the Frobenius isomorphism preserves the inner product.
Indeed, take $\phi_{1},\phi_{2}\in\mathrm{Hom}_{H}(\Vs,\Vp)$ which
are sent by the Frobenius isomorphism to $\Phi_{1},\Phi_{2}\in\mathrm{Hom}_{G}(\Ind HG{\Vs},\Vp)$,
correspondingly, and get 
\begin{align*}
\left\langle \Phi_{1},\Phi_{2}\right\rangle  & =\sum_{i,j}\left\langle \Phi_{1}(\epsilon_{g_{j}}(v_{i}))\thinspace,\thinspace\Phi_{2}(\epsilon_{g_{j}}(v_{i}))\right\rangle _{\Vp}\\
 & =\frac{\left|H\right|}{\left|G\right|}\sum_{i,j}\left\langle \pi(g_{j})\thinspace\phi_{1}(v_{i})\thinspace,\thinspace\pi(g_{j})\thinspace\phi_{2}(v_{i})\right\rangle _{\Vp}\\
 & =\sum_{i}\left\langle \phi_{1}(v_{i})\thinspace,\thinspace\phi_{2}(v_{i})\right\rangle _{\Vp}\\
 & =\left\langle \phi_{1},\phi_{2}\right\rangle ,
\end{align*}
where on the second line we used the unitarity of the representation
$\pi$.
\end{proof}
\begin{prop}
\label{prop:Frobenius_commutes} If $T:\Vp\to\Vp$ is a $G$-symmetric
operator (as in Definition \ref{def:sym_operator}), then in commutes
with the Frobenius isomorphism from $\mathrm{Hom}_{H}\left(\Vs,\Vp\right)$
to $\Hom\left(\Ind HG{\Vs},\Vp\right)$ given by \eqref{eq: Frobenius_map_relation-2}.
\end{prop}

\begin{proof}
For all $f\in\Ind HG{\Vs}$ we check 
\begin{multline*}
\Op\left[\sum_{g\in G/H}\pi(g)\thinspace\phi(f(g^{-1}))\right]=\sum_{g\in G/H}\Op\left[\pi(g)\thinspace\phi(f(g^{-1}))\right]=\sum_{g\in G/H}\pi(g)\thinspace\Op\left[\phi(f(g^{-1}))\right]\\
=\sum_{g\in G/H}\pi(g)\thinspace\Op\left[\phi\right](f(g^{-1})),
\end{multline*}
where the first equality follows by linearity, the second holds since
$\Op$ is $G$-symmetric and the last is the definition of $\Op$
action on $\mathrm{Hom}_{H}(\Vs,\Vp)$, see (\ref{eq:T_acts_on_Hom_part1}).
To summarize, we get that $T\Phi$ is the element of $\Hom\left(\Ind HG{\Vs},\Vp\right)$
corresponding to $T\phi\in\mathrm{Hom}_{H}\left(\Vs,\Vp\right)$ by
the isomorphism \eqref{eq: Frobenius_map_relation-2}.
\end{proof}
%%%%%%%%%%%%%%%%%%%%%%%%%%%%%%%%%%%%%%%%%%%%%%%%%%%%%%
%%%%%%%%%%%%%%%%%%%%%%%%%%%%%%%%%%%%%%%%%%%%%%%%%%%%%%

\section{Proofs of the properties of the quantum graph quotient}

\label{app:qg_quotient}

%%%%%%%%%%%%%%%%%%%%%%%%%%%%%%%%%%%%%%%%%%%%%

\subsection{Preliminaries}

\label{app:qg_quotient_prelim}

The vectorization map is a linear transformation which converts a
matrix to a column vector by essentially `stacking' the columns on
the matrix on top of each other. More formally
\begin{defn}
\label{def: Vectorization}Let $A$ be an $m\times n$ matrix with
$m$-dimensional column vectors $a_{1},\ldots,a_{n}$ then $\vec{~}{:}~M_{m\times n}(\C)\to\C^{nm}$
is the following map 
\[
\vect(A)=\vect((a_{1}\ldots a_{n}))=\begin{pmatrix}a_{1}\\
\vdots\\
a_{n}
\end{pmatrix}=\sum_{i=1}^{n}(\ei\otimes a_{i}),
\]
which we call \emph{vectorization}.
\end{defn}

So, for example, if $\phi=(\phi_{1},\phi_{2})$ is the $2\times2$
matrix, then 
\[
\phi_{1}=\begin{pmatrix}x_{1}\\
y_{1}
\end{pmatrix},\quad\phi_{2}=\begin{pmatrix}x_{2}\\
y_{2}
\end{pmatrix},\quad\Longrightarrow\vect(\phi)=\begin{pmatrix}x_{1}\\
y_{1}\\
x_{2}\\
y_{2}
\end{pmatrix}.
\]

Using the definition of $\vect$ we have the following trivial property
for products of matrices.
\begin{lem}
\label{Lem: Vectorization} Let $A$ and $B$ be two $k\times n$
and $n\times m$ matrices respectively. Then 
\begin{equation}
\vect(AB)=(\Id_{m}\otimes A)\thinspace\vect(B)=(B^{T}\otimes\Id_{k})\thinspace\vect(A).\label{eq:vectorization_identity}
\end{equation}
\end{lem}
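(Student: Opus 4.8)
The plan is to verify the two claimed identities directly on the columns, exploiting the fact that $\vec$ is linear and that both sides of each identity are linear in the relevant matrix argument. I would begin by recording the column structure: write $B = (b_1,\ldots,b_m)$ where each $b_j \in \C^n$, so that $AB = (Ab_1,\ldots,Ab_m)$, and write $A = (a_1,\ldots,a_n)$ where each $a_i \in \C^k$. By Definition \ref{Dfn: Vectorization}, $\vec(AB) = \sum_{j=1}^m \e_j \otimes (Ab_j)$, where $\{\e_j\}$ is the standard basis of $\C^m$.

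For the first equality, I would use bilinearity of the tensor product to pull $A$ out: since $\e_j \otimes (Ab_j) = (\Id_m \otimes A)(\e_j \otimes b_j)$ for each $j$, summing over $j$ gives
\[
\vec(AB) = \sum_{j=1}^m (\Id_m\otimes A)(\e_j\otimes b_j) = (\Id_m\otimes A)\sum_{j=1}^m (\e_j\otimes b_j) = (\Id_m\otimes A)\vec(B),
\]
which is exactly the first claimed identity. For the second equality, I would instead expand $b_j$ in coordinates as $b_j = \sum_{i=1}^n B_{ij}\, \hat\e_i$, where $\{\hat\e_i\}$ is the standard basis of $\C^n$, so that $Ab_j = \sum_{i=1}^n B_{ij}\, a_i$. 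Substituting into $\vec(AB)$ and interchanging the order of summation yields
\[
\vec(AB) = \sum_{j=1}^m\sum_{i=1}^n B_{ij}\,(\e_j\otimes a_i).
\]
On the other hand, $(B^T\otimes\Id_k)\vec(A) = (B^T\otimes\Id_k)\sum_{i=1}^n (\hat\e_i\otimes a_i) = \sum_{i=1}^n (B^T\hat\e_i)\otimes a_i$, and since $B^T\hat\e_i$ is the $i$-th column of $B^T$, i.e. $\sum_{j=1}^m B_{ij}\,\e_j$, this also equals $\sum_{i=1}^n\sum_{j=1}^m B_{ij}\,(\e_j\otimes a_i)$. Comparing the two double sums completes the proof.

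There is essentially no serious obstacle here; the only point requiring a little care is bookkeeping of which standard basis belongs to which factor of the tensor product (the $\C^m$ basis indexing the blocks of the vectorized matrix versus the $\C^n$ intermediate basis), and making sure the transpose in $B^T$ lands on the correct index when rewriting $B^T\hat\e_i$ as a column of $B^T$. One could alternatively prove the statement by checking both identities on rank-one matrices $A = uv^*$, $B = wz^*$ and invoking bilinearity, but the column-by-column computation above is the most transparent route.
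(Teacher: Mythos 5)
Your proof is correct: both identities are verified cleanly by the column-by-column computation, with the index bookkeeping for $B^T\hat\e_i$ handled properly. The paper itself offers no proof at all—it states the lemma as a trivial consequence of Definition \ref{Dfn: Vectorization}—and your argument is exactly the routine verification that the paper implicitly has in mind.
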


Next, we define the matrix $\Theta$ whose columns are the orthonormal
basis $\bigcup_{i\in\D}\left\{ \theta_{i}^{(n)}\right\} _{n=1}^{d_{i}}$
for $\Hom(\Vr,\Vp)$, as given in Lemma \ref{lem: orthonormal basis for Hom_G}.
\begin{defn}
\label{def: Theta matrix} Following the notation in Section~\ref{subsec:Quotient-QG},
we denote for each $i\in\D$ and $1\le i\leq d_{i}$, the map $\theta_{i}^{(n)}:\Vr\rightarrow\Vp$
by 
\begin{equation}
\forall v\in\Vr,\quad\quad\theta_{i}^{(n)}(v):=\frac{1}{\sqrt{\left|G\right|\left|G_{i}\right|}}\sum_{g\in G}\left\langle \rho(g)\varphi_{i}^{(n)}\thinspace,\thinspace v\right\rangle _{\Vr}\thinspace\mathbf{e}_{g\thinspace i},\label{eq:definition_of_abstract_thetas - to use for Theta matrix}
\end{equation}
exactly as in Lemma \ref{lem: orthonormal basis for Hom_G}. With
$r:=\dim\Vr$ and $\dim\Vp=\left|\E\right|$, consider each $\theta_{i}^{(n)}$
as an $\left|\E\right|\times r$ matrix, so that $\vec{(}\theta_{i}^{(n)})$
is a column vector of length $\left|\E\right|r$.\\
 Specifically, we define 
\begin{equation}
\Theta_{i}=\frac{1}{\sqrt{\left|G\right|\left|G_{i}\right|}}\sum_{g\in G}\left(\overline{\rho(g)}\otimes\pi(g)\right)\left(\Phi_{i}\otimes\ei\right),\label{eq: Theta_i expressed with Phi_i}
\end{equation}
to be a $\left|\E\right|r\times d_{i}$ matrix whose columns are $\left\{ \vec{(}\theta_{i}^{(n)})\right\} _{n=1}^{d_{i}}$.
Subsequently, define $\Theta\in M_{\left|\E\right|r\times d}(\C)$
to be 
\begin{equation}
\Theta=\begin{pmatrix}\Theta_{1} & \Theta_{2} & \ldots & \Theta_{\left|\D\right|}\end{pmatrix}.\label{eq: Theta expressed at many Theta_i}
\end{equation}

We also denote $\hat{\Theta}:=\Theta\otimes\Id_{2}$ (similarly to
the $\hat{\pi}$ notation introduced in~\eqref{eq:pi-invariance_def}),
\end{defn}

\begin{lem}
\label{lem: properties of Theta} The matrices $\Theta\in M_{\left|\E\right|r\times d}(\C)$
and $\hat{\Theta}\in M_{2\left|\E\right|r\times2d}(\C)$ defined above
satisfy
\begin{enumerate}
\item \label{enu: Theta property - orthogonality}$\Theta^{*}\Theta=\Id_{d}$
.
\item \label{enu: Theta property - projection}$\Theta\Theta^{*}=\frac{1}{|G|}\sum_{g\in G}\cc{\rho(g)}\otimes\pi(g)$.
\item \label{enu: Theta property - commuting with A B}$\hat{\Theta}\hat{\Theta}^{*}$
commutes with $\Id_{r}\otimes A$ and $\Id_{r}\otimes B$.
\item \label{enu: Theta property - quotient A B}$A_{\rho}=\hat{\Theta}^{*}[\Id_{r}\otimes A]\hat{\Theta}$
and $B_{\rho}=\hat{\Theta}^{*}[\Id_{r}\otimes B]\hat{\Theta}$.
\item \label{enu: Theta property - intertwining}$[\Id_{r}\otimes\pi(g)-\rho(g)^{T}\otimes\Id_{p}]\Theta=0$
for all $g\in G$.
\end{enumerate}
\end{lem}

\begin{proof}
The proof of the first two parts of the Lemma are based on Lemma \ref{lem: orthonormal basis for Hom_G}.
In Lemma \ref{lem: orthonormal basis for Hom_G} it is stated that
the maps $\theta_{i}^{(n)}:\Vr\rightarrow\Vp$ as defined in (\ref{eq:definition_of_abstract_thetas - to use for Theta matrix})
form an orthonormal basis for $\Hom(\Vr,\Vp)$. Turning those maps
to matrices and applying $\vec{\ensuremath{}}$ do not affect this
quality. Hence we get that the columns of $\Theta$ form an orthonormal
basis for $\Hom(\Vr,\Vp)$.
\begin{enumerate}
\item The orthonormality of $\Theta$ columns yields $\Theta^{*}\Theta=\Id_{2d}$,
as in the first part of the Lemma.
\item That the columns form an orthogonal \emph{basis} for $\Hom(\Vr,\Vp)$
implies that $\Theta\Theta^{*}$ is a projector from $\Vr\otimes\Vp=\mathrm{Hom}(\Vr,\Vp)$
onto $\Hom(\Vr,\Vp)$. To get the explicit expression of this projector,
we note (see, for example, \cite[Ch. 2.2]{Fulton-2013}) that if $R$
is a unitary representation of $G$ on $V$ then the operator 
\[
\Pb:=\frac{1}{|G|}\sum_{g\in G}R(g)
\]
is an orthogonal projector onto the trivial component of $R$, i.e.
\[
\Image[\Pb]=\{v\in V_{R}~:~R(g)v=v~\quad\forall g\in G\}.
\]
Taking $R=\cc{\rho}\otimes\pi$, where $\cc{\rho}$ is the dual representation
of $\rho$ (see also \cite[Ch. 1.1]{Fulton-2013}), we get that $\frac{1}{|G|}\sum_{g\in G}\cc{\rho(g)}\otimes\pi(g)$
is an orthogonal projector onto the trivial component of $\Vr\otimes\Vp=\mathrm{Hom}(\Vr,\Vp)$
which is $\Hom(\Vr,\Vp)$ (see \cite[Ex. 1.2]{Fulton-2013}). Therefore,
$\Theta\Theta^{*}=\frac{1}{|G|}\sum_{g\in G}\cc{\rho(g)}\otimes\pi(g)$,
as in the second part of the Lemma.
\end{enumerate}
We continue to prove the last three parts of the Lemma.

{[}resume{]}
\begin{enumerate}
\item Using the second part of the Lemma we get that 
\begin{equation}
\hat{\Theta}\hat{\Theta}^{*}=\frac{1}{|G|}\sum_{g\in G}\cc{\rho(g)}\otimes\hat{\pi}(g).\label{eq:JJstar_hat}
\end{equation}
This clearly commutes with the matrices $\Id_{r}\otimes A$ and $\Id_{r}\otimes B$
because $A$ and $B$ commute with $\hat{\pi}(g)$ (as is assumed
throughout the section and see also Lemma \ref{lem:invariant_AB}).
\item The assertions $A_{\rho}=\hat{\Theta}^{*}[\Id_{r}\otimes A]\hat{\Theta}$
and $B_{\rho}=\hat{\Theta}^{*}[\Id_{r}\otimes B]\hat{\Theta}$ are
obtained as a corollary of Theorem \ref{thm: Quotient fundamental property},
with the slight modification of considering $\hat{\pi}$ instead of
$\pi$. To see this, we first observe that $\left.A\right|_{\Hom(\Vr,\mathrm{V}_{\hat{\pi}})}=\hat{\Theta}^{*}[\Id_{r}\otimes A]\hat{\Theta}$.
This last observation follows from the linear action of $A$ on $\Hom(\Vr,\mathrm{V}_{\hat{\pi}})$
(see (\ref{eq:T_acts_on_Hom_part1}),(\ref{eq:T_acts_on_Hom_part2})),
when keeping in mind that the columns of $\hat{\Theta}$ forming an
orthonormal basis for $\Hom(\Vr,\mathrm{V}_{\hat{\pi}})$ (as in the
first two parts of this lemma). Now, Theorem \ref{thm: Quotient fundamental property}
implies $A_{\rho}=\hat{\Theta}^{*}[\Id_{r}\otimes A]\hat{\Theta}$
when taking $\Op$ to be $A$ and $\Vp$ to be $\mathrm{V}_{\hat{\pi}}$.
Exactly the same holds for $B$ and $B_{\rho}$.
\item Let $g\in G$. We show that $[\Id_{r}\otimes\pi(g)-\rho(g)^{T}\otimes\Id_{p}]\Theta\Theta^{*}=0$,
which after right multiplication by $\Theta$ and applying the first
part of the Lemma gives the needed statement. 
\begin{align*}
[\Id_{r}\otimes\pi(g)-\rho(g)^{T}\otimes\Id_{p}]\Theta\Theta^{*} & =\frac{1}{|G|}[\Id_{r}\otimes\pi(g)-\rho(g)^{T}\otimes\Id_{p}]\sum_{g'\in G}\cc{\rho(g')}\otimes\pi(g')\\
 & =\frac{1}{|G|}\sum_{g'\in G}\cc{\rho(g')}\otimes\pi(gg')-\frac{1}{|G|}\sum_{g'\in G}\cc{\rho(g^{-1})}\cc{\rho(g')}\otimes\pi(g')\\
 & =\frac{1}{|G|}\sum_{\tilde{g}\in G}\cc{\rho(g^{-1}\tilde{g})}\otimes\pi(\tilde{g})-\frac{1}{|G|}\sum_{g'\in G}\cc{\rho(g^{-1}g')}\otimes\pi(g')=0,
\end{align*}
where in the first line we used the second part of the Lemma and in
the second line we use the unitarity of the representation, $\rho(g)^{T}=\overline{\rho(g^{-1})}$.
\end{enumerate}
\end{proof}
\begin{rem*}
One could also prove some parts of the Lemma above by brute force
computations using the expressions (\ref{eq: Theta_i expressed with Phi_i}),(\ref{eq: Theta expressed at many Theta_i}).
\end{rem*}
%%%%%%%%%%%%%%%%%%%%%%%%%%%%%%%%%%%%%%%%%%%%%%%%%%%%%%%%

\subsection{Proof of Theorem \ref{thm:commutative_diagram-QG}.}

\label{app:qg_quotient_proof}
\begin{lem}
\label{lem:valid_sa} If $A$ and $B$ define a self-adjoint quantum
graph then $A_{\rho}$ and $B_{\rho}$ satisfy
\begin{enumerate}
\item \label{Item: Full rank} $\Rank(A_{\rho}|B_{\rho})=2d$
\item \label{Item: Self-adjoint} $A_{\rho}B_{\rho}^{*}$ is self-adjoint.
\end{enumerate}
\end{lem}

\begin{proof}
We begin by computing 
\begin{align}
\Rank(A_{\rho}|B_{\rho}) & =\Rank\left(\hat{\Theta}^{*}[\Id_{r}\otimes A]\hat{\Theta}\Big|\hat{\Theta}^{*}[\Id_{r}\otimes B]\hat{\Theta}\right)\label{eq:rank_eval1}\\
 & =\Rank\left[\left(\hat{\Theta}^{*}[\Id_{r}\otimes A]\hat{\Theta}\Big|\hat{\Theta}^{*}[\Id_{r}\otimes B]\hat{\Theta}\right)(\Id_{2}\otimes\hat{\Theta}^{*})\right],\nonumber 
\end{align}
where we used Lemma \ref{lem: properties of Theta},(\ref{enu: Theta property - quotient A B})
and that the matrix 
\begin{equation}
\Id_{2}\otimes\hat{\Theta}^{*}=\begin{pmatrix}\hat{\Theta}^{*} & 0\\
0 & \hat{\Theta}^{*}
\end{pmatrix}\label{eq:twice_Theta_star}
\end{equation}
is of full rank (it has $4d$ independent rows). We continue, applying
parts (\ref{enu: Theta property - orthogonality}) and (\ref{enu: Theta property - commuting with A B})
of Lemma \ref{lem: properties of Theta}, 
\begin{align}
\Rank(A_{\rho}|B_{\rho}) & =\Rank\left(\hat{\Theta}^{*}[\Id_{r}\otimes A]\hat{\Theta}\hat{\Theta}^{*}\Big|\hat{\Theta}^{*}[\Id_{r}\otimes B]\hat{\Theta}\hat{\Theta}^{*}\right)\label{eq:rank_eval2}\\
 & =\Rank\left(\hat{\Theta}^{*}\hat{\Theta}\hat{\Theta}^{*}[\Id_{r}\otimes A]\Big|\hat{\Theta}^{*}\hat{\Theta}\hat{\Theta}^{*}[\Id_{r}\otimes B]\right)\nonumber \\
 & =\Rank\left[\hat{\Theta}^{*}\left(\Id_{r}\otimes A\big|\Id_{r}\otimes B\right)\right]\nonumber \\
 & =\Rank(\hat{\Theta}^{*})=2d,\nonumber 
\end{align}
with the last transition possible because the matrix $\left(\Id_{r}\otimes A\big|\Id_{r}\otimes B\right)$
is of full rank. To show the second part of the lemma, we use \ref{lem: properties of Theta},(\ref{enu: Theta property - quotient A B})
and write 
\begin{align*}
A_{\rho}B_{\rho}^{*} & =\hat{\Theta}^{*}[\Id_{r}\otimes A]\hat{\Theta}\hat{\Theta}^{*}[\Id_{r}\otimes B^{*}]\hat{\Theta}\\
 & =\hat{\Theta}^{*}\hat{\Theta}\hat{\Theta}^{*}[\Id_{r}\otimes A][\Id_{r}\otimes B^{*}]\hat{\Theta}\\
 & =\hat{\Theta}^{*}[\Id_{r}\otimes AB^{*}]\hat{\Theta}\\
 & =\hat{\Theta}^{*}[\Id_{r}\otimes BA^{*}]\hat{\Theta}=B_{\rho}A_{\rho}^{*},
\end{align*}
where we used parts (\ref{enu: Theta property - orthogonality}) and
(\ref{enu: Theta property - commuting with A B}) of Lemma \ref{lem: properties of Theta}
and that $BA^{*}$ is self-adjoint by our assumption on the original
operator $\Op$ (see Section \ref{sec:qg_review}).
\end{proof}
Lemma \ref{lem:valid_sa} implies that the operator $\Op_{\rho}$
is self-adjoint (see Section \ref{sec:qg_review} for self-adjointness
conditions). What remains to show to establish Theorem~\ref{thm:commutative_diagram-QG}
is that the following diagram commutes: 
\begin{equation}
\begin{CD}\Dom(\Op_{\rho})@>\cong>>\Hom(\Vr,\Dom(\Op))\\
@V{\Op_{\rho}}VV@VV{\Op}V\\
L_{2}(\Gamma_{\rho})@>\cong>>\Hom(\Vr,L_{2}(\Gamma))
\end{CD}\label{eq:commutative_diagram-QG}
\end{equation}
where the isomorphisms in the diagram are Hilbert space isomorphisms
(i.e. they preserve the inner product).
\begin{proof}[Proof of Theorem \ref{thm:commutative_diagram-QG}]

We will show that the Hilbert-space isomorphism $\Dom(\Op_{\rho})\cong\Hom(\Vr,\Dom(\Op))$
is given by the map\footnote{Here $\vec{^{-1}}$ is understood as the inverse of $\vec{~}{:}~M_{|\E|\times r}(\C)\to\C^{|\E|r}$
as in Definition \ref{def: Vectorization}.} $\vec{^{-1}}\circ\Theta$.

We need to explain what is meant by applying the (numerical) matrix
$\Theta\in M_{\left|\E\right|r\times d}(\C)$ to $f\in\Dom(\Op_{\rho})$,
since the latter is just a list of functions living, a priori, in
different spaces. The informal explanation is that edges belonging
to the same orbit have the same lengths and the functions defined
on them can therefore be formed into linear combinations. More formally,
assume we have two direct sum spaces $\oplus_{j=1}^{n}W_{j}$ and
$\oplus_{k=1}^{m}U_{k}$ and assume we have a set of isomorphisms
among the individual spaces $\{W_{j},U_{k}\}$. The set of isomorphisms
is assumed to be closed by transitivity. For a matrix $Q\in M_{n\times m}$
the action between $\oplus_{j=1}^{n}W_{j}$ and $\oplus_{k=1}^{m}U_{k}$
is well-defined only if $W_{j}\cong U_{k}$ for all $Q_{jk}\neq0$.
Then we can say 
\begin{equation}
Q\begin{pmatrix}w_{1}\\
\vdots\\
w_{n}
\end{pmatrix}=\begin{pmatrix}u_{1}\\
\vdots\\
u_{m}
\end{pmatrix},\label{eq:action_T_def}
\end{equation}
if $u_{k}$ are defined by applying the rules of matrix multiplication,
\begin{equation}
u_{k}=\sum_{j=1}^{n}\ii\left(Q_{k,j}w_{j}\right)=\sum_{j=1}^{n}Q_{k,j}\ii\left(v_{j}\right)\label{eq:matrix_mult_def}
\end{equation}
where $\ii:W_{j}\to U_{K}$ is some appropriate linear isomorphism.
For example, if $W_{j}\cong H^{2}([0,l])\cong U_{k}$ correspond to
an edge in the full graph and quotient graph respectively then we
have simply $w=\ii(v)=v$. The sum is taken over all $j$ such that
$Q_{k,j}\neq0$ (alternatively, we can allow $\ii$ to map 0 to 0
for any two spaces). It is easy to see that this action is associative,
i.e. if we have the action of $Q$ from $\oplus_{j=1}^{n}W_{j}$ to
$\oplus_{k=1}^{m}U_{k}$ and the action of $P$ from $\oplus_{k=1}^{m}U_{k}$
to $\oplus_{i=1}^{l}V_{l}$, then $P(Qw)=(PQ)w$.\\
 \sloppy We may now complete the argument and show that $\Theta f$
is well defined, for $f=(f_{e_{1,1}},\ldots,f_{e_{1,d_{1}}},f_{e_{2,1}},\ldots f_{e_{2,d_{2}}},\ldots)^{T}\in\Dom(\Op_{\rho})$.
From the construction of $\Theta$ using (\ref{eq:definition_of_abstract_thetas - to use for Theta matrix})
we can check that the action of $\Theta$ on $f$ is well defined,
i.e. each row of $\Theta f$ is a linear combination of functions
defined on quotient edges of the same length (recall that $l_{e_{i,k}}=l_{e_{i}}$).

We proceed to show that $\phi:=\vec{^{-1}}(\Theta f)\in\Hom(\Vr,\Dom(\Op))$,
i.e. each $\phi^{(j)}\in\Dom(\Op)$ and $\phi$ satisfies (\ref{eq:intertwining_cond_qg}).
To show that each $\phi^{(j)}$ satisfies the vertex conditions of
$\Gamma$ let us define 
\begin{equation}
A\gamma_{D}(\phi)+B\gamma_{N}(\phi):=\left(A\gamma_{D}(\phi^{(1)})+B\gamma_{N}(\phi^{(1)}),\ldots,A\gamma_{D}(\phi^{(r)})+B\gamma_{N}(\phi^{(r)})\right).\label{eq:tensor_VC_unvec}
\end{equation}
Next, we apply $\vec{\ensuremath{}}$ to the left-hand side (see Lemma
\ref{Lem: Vectorization}). Using that $\Theta f=\vec{(}\phi)$ and
denoting $\hat{\Theta}:=\Theta\otimes\Id_{2}$ (similarly to the $\hat{\pi}$
notation), we write 
\begin{align}
\vec{\Big(}A\gamma_{D}(\phi)+B\gamma_{N}(\phi)\Big) & =[\Id_{r}\otimes A]\gamma_{D}(\vec{(}\phi))+[\Id_{r}\otimes B]\gamma_{N}(\vec{(}\phi))\label{eq:tensor_VC}\\
 & =[\Id_{r}\otimes A]\gamma_{D}(\Theta f)+[\Id_{r}\otimes B]\gamma_{N}(\Theta f)\nonumber \\
 & =[\Id_{r}\otimes A]\hat{\Theta}\gamma_{D}(f)+[\Id_{r}\otimes B]\hat{\Theta}\gamma_{N}(f)\nonumber \\
 & =[\Id_{r}\otimes A]\hat{\Theta}\hat{\Theta}^{*}\hat{\Theta}\gamma_{D}(f)+[\Id_{r}\otimes B]\hat{\Theta}\hat{\Theta}^{*}\hat{\Theta}\gamma_{N}(f)\nonumber \\
 & =\hat{\Theta}\left(\hat{\Theta}^{*}[\Id_{r}\otimes A]\hat{\Theta}\gamma_{D}(f)+\hat{\Theta}^{*}[\Id_{r}\otimes B]\hat{\Theta}\gamma_{N}(f)\right)\nonumber \\
 & =\hat{\Theta}\left(A_{\rho}\gamma_{D}(f)+B_{\rho}\gamma_{N}(f)\right),\nonumber 
\end{align}
where in the third line we have used that $\gamma_{D}(\Theta f)=(\Theta\otimes\Id_{2})\gamma_{D}(f)$
and similarly for $\gamma_{N}$ (see also (\ref{eq:trace_transformation})).
In the last three lines we used properties of $\hat{\Theta}$ which
are stated in Lemma \ref{lem: properties of Theta}. In particular,
we use that $\hat{\Theta}^{*}\hat{\Theta}=\Id_{2d}$, that $\hat{\Theta}\hat{\Theta}^{*}$
commutes with $\Id_{r}\otimes A$ and $\Id_{r}\otimes B$ and that
$A_{\rho}=\hat{\Theta}^{*}[\Id_{r}\otimes A]\hat{\Theta}$ and $B_{\rho}=\hat{\Theta}^{*}[\Id_{r}\otimes B]\hat{\Theta}$.

Since $\vec{\ensuremath{}}$ is invertible and $\hat{\Theta}$ is
left-invertible, $A\gamma_{D}(\phi)+B\gamma_{N}(\phi)=0$ if and only
if $A_{\rho}\gamma_{D}(f)+B_{\rho}\gamma_{N}(f)=0$. In other words,
\begin{equation}
\forall j=1,\ldots,r,\quad\phi^{(j)}\in\Dom(\Op)\qquad\Leftrightarrow\qquad f\in\Dom(\Op_{\rho}),\label{eq:two_domains}
\end{equation}
which shows that $\vec{^{-1}}(\Theta f)\in\mathrm{Hom}(\Vr,\Dom(\Op))$.
Next we show that the image of $\vec{^{-1}}\circ\Theta$ is actually
in $\Hom(\Vr,\Dom(\Op))$.

To verify $\phi:=\vec{^{-1}}(\Theta f)$ is an intertwiner (see (\ref{eq:intertwining_cond_qg}))
we calculate 
\begin{align}
\vec{\left(\pi(g)\phi-\phi\rho(g)\right)} & =[\Id_{r}\otimes\pi(g)-\rho(g)^{T}\otimes\Id_{p}]\vec{(}\phi)\label{eq:intertwiner_qg}\\
 & =[\Id_{r}\otimes\pi(g)-\rho(g)^{T}\otimes\Id_{p}]\Theta f\nonumber \\
 & =\left([\Id_{r}\otimes\pi(g)-\rho(g)^{T}\otimes\Id_{p}]\Theta\right)f=0,\nonumber 
\end{align}
where we used associativity of the matrix action on $f$ and Lemma
\ref{lem: properties of Theta},(\ref{enu: Theta property - intertwining}).
We have established that $\vec{^{-1}}\circ\Theta$ maps $\Dom(\Op_{\rho})$
to a subset of $\Hom(\Vr,\Dom(\Op))$. This mapping is obviously linear.
We still need to show it is one-to-one and onto. We do it by furnishing
its inverse.

Let us introduce the mapping $\Theta^{*}\circ\vec{\ensuremath{}}$
acting on $\Hom(\Vr,\Dom(\Op))$. The multiplication by $\Theta^{*}$
can be shown to be well defined in the sense of (\ref{eq:matrix_mult_def}),
i.e. it only takes the linear combinations of the functions defined
on the edges from the same orbit. By (\ref{eq:two_domains}), the
codomain of the mapping is $\Dom(\Op_{\rho})$. Using Lemma \ref{lem: properties of Theta},(\ref{enu: Theta property - orthogonality})
we get that $\Theta^{*}\circ\vec{\ensuremath{}}$ is a left inverse
of $\vec{^{-1}}\circ\Theta$. We now need to show that the product
$\vec{^{-1}}\circ\Theta\Theta^{*}\circ\vec{\ensuremath{}}$ is identity
on $\Hom(\Vr,\Dom(\Op))$. Taking $\phi\in\Hom(\Vr,\Dom(\Op))$ ,
we write 
\begin{equation}
\vec{^{-1}}\Big(\Theta\Theta^{*}\vec{(}\phi)\Big)=\frac{1}{|G|}\sum_{g\in G}\vec{^{-1}}\Big(\left(\cc{\rho}(g)\otimes\pi(g)\right)\vec{(}\phi)\Big),\label{eq:expand_P}
\end{equation}
where we used Lemma \ref{lem: properties of Theta},(\ref{enu: Theta property - projection})
and the linearity of $\vec{^{-1}}$. On the other hand, using the
properties of $\vec{\ensuremath{}}$ from Lemma \ref{Lem: Vectorization}
, we get 
\begin{align}
\vec{^{-1}}\Big(\left(\cc{\rho}(g)\otimes\pi(g)\right)\vec{(}\phi)\Big) & =\vec{^{-1}}\Big((\Id_{r}\otimes\pi(g))(\cc{\rho}(g)\otimes\Id_{p})\vec{(}\phi)\Big)\label{eq:alt_proof_Pinvariant}\\
 & =\pi(g)\vec{^{-1}}\Big((\cc{\rho}(g)\otimes\Id_{p})\vec{(}\phi)\Big)\nonumber \\
 & =\pi(g)\phi\rho^{-1}(g)=\phi,\nonumber 
\end{align}
since $\phi\in\Hom(\Vr,\Dom(\Op))$ satisfies the intertwining condition
\eqref{eq:intertwining_cond_qg}. Substituting into \eqref{eq:expand_P}
we obtain $\vec{^{-1}}\big(\P\vec{(}\phi)\big)=\phi$.

To summarize, we have established that $\vec{^{-1}}\circ\Theta$ gives
the isomorphism $\Dom(\Op_{\rho})\cong\Hom(\Vr,\Dom(\Op))$. Additionally,
it is a Hilbert space isomorphism since for $\phi=\vec{^{-1}}(\Theta f)$
and $\psi=\vec{^{-1}}(\Theta g)$ from $\Hom(\Vr,\Dom(\Op))$ we have
\begin{multline*}
\langle\phi,\psi\rangle_{\Hom(\Vr,\Dom(\Op))}=\Tr\left(\phi^{*}\psi\right)=\vec{(}\phi)^{*}\vec{(}\psi)\\
=(\Theta f)^{*}(\Theta g)=f^{*}\Theta^{*}\Theta g=f^{*}g=\langle f,g\rangle:=\sum_{e\in\E}\langle f_{e},g_{e}\rangle_{L^{2}([0,l_{e}])},
\end{multline*}
where during linear algebra operations in the middle we have suppressed
the scalar product meaning of multiplying the entries of $f,g$ and
$\phi,\psi$. An identical proof (without checking the vertex conditions)
works to establish the isomorphism between $L_{2}(\Gamma_{\rho})$
and $\Hom(\Vr,L_{2}(\Gamma))$ (the bottom arrow in the diagram (\ref{eq:commutative_diagram-QG})).\\
 Finally, we deal with the vertical arrows of the diagram (\ref{eq:commutative_diagram-QG}).
We remember that $\Op_{\rho}$ and $\Op$ act as a linear differential
expression separately applied to each entry. They commute with taking
(the allowed) linear combinations imposed by $\Theta$, giving us
$\Theta\Op_{\rho}f=[I_{r}\otimes\Op]\Theta f$ for all $f\in\Dom(\Op_{\rho})$.
Applying $\vec{^{-1}}$ to both sides and using Lemma \ref{Lem: Vectorization}
gives 
\begin{equation}
\vec{^{-1}}(\Theta\Op_{\rho}f)=\Op\,\vec{^{-1}}(\Theta f),
\end{equation}
which shows the commutativity of the diagram.
\end{proof}
%%%%%%%%%%%%%%%%%%%%%%%%%%%%%%%%%%%%%%%%%%%%%%%%%%%%%%%%

\subsection{Proof of scattering quotient}

\label{app:qg_quotient_scat}
\begin{proof}[Proof of Proposition~\ref{prop:scat_mat_fact}]
First, the matrix $S(k)$ is $G$-symmetric, as a consequence of
(\ref{eq:Vertex_Scattering_Matrix}). To see this, note that by assumption,
$A$ and $B$ are $G$-symmetric and also that an invertible matrix
is $G$-symmetric if and only if its inverse is also $G$-symmetric.
By (\ref{eq:Vertex_Scattering_Matrix}) the scattering matrix of $\Gamma_{\rho}$
is given by 
\[
-(A_{\rho}+ikB_{\rho})^{-1}(A_{\rho}-ikB_{\rho})\left(\Id_{d}\otimes\left(\begin{smallmatrix}0 & 1\\
1 & 0
\end{smallmatrix}\right)\right).
\]
Plugging the expressions for $A_{\rho},B_{\rho}$ from Lemma \ref{lem: properties of Theta},(\ref{enu: Theta property - quotient A B}),
and rearranging, we get

\begin{align*}
%1
%{smallmatrix}
%[\Theta^{*}(\Id_{r}\otimes{A})\hat{\Theta}+ik\Theta^{*}(\Id_{r}\otimes{B})\hat{\Theta}]^{-1}[\hat{\Theta}^{*}(\Id_{r}\otimes{A})\hat{\Theta}-ik\hat{\Theta}^{*}(\Id_{r}\otimes{B})\hat{\Theta}]
- & \left[\hat{\Theta}^{*}\left(\Id_{r}\otimes({A}+ik{B})\right)\hat{\Theta}\right]^{-1}\left[\hat{\Theta}^{*}\left(\Id_{r}\otimes({A}-ik{B})\right)\hat{\Theta}\right]\left(\Id_{d}\otimes\left(\begin{smallmatrix}0 & 1\\
1 & 0
\end{smallmatrix}\right)\right)\\
=- & \left[\hat{\Theta}^{*}\left(\Id_{r}\otimes({A}+ik{B})\right)^{-1}\hat{\Theta}\right]\left[\hat{\Theta}^{*}\left(\Id_{r}\otimes({A}-ik{B})\right)\hat{\Theta}\right]\left(\Id_{d}\otimes\left(\begin{smallmatrix}0 & 1\\
1 & 0
\end{smallmatrix}\right)\right)\\
=- & \hat{\Theta}^{*}\left(\Id_{r}\otimes({A}+ik{B})^{-1}\right)\left(\Id_{r}\otimes({A}-ik{B})\right)\hat{\Theta}\left(\Id_{d}\otimes\left(\begin{smallmatrix}0 & 1\\
1 & 0
\end{smallmatrix}\right)\right)\\
=- & \hat{\Theta}^{*}\left(\Id_{r}\otimes({A}+ik{B})^{-1}({A}-ik{B})\right)\hat{\Theta}\left(\Id_{d}\otimes\left(\begin{smallmatrix}0 & 1\\
1 & 0
\end{smallmatrix}\right)\right)\\
=- & \hat{\Theta}^{*}\left[\Id_{r}\otimes({A}+ik{B})^{-1}({A}-ik{B})\left(\Id_{|\E|}\otimes\left(\begin{smallmatrix}0 & 1\\
1 & 0
\end{smallmatrix}\right)\right)\right]\hat{\Theta}\\
=\phantom{-} & \hat{\Theta}^{*}(\Id_{r}\otimes S(k))\hat{\Theta}=S_{\rho}(k),
\end{align*}

where we used that $\hat{\Theta}\hat{\Theta}^{*}$ commutes with $\Id_{r}\otimes{A}$
and with $\Id_{r}\otimes{B}$ and that $\hat{\Theta}^{*}\hat{\Theta}=\Id_{2d}$
(see Lemma \ref{lem: properties of Theta}). The fifth line of the
calculation is obtained from 
\[
\hat{\Theta}=\Theta\otimes\Id_{2},
\]
\[
\hat{\Theta}\left(\Id_{d}\otimes\left(\begin{smallmatrix}0 & 1\\
1 & 0
\end{smallmatrix}\right)\right)=\Theta\otimes\left(\begin{smallmatrix}0 & 1\\
1 & 0
\end{smallmatrix}\right)=\left(\Id_{r|\E|}\otimes\left(\begin{smallmatrix}0 & 1\\
1 & 0
\end{smallmatrix}\right)\right)\hat{\Theta}.
\]

The last equality $\hat{\Theta}^{*}(\Id_{r}\otimes S(k))\hat{\Theta}=S_{\rho}(k)$
follows from the same argument as in the proof of Lemma \ref{lem: properties of Theta},(\ref{enu: Theta property - quotient A B}).

Next, we turn to prove a similar statement for the unitary evolution
operator. Conditions in Definition \ref{def:QG_pi_symmetric} guarantee
that the edge length matrix of $\Gamma$, $\hat{L}=L\otimes\Id_{2}$
is $G$-symmetric. This, together with the $G$-symmetry of $S(k)$
yields the $G$-symmetry of $U(k)$.

We now show that the unitary evolution matrix of the quotient graph
is the quotient of $U(k)$. We form a diagonal $d\times d$ matrix,
$L_{\rho}:=\diag(\{l_{e_{i,j}}\}_{i\in\D,j=1,\ldots,d_{i}})$, which
stores the edge lengths of the quotient graph $\Gamma_{\rho}$. This
matrix is related to the edge length matrix of the original graph,
$\Gamma$ by 
\begin{equation}
L_{\rho}=\Theta^{*}[\Id_{r}\otimes L]\Theta,\label{eq:length_matrix_relation}
\end{equation}
which is a consequence of the locality and orthonormality of $\Theta$'s
columns, as provided by Lemma \ref{lem: properties of Theta}. Explicitly,
a $\Theta$ column which corresponds to an edge $e_{i,j}$ may have
non-vanishing entries only corresponding to edges which are in the
orbit of $e_{i}$ and all such edges are of the same length $l_{e_{i}}$.
We may now write the unitary evolution operator of $\Gamma_{\rho}$
as 
\begin{align*}
 & \phantom{=}\ue^{\ui k\widehat{L_{\rho}}}S_{\rho}(k)\\
 & =\ue^{\ui k\widehat{L_{\rho}}}\hat{\Theta}^{*}(\Id_{r}\otimes S(k))\hat{\Theta}\\
 & =\hat{\Theta}^{*}[\Id_{r}\otimes\ue^{\ui k\widehat{L}}]\hat{\Theta}\hat{\Theta}^{*}(\Id_{r}\otimes S(k))\hat{\Theta}\\
 & =\hat{\Theta}^{*}(\Id_{r}\otimes\ue^{\ui k\widehat{L}}S(k))\hat{\Theta}\\
 & =\hat{\Theta}^{*}(\Id_{r}\otimes U(k))\hat{\Theta}=U_{\rho}(k),
\end{align*}
where the third line is obtained from (\ref{eq:length_matrix_relation})
and using that $\hat{\Theta}\hat{\Theta}^{*}$ commutes with $\Id_{r}\otimes\hat{L}$.
This commutation together with $\hat{\Theta}^{*}\hat{\Theta}=\Id_{2d}$
gives the fourth line. The final equality, $\hat{\Theta}^{*}(\Id_{r}\otimes U(k))\hat{\Theta}=U_{\rho}(k)$,
follows from the same argument as in the proof of Lemma \ref{lem: properties of Theta},(\ref{enu: Theta property - quotient A B}).
\end{proof}
%
%%%%%%%%%%%%%%%%%%%%%%%%%%%%%%%%%%%%%%%%%%%%%%%%%%%%%%
%%%%%%%%%%%%%%%%%%%%%%%%%%%%%%%%%%%%%%%%%%%%%%%%%%%%%%

%%%%%%%%%%%%%%%%%%%%%%%%%%%%%%%%%%%%%%%%%%%%%%%%%%%%%%%%%%%%%%%%%%%%%%%%%%%%%%%%%%%%

\bibliographystyle{myalpha}
\bibliography{ref}

\newcommand{\etalchar}[1]{$^{#1}$}
\providecommand{\bysame}{\leavevmode\hbox to3em{\hrulefill}\thinspace}
\providecommand{\MR}{\relax\ifhmode\unskip\space\fi MR }
% \MRhref is called by the amsart/book/proc definition of \MR.
\providecommand{\MRhref}[2]{%
  \href{http://www.ams.org/mathscinet-getitem?mr=#1}{#2}
}
\providecommand{\href}[2]{#2}
\begin{thebibliography}{DGMdO{\etalchar{+}}19}

\bibitem[BBS12]{BanBerSmi_ahp12}
R.~Band, G.~Berkolaiko, and U.~Smilansky, \emph{Dynamics of nodal points and
  the nodal count on a family of quantum graphs}, Annales Henri Poincare
  \textbf{13} (2012), no.~1, 145--184.

\bibitem[BC18]{BerCom_jst18}
G.~Berkolaiko and A.~Comech, \emph{Symmetry and {D}irac points in graphene
  spectrum}, J. Spectr. Theory \textbf{8} (2018), no.~3, 1099--1147, preprint
  {\tt arXiv:1412.8096}.

\bibitem[BE22]{BerEtt_sam22}
G.~Berkolaiko and M.~Ettehad, \emph{Three-dimensional elastic beam frames:
  rigid joint conditions in variational and differential formulation}, Stud.
  Appl. Math. \textbf{148} (2022), no.~4, 1586--1623.

\bibitem[Ber17]{Ber_crm17}
G.~Berkolaiko, \emph{An elementary introduction to quantum graphs}, Geometric
  and computational spectral theory, Contemp. Math., vol. 700, Amer. Math.
  Soc., Providence, RI, 2017, pp.~41--72. \MR{3748521}

\bibitem[BFW17]{BarFraWeb_laa17}
W.~Barrett, A.~Francis, and B.~Webb, \emph{Equitable decompositions of graphs
  with symmetries}, Linear Algebra and its Applications \textbf{513} (2017),
  409--434.

\bibitem[BG18]{MR3880377}
R.~Band and S.~Gnutzmann, \emph{Quantum graphs via exercises}, Spectral theory
  and applications, Contemp. Math., vol. 720, Amer. Math. Soc., [Providence],
  RI, [2018] \copyright 2018, preprint {\tt arXiv:1711.07435}, pp.~187--203.

\bibitem[BK13a]{BerKuc_graphs}
G.~Berkolaiko and P.~Kuchment, \emph{Introduction to quantum graphs},
  Mathematical Surveys and Monographs, vol. 186, AMS, 2013.

\bibitem[BK13b]{BreKel_opem13}
J.~Breuer and M.~Keller, \emph{Spectral analysis of certain spherically
  homogeneous graphs}, Oper. Matrices \textbf{7} (2013), no.~4, 825--847.
  \MR{3154573}

\bibitem[BK22]{BreKel_opem13-err}
J.~Breuer and M.~Keller, \emph{Erratum to {\it {s}pectral analysis of certain
  spherically homogeneous graphs}, {O}perators and {M}atrices, 7 (2013),
  825--847}, Oper. Matrices \textbf{16} (2022), no.~4, 1041--1044. \MR{4543377}

\bibitem[BL18]{BerLiu_lmp18}
G.~Berkolaiko and W.~Liu, \emph{Eigenspaces of symmetric graphs are not
  typically irreducible}, Lett. Math. Phys. \textbf{Online First} (2018),
  10.1007/s11005--018--1050--7, preprint {\tt arXiv:1705.01653}.

\bibitem[BL20]{BreLev_ahp20}
J.~Breuer and N.~Levi, \emph{On the decomposition of the {L}aplacian on metric
  graphs}, Ann. Henri Poincar\'{e} \textbf{21} (2020), no.~2, 499--537.
  \MR{4056276}

\bibitem[BPBS09]{Ban09}
R.~Band, O.~Parzanchevski, and G.~Ben-Shach, \emph{The isospectral fruits of
  representation theory: quantum graphs and drums}, J. Phys. A: Math. Theor.
  \textbf{42} (2009), 175202.

\bibitem[Bro99a]{Brooks_aif99}
R.~Brooks, \emph{Non-{S}unada graphs}, Annales de l'institut Fourier
  \textbf{49} (1999), no.~2, 707--725 (eng).

\bibitem[Bro99b]{Brooks_cm99}
R.~Brooks, \emph{The {S}unada method}, Tel {A}viv {T}opology {C}onference:
  {R}othenberg {F}estschrift (1998), Contemp. Math., vol. 231, Amer. Math.
  Soc., Providence, RI, 1999, pp.~25--35. \MR{1705572}

\bibitem[BSS06]{BanShaSmi_jpa06}
R.~Band, T.~Shapira, and U.~Smilansky, \emph{Nodal domains on isospectral
  quantum graphs: the resolution of isospectrality?}, J. Phys. A \textbf{39}
  (2006), no.~45, 13999--14014. \MR{2277370 (2008i:58032)}

\bibitem[BSS10]{BanSawSmi_jpa10}
R.~Band, A.~Sawicki, and U.~Smilansky, \emph{Scattering from isospectral
  quantum graphs}, J. Phys. A \textbf{43} (2010), no.~41, 415201, 17.
  \MR{2726689 (2012a:81106)}

\bibitem[Bum13]{Bump_LieGroups_book}
D.~Bump, \emph{Lie groups}, second ed., Graduate Texts in Mathematics, vol.
  225, Springer, New York, 2013. \MR{3136522}

\bibitem[CDS95]{CveDooSac_spectra_of_graphs_book}
D.~s.~M. Cvetkovi\'c, M.~Doob, and H.~Sachs, \emph{Spectra of graphs}, third
  ed., Johann Ambrosius Barth, Heidelberg, 1995, Theory and applications.
  \MR{1324340}

\bibitem[Chu97]{Chung_spectralgraph}
F.~R.~K. Chung, \emph{Spectral graph theory}, CBMS Regional Conference Series
  in Mathematics, vol.~92, Published for the Conference Board of the
  Mathematical Sciences, Washington, DC, 1997. \MR{MR1421568 (97k:58183)}

\bibitem[CP20]{ChePiv_ieot20}
A.~Chernyshenko and V.~Pivovarchik, \emph{Recovering the shape of a quantum
  graph}, Integral Equations Operator Theory \textbf{92} (2020), no.~3, Paper
  No. 23, 17. \MR{4109187}

\bibitem[CRS10]{CveRowSim_An_introduction_to_the_theory_of_graph_spectra_2010}
D.~M. Cvetkovi{\'c}, P.~Rowlinson, and S.~Simi{\'c}, \emph{An introduction to
  the theory of graph spectra}, vol.~75, Cambridge University Press Cambridge,
  2010.

\bibitem[CS92]{Chu92}
F.~R.~K. Chung and S.~Sternberg, \emph{Laplacian and vibrational spectra for
  homogeneous graphs}, Journal of Graph Theory \textbf{16} (1992), no.~6,
  605--627.

\bibitem[DF04]{DummitFoote_abstract_algebra}
D.~S. Dummit and R.~M. Foote, \emph{Abstract algebra}, third ed., John Wiley \&
  Sons, Inc., Hoboken, NJ, 2004. \MR{2286236}

\bibitem[DGMdO{\etalchar{+}}19]{DalGavMonOchStaSte_involve19}
K.~Daly, C.~Gavin, G.~Montes~de Oca, D.~Ochoa, E.~Stanhope, and S.~Stewart,
  \emph{Orbigraphs: a graph-theoretic analog to riemannian orbifolds}, Involve,
  a Journal of Mathematics \textbf{12} (2019), no.~5, 721--736.

\bibitem[EL19]{ExnLip_jmp19}
P.~Exner and J.~Lipovsk\'{y}, \emph{Spectral asymptotics of the {L}aplacian on
  {P}latonic solids graphs}, J. Math. Phys. \textbf{60} (2019), no.~12, 122101,
  21. \MR{4043812}

\bibitem[Exn21]{Exner_ppn21}
P.~Exner, \emph{Quantum graphs with vertices violating the time reversal
  symmetry}, Physics of Particles and Nuclei \textbf{52} (2021), 330--336.

\bibitem[FCLP23]{SteLlePos_arXiv22}
J.~S. Fabila-Carrasco, F.~Lled\'{o}, and O.~Post, \emph{A geometric
  construction of isospectral magnetic graphs}, Analysis and Mathematical
  Physics \textbf{31} (2023), no.~64.

\bibitem[FH13]{Fulton-2013}
W.~Fulton and J.~Harris, \emph{Representation theory: a first course}, vol.
  129, Springer Science \& Business Media, 2013.

\bibitem[FSSW17]{FraSmiSorWeb_laa17}
A.~Francis, D.~Smith, D.~Sorensen, and B.~Webb, \emph{Extensions and
  applications of equitable decompositions for graphs with symmetries}, Linear
  Algebra and its Applications \textbf{532} (2017), 432--462.

\bibitem[FSW19]{FraSmiWeb_laa19}
A.~Francis, D.~Smith, and B.~Webb, \emph{General equitable decompositions for
  graphs with symmetries}, Linear Algebra and its Applications \textbf{577}
  (2019), 287--316.

\bibitem[God93]{Godsil_algebraic_combinaotrics_1993}
C.~D. Godsil, \emph{Algebraic combinatorics}, Chapman and Hall Mathematics
  Series, Chapman \& Hall, New York, 1993. \MR{1220704}

\bibitem[GR01]{GodRoy_algebraic_graph_theory_book_2001}
C.~Godsil and G.~F. Royle, \emph{Algebraic graph theory}, vol. 207, Springer
  Science \& Business Media, 2001.

\bibitem[GS01]{GutSmi_jpa01}
B.~Gutkin and U.~Smilansky, \emph{Can one hear the shape of a graph?}, J. Phys.
  A \textbf{34} (2001), no.~31, 6061--6068. \MR{MR1862642 (2002k:05205)}

\bibitem[GS06]{GnuSmi_ap06}
S.~Gnutzmann and U.~Smilansky, \emph{Quantum graphs: Applications to quantum
  chaos and universal spectral statistics}, Adv. Phys. \textbf{55} (2006),
  no.~5--6, 527--625.

\bibitem[GWW92a]{Gor92b}
C.~Gordon, D.~L. Webb, and S.~Wolpert, \emph{Isospectral plane domains and
  surfaces via riemannian orbifolds}, Inventiones mathematicae \textbf{110}
  (1992), no.~1, 1--22.

\bibitem[GWW92b]{Gor92a}
C.~Gordon, D.~L. Webb, and S.~Wolpert, \emph{One cannot hear the shape of a
  drum}, Bull. Amer. Math. Soc. \textbf{27} (1992), 134--138.

\bibitem[Haa10]{Haake-2010}
F.~Haake, \emph{Quantum signatures of chaos}, enlarged ed., Springer Series in
  Synergetics, Springer-Verlag, Berlin, 2010, With a foreword by H. Haken.
  \MR{2604131}

\bibitem[HH99]{Hal99}
L.~Halbeisen and N.~Hungerb\"{u}hler, \emph{Generation of isospectral graphs},
  Journal of Graph Theory \textbf{31} (1999), no.~3, 255--265.

\bibitem[JL21]{JevLip_polonica21}
V.~Je\v{z}ek and J.~Lipovsk\'{y}, \emph{Application of quotient graph theory to
  three-edge star graphs}, Acta Physica Polonica A \textbf{140} (2021),
  514--524.

\bibitem[JMS12]{Joyner-2012}
C.~H. Joyner, S.~M\"{u}ller, and M.~Sieber, \emph{Semiclassical approach to
  discrete symmetries in quantum chaos}, Journal of Physics A: Mathematical and
  Theoretical \textbf{45} (2012), no.~20, 205102.

\bibitem[JMS14]{Joy14}
C.~H. Joyner, S.~M\"{u}ller, and M.~Sieber, \emph{{GSE} statistics without
  spin}, EPL \textbf{107} (2014), no.~5, 50004.

\bibitem[Kac66]{Kac66}
M.~Kac, \emph{Can one hear the shape of a drum?}, Am. Math. Mon. \textbf{73}
  (1966), 1--23.

\bibitem[KM21]{KurMul_arXiv21}
P.~Kurasov and J.~Muller, \emph{On isospectral metric graphs}, 2021.

\bibitem[KN21]{KosNic_jst21}
A.~Kostenko and N.~Nicolussi, \emph{Quantum graphs on radially symmetric
  antitrees}, J. Spectr. Theory \textbf{11} (2021), no.~2, 411--460.
  \MR{4293483}

\bibitem[KR97]{Keating-1997}
J.~P. Keating and J.~M. Robbins, \emph{Discrete symmetries and spectral
  statistics}, Journal of Physics A: Mathematical and General \textbf{30}
  (1997), no.~7, L177.

\bibitem[KS97]{KotSmi_prl97}
T.~Kottos and U.~Smilansky, \emph{Quantum chaos on graphs}, Phys. Rev. Lett.
  \textbf{79} (1997), no.~24, 4794--4797.

\bibitem[KS99]{Kostrykin-1999}
V.~Kostrykin and R.~Schrader, \emph{Kirchhoff's rule for quantum wires},
  Journal of Physics A: Mathematical and General \textbf{32} (1999), no.~4,
  595.

\bibitem[KS03]{KotSmi_jpa03}
T.~Kottos and U.~Smilansky, \emph{Quantum graphs: a simple model for chaotic
  scattering}, J. Phys. A \textbf{36} (2003), no.~12, 3501--3524, Random matrix
  theory. \MR{MR1986432 (2004g:81064)}

\bibitem[Kuc16]{Kuc_bams16}
P.~Kuchment, \emph{An overview of periodic elliptic operators}, Bull. Amer.
  Math. Soc. (N.S.) \textbf{53} (2016), no.~3, 343--414.

\bibitem[Lau91]{Lauritzen-1991}
B.~Lauritzen, \emph{Discrete symmetries and the periodic-orbit expansions},
  Phys. Rev. A \textbf{43} (1991), 603--606.

\bibitem[LSBS21]{LawSawBiaSir_Scirep21}
M.~Lawniczak, A.~Sawicki, M.~Bialous, and L.~Sirko, \emph{Isoscattering strings
  of concatenating graphs and networks}, Scientific Reports \textbf{11} (2021),
  1575.

\bibitem[MP23]{MugPiv_jpa23}
D.~Mugnolo and V.~Pivovarchik, \emph{Distinguishing cospectral quantum graphs
  by scattering}, J. Phys. A \textbf{56} (2023), no.~9, Paper No. 095201, 18.
  \MR{4555083}

\bibitem[Mug14]{Mugnolo_book}
D.~Mugnolo, \emph{Semigroup methods for evolution equations on networks},
  Understanding Complex Systems, Springer, Cham, 2014. \MR{3243602}

\bibitem[Mut21]{Mutlu_cpaa21}
G.~Mutlu, \emph{On the quotient quantum graph with respect to the regular
  representation}, Commun. Pure Appl. Anal. \textbf{20} (2021), no.~2,
  885--902. \MR{4214048}

\bibitem[NS01]{NaiSol_rjmp01}
K.~Naimark and A.~Solomyak, \emph{Geometry of sobolev spaces on regular trees
  and the hardy inequalities}, Russian Journal of Mathematical Physics
  \textbf{8} (2001), no.~3, 14 (eng).

\bibitem[PB10]{Par10}
O.~Parzanchevski and R.~Band, \emph{Linear representations and isospectrality
  with boundary conditions}, J. Geom. Anal. \textbf{20} (2010), 439Ð471.

\bibitem[Pes94]{Pesce_cm94}
H.~Pesce, \emph{Vari\'et\'es isospectrales et repr\'esentations de groupes},
  Geometry of the spectrum ({S}eattle, {WA}, 1993), Contemp. Math., vol. 173,
  Amer. Math. Soc., Providence, RI, 1994, pp.~231--240. \MR{1298208}

\bibitem[Pis23]{Pistol_arXiv23}
M.-E. Pistol, \emph{Generating isospectral but not isomorphic quantum graphs},
  2023.

\bibitem[Pos12]{Post_book12}
O.~Post, \emph{Spectral analysis on graph-like spaces}, Lecture Notes in
  Mathematics, vol. 2039, Springer Verlag, Berlin, 2012.

\bibitem[PS82]{PowSul_laa82}
D.~L. Powers and M.~M. Sulaiman, \emph{The walk partition and colorations of a
  graph}, Linear Algebra and its Applications \textbf{48} (1982), 145--159.

\bibitem[RAJ{\etalchar{+}}16]{Rehemanjiang-2016}
A.~Rehemanjiang, M.~Allgaier, C.~H. Joyner, S.~M\"{u}ller, M.~Sieber, U.~Kuhl,
  and H.-J. St\"{o}ckmann, \emph{Microwave realization of the gaussian
  symplectic ensemble}, Phys. Rev. Lett. \textbf{117} (2016), 064101.

\bibitem[Rob89]{Robbins-1989}
J.~M. Robbins, \emph{Discrete symmetries in periodic-orbit theory}, Phys. Rev.
  A \textbf{40} (1989), 2128--2136.

\bibitem[RS78]{ReedSimon_v4}
M.~Reed and B.~Simon, \emph{Methods of modern mathematical physics. {IV}.
  {A}nalysis of operators}, Academic Press, New York, 1978.

\bibitem[Sch27]{Schur-1927}
I.~Schur, \emph{\"{U}ber die rationalen {D}arstellungen der allgemeinen
  linearen {G}ruppe}, Sitzungsber. Akad. Wiss., Phys.-Math. Kl. (1927), 58--75.

\bibitem[Sch74]{Schwenk_proc74}
A.~J. Schwenk, \emph{Computing the characteristic polynomial of a graph},
  Graphs and Combinatorics (Berlin, Heidelberg) (R.~A. Bari and F.~Harary,
  eds.), Springer Berlin Heidelberg, 1974, pp.~153--172.

\bibitem[Ser77]{Serre_linear_representations}
J.-P. Serre, \emph{Linear representations of finite groups}, Graduate Texts in
  Mathematics, Vol. 42, Springer-Verlag, New York-Heidelberg, 1977, Translated
  from the second French edition by Leonard L. Scott. \MR{0450380}

\bibitem[Sol04]{Solomyak_wrm04}
M.~Solomyak, \emph{On the spectrum of the {L}aplacian on regular metric trees},
  Waves Random Media \textbf{14} (2004), no.~1, S155--S171. \MR{2046943}

\bibitem[Sun85]{Sun85}
T.~Sunada, \emph{Riemannian coverings and isospectral manifolds}, Annals of
  Mathematics \textbf{121} (1985), no.~1, 169--186.

\bibitem[vB85]{Bel_laa85}
J.~von Below, \emph{A characteristic equation associated to an eigenvalue
  problem on {$c^2$}-networks}, Linear Algebra Appl. \textbf{71} (1985),
  309--325.

\bibitem[Wey46]{Weyl-1946}
H.~Weyl, \emph{The classical groups, their invariants and representations},
  Princeton University Press, 1946.

\bibitem[Zel90]{Zel_aif90}
S.~Zelditch, \emph{On the generic spectrum of a {R}iemannian cover}, Ann. Inst.
  Fourier (Grenoble) \textbf{40} (1990), no.~2, 407--442.

\end{thebibliography}

\end{document}